\def\yy{{\boldsymbol y}}
\newtheorem{theorem}{Theorem}[section]
\newtheorem{proposition}[theorem]{Proposition}
\newtheorem{definition}[theorem]{Definition}
\newtheorem{lemma}[theorem]{Lemma}
\renewenvironment{proof}{\textbf{Proof.}}{\QED\bigskip}
\definecolor{ddarkbrown}{rgb}{0.5,0.2,0.05} \definecolor{bbluegray}{rgb}{0.05,0,0.5}
\newcommand{\BEAS}{\begin{eqnarray*}}
\newcommand{\EEAS}{\end{eqnarray*}}
\newcommand{\BEA}{\begin{eqnarray}}
\newcommand{\EEA}{\end{eqnarray}}
\newcommand{\BEQ}{\begin{equation}}
\newcommand{\EEQ}{\end{equation}}
\newcommand{\BIT}{\begin{itemize}}
\newcommand{\EIT}{\end{itemize}}
\newcommand{\BNUM}{\begin{enumerate}}
\newcommand{\ENUM}{\end{enumerate}}
\newcommand{\BA}{\begin{array}}
\newcommand{\EA}{\end{array}}
\newcommand{\eg}{{\it e.g.}}
\newcommand{\ie}{{\it i.e.}}
\newcommand{\etc}{{\it etc.}}
\newcommand{\ones}{\mathbf 1}
\newcommand{\reals}{{\mathbb R}}
\newcommand{\symm}{{\mbox{\bf S}}}  
\newcommand{\Tr}{\mathop{\bf Tr}}
\newcommand{\diag}{\mathop{\bf diag}}
\newcommand{\idm}{\mathbf{I}}
\newcommand{\QED}{~~\rule[-1pt]{6pt}{6pt}}
\newcommand{\argmax}{\mathop{\rm argmax}}
\renewcommand\Re{\operatorname{Re}}
\def\uu{{\boldsymbol u}}
\def\xx{{\boldsymbol x}}
\providecommand{\cP}{\mathcal{P}}
\providecommand{\PAP}{\Pi A \Pi^T}
\providecommand{\cR}{\mathcal{L}_{R}}
\providecommand{\circR}{\mathcal{C}_{R}}
\providecommand{\cH}{\mathcal{H}}
\providecommand{\Lrw}{L^{\text{rw}}}
\providecommand{\Lsym}{L^{\text{sym}}}
\providecommand{\kLE}[1]{#1-$\mathbf{LE}$}
\providecommand{\SCR}{\mathcal{C}_{R}^*}
\providecommand{\st}{\text{such that}}
\providecommand{\find}{\text{find}}
\title{
Reconstructing Latent Orderings\\ by Spectral Clustering
}
\author{ Antoine Recanati\thanks{equal contribution}~~$^{1,2,3,4}$,  Thomas Kerdreux \footnotemark[1]~~$^{1,2,3 }$  and  Alexandre d'Aspremont $^{1,2,3,4}$\\
~\\
$^1$Département d’informatique de l’\'Ecole normale supérieure\\
$^2$PSL Research University, 75005 Paris, France\\
$^3$INRIA ; $^4$CNRS
}
\begin{document}
\maketitle

\begin{abstract}
Spectral clustering uses a graph Laplacian spectral embedding to enhance the cluster structure of some data sets.
When the embedding is one dimensional, it can be used to sort the items (spectral ordering).
A number of empirical results also suggests that a multidimensional Laplacian embedding enhances the latent ordering of the data, if any. This also extends to circular orderings, a case where unidimensional embeddings fail. We tackle the task of retrieving linear and circular orderings in a unifying framework, and show how a latent ordering on the data translates into a filamentary structure on the Laplacian embedding.
We propose a method to recover it, illustrated with numerical experiments on
synthetic data and real DNA sequencing data.
The code and experiments are available at \url{https://github.com/antrec/mdso}.
\end{abstract}

\section{Introduction}

The seriation problem seeks to recover a latent ordering from similarity information. We typically observe a matrix measuring pairwise similarity between a set of $n$ elements and assume they have a serial structure, \ie~they can be ordered along a chain where the similarity between elements decreases with their distance within this chain. In practice, we observe a random permutation of this similarity matrix, where the elements are not indexed according to that latent ordering. Seriation then seeks to find that global latent ordering using only (local) pairwise similarity.

Seriation was introduced in archaeology to find the chronological order of a set of graves. Each contained artifacts, assumed to be specific to a given time period. The number of common artifacts between two graves define their similarity, resulting in a chronological ordering where two contiguous graves belong to a same time period.
It also has applications in, \eg, envelope reduction \citep{Barn95}, bioinformatics \citep{atkins1996physical,cheema2010thread,jones2012anges} and DNA sequencing \citep{Meid98,Garr11,recanati2016spectral}.

In some applications, the latent ordering is circular. For instance, in {\it de novo} genome assembly of bacteria, one has to reorder DNA fragments subsampled from a circular genome.

In biology, a cell evolves according to a cycle: a newborn cell passes through diverse states (growth, DNA-replication, \etc) before dividing itself into two newborn cells, hence closing the loop. Problems of interest then involve collecting cycle-dependent data on a population of cells at various, unknown stages of the cell-cycle, and trying to order the cells according to their cell-cycle stage.
Such data include gene-expression \citep{liu2017reconstructing}, or DNA 3D conformation data \citep{liu2018unsupervised}.
In planar tomographic reconstruction, the shape of an object is inferred from projections taken at unknown angles between 0 and $2\pi$. Reordering the angles then enables to perform the tomography \citep{coifman2008graph}.

The main structural hypothesis on similarity matrices related to seriation is the concept of $R$-matrix, which we introduce below, together with its circular counterpart.
\begin{definition}\label{def:R-mat}
	We say that $A\in\symm_n$ is a R-matrix (or Robinson matrix) iff it is symmetric and satisfies
		$A_{i,j} \leq  A_{i,j+1}$ and $A_{i+1,j} \leq A_{i,j}$ in the lower triangle, where $1\leq j < i \leq n$.
\end{definition}
\begin{definition}\label{def:circ-R-mat}
	We say that $A\in\symm_n$ is a circular R-matrix iff it is symmetric and satisfies, for all $i \in [n]$,
	$\left(A_{ij}\right)_{j=1}^{i}$ and $\left(A_{ij}\right)_{i=j}^{n}$ are unimodal : they are decrease to a minimum and then increase.
\end{definition}
Here $\symm_n$ is the set of real symmetric matrices of dimension~$n$.
Definition~\ref{def:R-mat} states that when moving away from the diagonal in a given row or column of $A$, the entries are non-increasing, whereas
in Def~\ref{def:circ-R-mat}, the non-increase is followed by a non-decrease. For instance, the proximity matrix of points embedded on a circle follows Def~\ref{def:circ-R-mat}.
Figure~\ref{fig:seriation} displays examples of such matrices.
\begin{figure}[hbt]
	\begin{center}
		\begin{subfigure}[htb]{0.3\textwidth}
			\includegraphics[width=\textwidth]{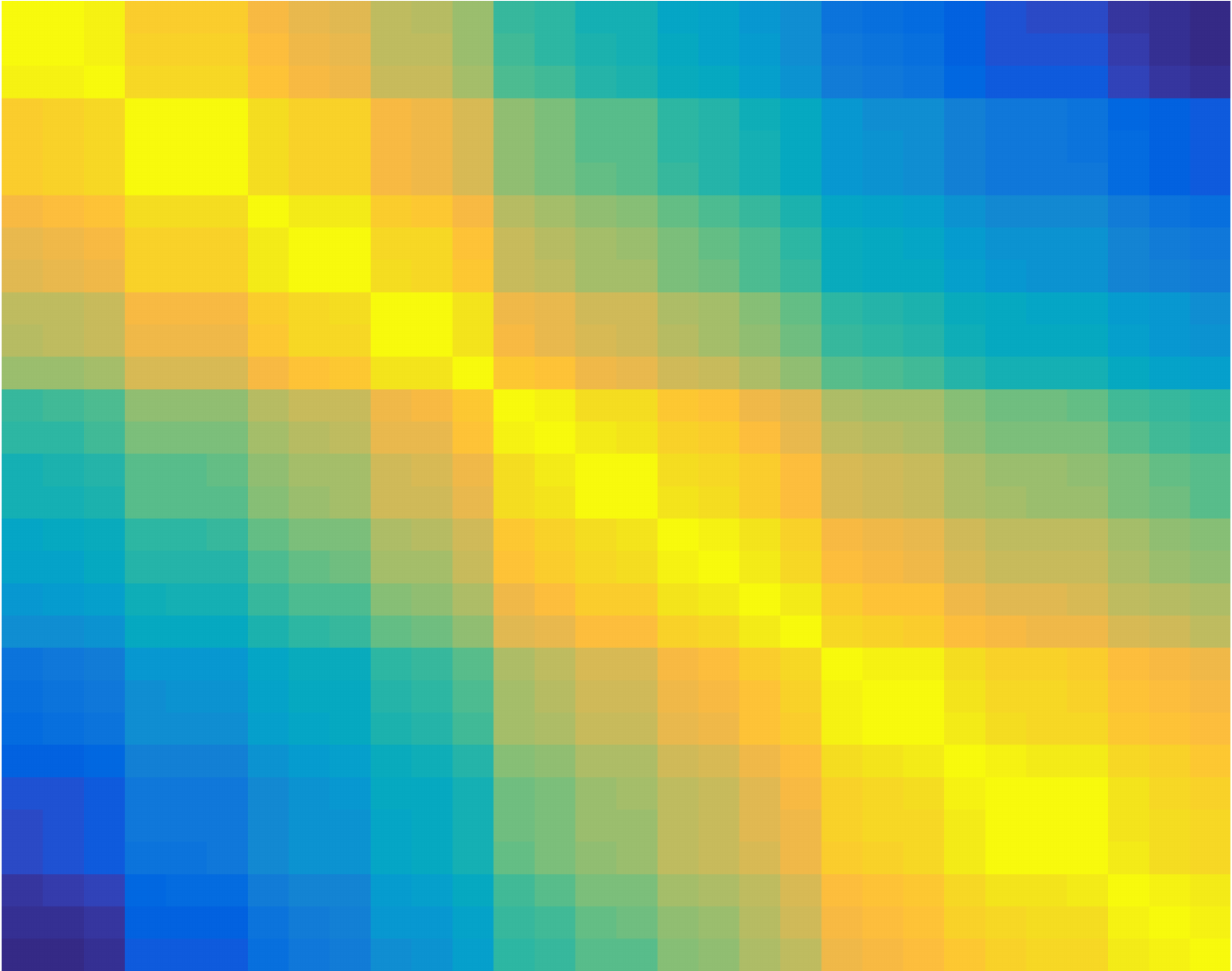}
			\caption{R-matrix}\label{subfig:Rmat}
		\end{subfigure}
		\begin{subfigure}[htb]{0.3\textwidth}
			\includegraphics[width=\textwidth]{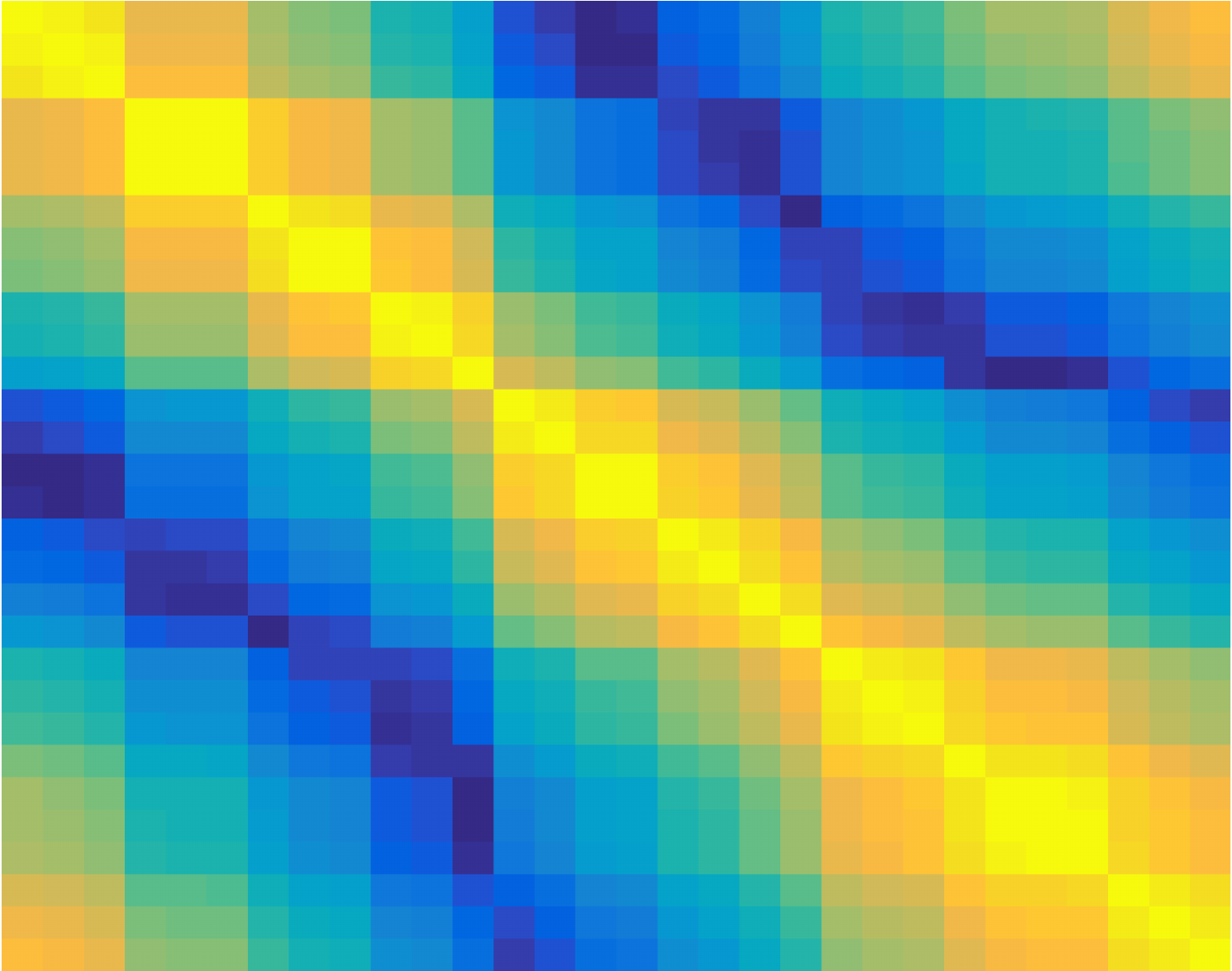}
			\caption{circular R-matrix}\label{subfig:circRmat}
		\end{subfigure}
		\begin{subfigure}[htb]{0.3\textwidth}
			\includegraphics[width=\textwidth]{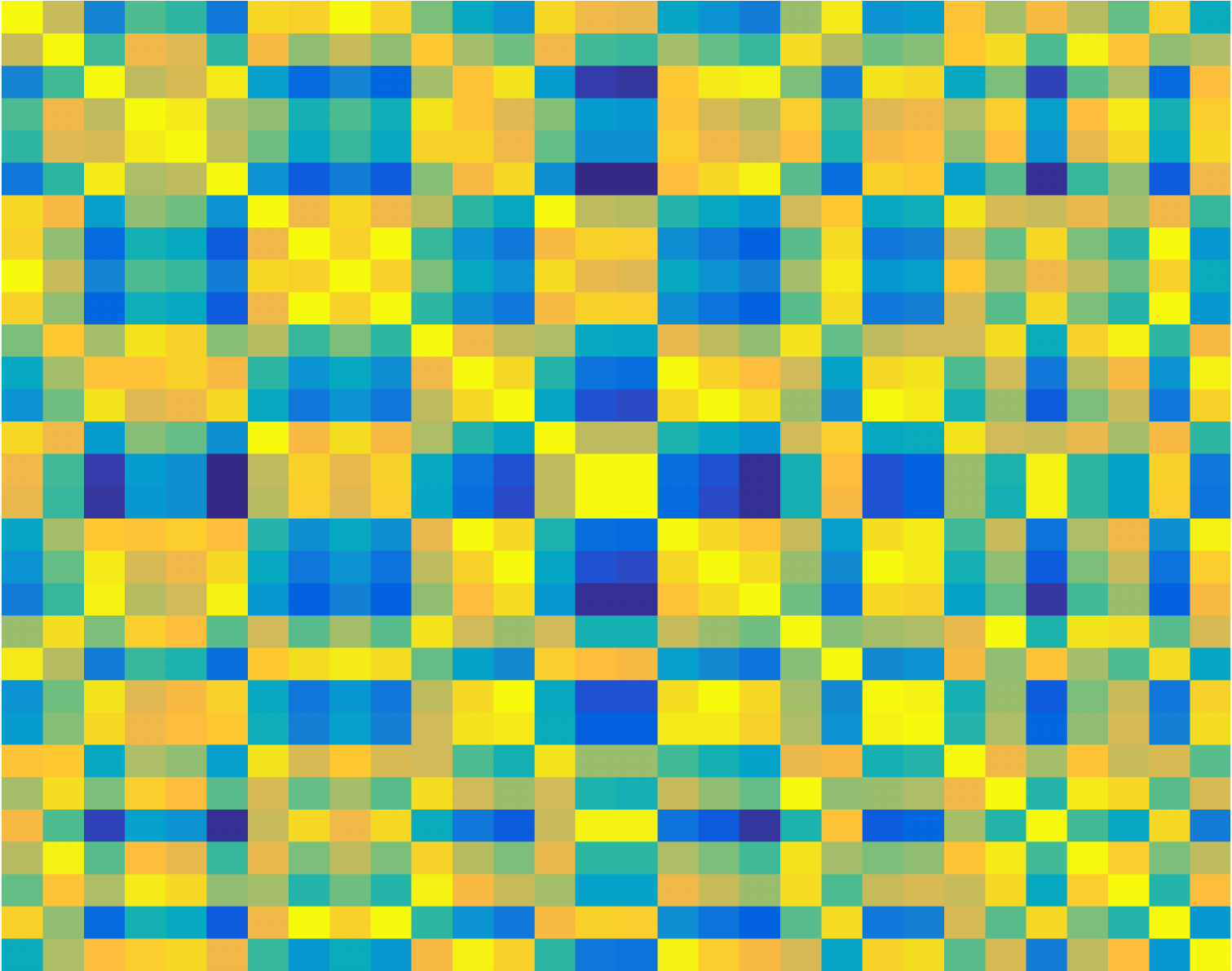}
			\caption{permuted R-matrix}\label{subfig:2DRperm}
		\end{subfigure}
		\caption{
		From left to right, R-matrix (\ref{subfig:Rmat}), circular R-matrix (\ref{subfig:circRmat}), and a randomly permuted observation of a R-matrix (\ref{subfig:2DRperm}). Seriation seeks to recover (\ref{subfig:Rmat}) from its permuted observation (\ref{subfig:2DRperm}).
		}
		\label{fig:seriation}
	\end{center}
	\vskip -0.2in
\end{figure}

In what follows, we write $\cR^n$ (resp., $\circR^n$) the set of R (resp., circular-R) matrices of size $n$, and $\cP_n$ the set of permutations of $n$ elements. A permutation can be represented by a vector $\pi$ (lower case) or a matrix $\Pi \in \{0,1\}^{n \times n}$ (upper case) defined by $\Pi_{ij} = 1$ iff $\pi(i) = j$, and $\pi = \Pi \pi_{Id}$ where $\pi_{Id} = (1, \dots, n)^T$. We refer to both representations by $\cP_n$ and may omit the subscript $n$ whenever the dimension is clear from the context. 
We say that $A \in \symm_n$ is pre-$\cR$ (resp., pre-$\circR$) if there exists a permutation $\Pi \in \cP$ such that the matrix $\PAP$ (whose entry $(i,j)$ is $A_{\pi(i),\pi(j)}$) is in $\cR$ (resp., $\circR$). Given such $A$, Seriation seeks to recover this permutation $\Pi$,
\begin{align}
\BA{lllll}
\find &  \Pi \in \cP & \st & \PAP  \in \cR & \tag{Linear Seriation}\label{eqn:seriation}
\EA
\\
\BA{lllll}
\find  & \Pi \in \cP  & \st & \PAP  \in \circR & \tag{Circular Seriation}\label{eqn:circ-seriation}
\EA
\end{align}
A widely used method for \ref{eqn:seriation} is a spectral relaxation based on the graph Laplacian of the similarity matrix.
It transposes Spectral Clustering \citep{von2007tutorial} to the case where we wish to infer a latent ordering rather than a latent clustering on the data.
Roughly speaking, both methods embed the elements on a line and associate a coordinate $f_i \in \reals$ to each element $i \in [n]$. Spectral clustering addresses a graph-cut problem by grouping these coordinates into two clusters. Spectral ordering \citep{Atkins} addresses \ref{eqn:seriation} by sorting the $f_i$.

Most Spectral Clustering algorithms actually use a Laplacian embedding of dimension $d>1$, denoted \kLE{d} in the following. Latent cluster structure is assumed to be enhanced in the \kLE{d}, and the k-means algorithm \citep{macqueen1967some,hastie2009unsupervised} seamlessly identifies the clusters from the embedding.
In contrast, Spectral Ordering is restricted to $d=1$ by the sorting step (there is no total order relation on $\reals^d$ for $d>1$).
Still, the latent linear structure may emerge from the \kLE{d}, if the points are distributed along a curve.
Also, for $d=2$, it may capture the circular structure of the data and allow for solving \ref{eqn:circ-seriation}.
One must then recover a (circular) ordering of points lying in a $1D$ manifold (a curve, or filament) embedded in $\reals^d$.

In Section~\ref{sec:related-work}, we review the Spectral Ordering algorithm and the Laplacian Embedding used in Spectral Clustering. We mention graph-walk perspectives on this embedding and how this relates to dimensionality reduction techniques. Finally, we recall how these perspectives relate the discrete Laplacian to continuous Laplacian operators, providing insights about the curve structure of the Laplacian embedding through the spectrum of the limit operators. These asymptotic results were used to infer circular orderings in a tomography application in e.g. \citet{coifman2008graph}.
In Section~\ref{sec:theory}, we evidence the filamentary structure of the Laplacian Embedding, and provide theoretical guarantees about the Laplacian Embedding based method for \ref{eqn:circ-seriation}.
We then propose a method in Section~\ref{sec:results} to leverage the multidimensional Laplacian embedding in the context of \ref{eqn:seriation} and \ref{eqn:circ-seriation}.
We eventually present numerical experiments to illustrate how the spectral method gains in robustness by using a multidimensional Laplacian embedding.

\section{Related Work}\label{sec:related-work}

\subsection{Spectral Ordering for Linear Seriation}

\ref{eqn:seriation} can be addressed with a spectral relaxation of the following combinatorial problem,
\begin{align}
\tag{2-SUM} \label{eqn:2sum}
\BA{llll}
\mbox{minimize}  & \sum_{i,j=1}^n 
A_{ij} |\pi_i - \pi_j|^2 & \st & \pi \in \cP_n\\
\EA
\end{align}
Intuitively, the optimal permutation compensates high $A_{ij}$ values with small $|\pi_i - \pi_j|^2$, thus laying similar elements nearby.
For any $f = \left(f(1),\ldots,f(n)\right)^T \in \reals^n$, the objective of \ref{eqn:2sum} can be written as a quadratic (with simple algebra using the symmetry of $A$, see \citet{von2007tutorial}),
\begin{align}\label{eqn:2sum-is-quadratic}
    {\textstyle \sum_{i,j=1}^n}  A_{ij} |f(i) - f(j)|^2 = f^T L_A f
\end{align}
where $L_A \triangleq \diag(A\ones)-A$ is the graph-Laplacian of $A$.
From~\eqref{eqn:2sum-is-quadratic}, $L_A$ is positive-semi-definite for $A$ having non-negative entries, and $\ones = (1, \ldots, 1)^T$ is an eigenvector associated to $\lambda_0 = 0$.

The spectral method drops the constraint $\pi \in \cP_n$ in \ref{eqn:2sum} and enforces only norm and orthogonality constraints, $\|\pi\|=1$, $\pi^T \ones = 0$, to avoid the trivial solutions $\pi = 0$ and $\pi \propto \ones$, yielding,
\begin{align}
\tag{Relax. 2-SUM}\label{eqn:2sum-relaxed}
\BA{llll}
\mbox{minimize}  & f^T L_A f & \st & \| f \|_2 = 1 \:,\: f^T \ones = 0.\\
\EA
\end{align}

This is an eigenvalue problem on $L_A$ solved by $f_{(1)}$, the eigenvector associated to $\lambda_1 \geq 0$ the second smallest eigenvalue of $L_A$. If the graph defined by $A$ is connected (which we assume further) then $\lambda_1 > 0$.
From $f_{(1)}$, one can recover a permutation by sorting its entries.
The spectral relaxation of \ref{eqn:2sum} is summarized in Algorithm~\ref{alg:spectral}.
For pre-$\cR$ matrices, \ref{eqn:seriation} is equivalent to \ref{eqn:2sum} \citep{Fogel}, and can be solved with Algorithm~\ref{alg:spectral} \citep{Atkins}, as stated in Theorem~\ref{thm:spectral-solves-seriation-preR}.
\begin{algorithm}[H]
	\footnotesize
	\caption{Spectral ordering \citep{Atkins}}\label{alg:spectral}
	\begin{algorithmic} [1]
		\REQUIRE Connected similarity matrix $A \in \mathbb{R}^{n \times n}$
		\STATE Compute Laplacian $L_A=\diag(A\ones)-A$
		\STATE Compute second smallest eigenvector of $L_A$, $f_{1}$
		\STATE Sort the values of $f_{1}$
		\ENSURE Permutation $\sigma : f_{1}({\sigma(1)}) \leq 
		\ldots \leq f_{(1)}({\sigma(n)})$
	\end{algorithmic}
\end{algorithm}
\begin{theorem}[\cite{Atkins}]\label{thm:spectral-solves-seriation-preR}
If $A \in \symm_n$ is a pre-$\cR$ matrix, then Algorithm~\ref{alg:spectral} recovers a permutation $\Pi \in \cP_n$ such that $\PAP \in \cR^n$, \ie, it solves \ref{eqn:seriation}.
\end{theorem}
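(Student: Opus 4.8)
The plan is to reduce the statement to a single structural fact — that the Fiedler vector of a connected Robinson matrix is monotone — and to transport this fact to the observed matrix $A$ by the permutation equivariance of the Laplacian.

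First I would record the equivariance. Since $A$ is pre-$\cR$, there is $\Pi_0 \in \cP$ with $B \triangleq \Pi_0 A \Pi_0^T \in \cR^n$. The Laplacian satisfies $L_A = \Pi_0^T L_B \Pi_0$, so $A$ and $B$ share the spectrum and their Fiedler vectors correspond through $f_{(1)}^A = \Pi_0^T f_{(1)}^B$. Consequently, sorting the entries of $f_{(1)}^A$ yields the same reordering as sorting $f_{(1)}^B$, composed with the fixed relabelling $\Pi_0$. It therefore suffices to show that if $B\in\cR^n$ is connected then $f_{(1)}^B$ is monotone (non-decreasing or non-increasing in the index): sorting a monotone vector returns the identity or the reversal $i \mapsto n+1-i$, and Definition~\ref{def:R-mat} is invariant under this reversal, so sorting $f_{(1)}^A$ puts $A$ into $\cR^n$.

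The core step is the monotonicity of $f_{(1)}^B$, which I would obtain from a rearrangement inequality tailored to Robinson matrices. Recall $f_{(1)}^B$ minimizes the Rayleigh quotient
\[
R(f) = \frac{\sum_{i,j} B_{ij}\,(f_i - f_j)^2}{\sum_i f_i^2}
\]
over $f\perp\ones$, $f\neq 0$. The key lemma I would prove is that, among all vectors obtained by permuting a fixed multiset of coordinate values along the indices, the monotone (sorted) placement minimizes the numerator $\sum_{i,j} B_{ij}(f_i-f_j)^2$. Intuitively this is because the Robinson inequalities $B_{ij}\leq B_{i,j+1}$ and $B_{i+1,j}\leq B_{ij}$ concentrate the large weights near the diagonal, where a sorted $f$ makes $(f_i-f_j)^2$ small; the proof is a swap argument showing that any adjacent transposition moving the placement towards the sorted one does not increase the numerator, using precisely the two Robinson inequalities on the affected pairs. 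Applying this lemma to $f_{(1)}^B$: a permutation of its entries preserves both $\sum_i f_i^2$ and $f^T\ones=0$, so the sorted rearrangement $\tilde f$ satisfies $R(\tilde f)\leq R(f_{(1)}^B)=\lambda_1$ while $R(\tilde f)\geq \lambda_1$ by minimality; hence $\tilde f$ is itself a Fiedler vector. When $\lambda_1$ is simple, the minimizer is unique up to scaling and sign, so $\tilde f = \pm f_{(1)}^B$, which forces $f_{(1)}^B$ to already be monotone.

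Combining the two steps proves the theorem: the sorting permutation output by Algorithm~\ref{alg:spectral}, pulled back through $\Pi_0$, maps $A$ into $\cR^n$. I expect the main obstacle to be two-fold. First, making the rearrangement lemma rigorous — turning ``moving towards the sorted placement does not increase the numerator'' into a clean monotone statement requires a careful, possibly inductive, use of the two Robinson inequalities on each swapped pair, and controlling when the inequality is strict. Second, the degenerate cases: if $\lambda_1$ is not simple, or if $f_{(1)}^B$ has repeated entries, the sorted rearrangement need not equal $f_{(1)}^B$ and the sort is not unique; here I would argue that any admissible tie-breaking still produces a valid ordering, since the remaining freedom permutes only indices carrying equal Fiedler coordinates, and one must check that such permutations preserve membership in $\cR^n$. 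Connectedness enters to guarantee $\lambda_1>0$, ruling out the constant eigenvector and making the embedding and its sort well defined.
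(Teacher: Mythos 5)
Your reduction step is fine: the equivariance $L_A=\Pi_0^T L_B \Pi_0$, the reversal-invariance of Definition~\ref{def:R-mat}, and the observation that it suffices to prove monotonicity of the Fiedler vector of a connected $B\in\cR^n$ are all correct, and this is indeed the right skeleton. The genuine gap is your key rearrangement lemma, which is \emph{false} as stated: for a Robinson matrix, the monotone placement of a fixed multiset need not minimize $\sum_{i,j}B_{ij}(f_i-f_j)^2$ — in either direction. Take $n=4$ and the connected R-matrix given by the weighted path $B_{12}=1$, $B_{23}=B_{34}=\epsilon=10^{-2}$, all other off-diagonal entries $0$ (one checks Definition~\ref{def:R-mat} directly), with the multiset $\{0,\,10,\,10.1,\,30\}$. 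Then, up to the factor $2$,
\begin{align*}
f=(0,10,10.1,30):&\quad 1\cdot 10^2+\epsilon\cdot 0.1^2+\epsilon\cdot 19.9^2\approx 104.0,\\
f=(30,10.1,10,0):&\quad 1\cdot 19.9^2+\epsilon\cdot 0.1^2+\epsilon\cdot 10^2\approx 397.0,\\
f=(10,10.1,0,30):&\quad 1\cdot 0.1^2+\epsilon\cdot 10.1^2+\epsilon\cdot 30^2\approx 10.0,
\end{align*}
so a non-monotone placement strictly beats both the ascending and the descending one. The swap argument cannot repair this: exchanging the values at positions $k,k+1$ changes the Dirichlet form by $\sum_{j\neq k,k+1}(B_{j,k+1}-B_{j,k})\bigl[(f_k-f_j)^2-(f_{k+1}-f_j)^2\bigr]$ plus analogous terms, and the sign of each bracket depends on whether $f_j$ lies above or below the midpoint $(f_k+f_{k+1})/2$; the two Robinson inequalities on the affected row/column do not control this, because the Robinson property constrains decay away from the diagonal but allows a wildly non-uniform profile \emph{along} the diagonal, and with unequal gaps in the multiset it pays to put the small gap on the heavy edge regardless of global order. (This is also why the 2-SUM equivalence of \citet{Fogel} is only for the equispaced values $\pi_i\in\{1,\dots,n\}$, for which the objective is reversal-invariant and the gap structure degenerates.) Applying the ``lemma'' only to the multiset of Fiedler values does not save the argument: the needed instance — that sorting the Fiedler vector does not increase the Rayleigh quotient — is essentially equivalent to the monotonicity you are trying to prove, so the proof becomes circular.

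For comparison, the paper does not prove this statement at all; it imports it from \citet{Atkins}, whose argument is genuinely different: it establishes that a connected R-matrix admits a monotone Fiedler vector by exploiting Fiedler's structural theory of Laplacian eigenvectors (connectivity of the subgraphs induced by sublevel and superlevel sets of the second eigenvector), combined with the Robinson order constraints to force those level sets to be contiguous intervals, and it treats separately the degenerate cases (multiple Fiedler values, repeated entries) that you correctly flagged as delicate. So your instinct about where the difficulties lie was right, but the variational/rearrangement route you chose to overcome them breaks at its central step.
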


\subsection{Laplacian Embedding}

Let $0=\lambda_0 < \lambda_1 \leq \ldots \leq \lambda_{n-1}$,
$\Lambda \triangleq \diag\left(\lambda_0, \ldots, \lambda_{n-1} \right)$,
$\Phi = \left( \ones, f_{1}, \ldots, f_{n-1} \right)$,
be the eigendecomposition of $L_A = \Phi \Lambda \Phi^T$.
Algorithm~\ref{alg:spectral} embeds the data in 1D through the eigenvector $f_{1}$ (\kLE{1}).
For any $d < n$,
$\Phi^{(d)} \triangleq \left( f_{1}, \ldots, f_{d} \right)$
defines a $d$-dimensional embedding (\kLE{d})
\begin{align}\tag{\kLE{d}}\label{eqn:kLE}
    \yy_i = \left(f_{1}(i), f_{2}(i), \ldots, f_{d}(i)\right)^T \in \reals^d
    , \: \: \: \mbox{for} \: \: \: i=1,\ldots,n. 
\end{align}
which solves the following embedding problem,
\begin{align}
\tag{Lap-Emb} \label{eqn:lapl-embed}
\BA{ll}
\mbox{minimize}  & \sum_{i,j=1}^n 
A_{ij} \|\yy_i - \yy_j\|_{2}^2 \\
\st & \tilde{\Phi}=\left( \yy_{1}^T, \ldots, \yy_{n}^T \right)^T \in \reals^{n \times d} \:,\: \tilde{\Phi}^T \tilde{\Phi} = \idm_d \:,\: \tilde{\Phi}^T \ones_n = {\mathbf 0}_d
\EA
\end{align}
Indeed, like in \eqref{eqn:2sum-is-quadratic}, the objective of \ref{eqn:lapl-embed} can be written $\Tr \left( \tilde{\Phi}^T L_A \tilde{\Phi} \right)$ (see \citet{belkin2003laplacian} for a similar derivation).
The \ref{eqn:2sum} intuition still holds: the \kLE{d} lays similar elements nearby, and dissimilar apart, in $\reals^d$.
Other dimensionality reduction techniques such as Multidimensional scaling (MDS) \citep{kruskal1978multidimensional}, kernel PCA \citep{scholkopf1997kernel}, or Locally Linear Embedding (LLE) \citep{roweis2000nonlinear} could be used as alternatives to embed the data in a way that intuitively preserves the latent ordering. However, guided by the generalization of Algorithm~\ref{alg:spectral} and theoretical results that follow, we restrict ourselves to the Laplacian embedding.

\subsubsection{Normalization and Scaling}

Given the weighted adjacency matrix $W \in \symm_n$ of a graph, its Laplacian reads $L = D - W$, where $D = \diag(W\ones)$ has diagonal entries $d_{i} = \sum_{j=1}^n W_{ij}$ (degree of $i$).
Normalizing $W_{ij}$ by
$\sqrt{d_i d_j}$ or $d_i$
leads to the normalized Laplacians,
\begin{align}
\BA{rll}
\Lsym & = & D^{-1/2} L D^{-1/2} = \idm - D^{-1/2} W D^{-1/2}\\
\Lrw & = & D^{-1} L  =  \idm - D^{-1} W \\
\EA
\end{align}
They correspond to graph-cut normalization (normalized cut or ratio cut). Moreover,
$\Lrw$ has a Markov chain interpretation, where 
a random walker on edge $i$ jumps to edge $j$ from time $t$ to $t+1$ with transition probability $P_{ij} \triangleq W_{ij}/d_i$.
It has connections with diffusion processes, governed by the heat equation $\frac{\partial \cH_t}{\partial t} = - \Delta \cH_t$, where $\Delta$ is the Laplacian operator, $\cH_t$ the heat kernel, and $t$ is time \citep{qiu2007clustering}.
These connections lead to diverse Laplacian embeddings backed by theoretical justifications, where the eigenvectors $f^{\text{rw}}_{k}$ of $\Lrw$ are sometimes scaled by decaying weights $\alpha_k$ (thus emphasizing the first eigenvectors),
\begin{align}\tag{\kLE{($\alpha$, d)}}\label{eqn:scaled-kLE}
\tilde{\yy}_i = \left( \alpha_1 f_{1}^{\text{rw}}(i), \ldots, \alpha_{d-1} f_{d}^{\text{rw}}(i)\right)^T \in \reals^d
, \: \: \: \mbox{for} \: \: \: i=1,\ldots,n. 
\end{align}
Laplacian eigenmaps \citep{belkin2003laplacian} is a nonlinear dimensionality reduction technique based on the spectral embedding of $\Lrw$ (\eqref{eqn:scaled-kLE} with $\alpha_k =1$ for all $k$).
Specifically, given points $x_1, \ldots, x_n \in \reals^d$
, the method computes a heat kernel similarity matrix  $W_{ij} = \exp{-\left(\| x_i - x_j\|^2/t\right)}$ and outputs the first eigenvectors of $\Lrw$ as a lower dimensional embedding.
The choice of the heat kernel is motivated by connections with the heat diffusion process on a manifold, a partial differential equation involving the Laplacian operator.
This method has been successful in many machine learning applications such as semi-supervised classification \citep{belkin2004semi} and search-engine type ranking \citep{zhou2004ranking}. Notably, it provides a global, nonlinear embedding of the points that preserves the local structure.

The commute time distance $\text{CTD}(i,j)$ between two nodes $i$ and $j$ on the graph is the expected time for a random walker to travel from node $i$ to node $j$ and then return.
The full \ref{eqn:scaled-kLE}, with $\alpha_k = (\lambda_k^{\text{rw}})^{-1/2}$ and $d=n-1$, satisfies $\text{CTD}(i,j) \propto \| \tilde{\yy}_{i} - \tilde{\yy}_{j} \|$. Given the decay of $\alpha_k$, the \kLE{d} with $d \ll n$ approximately preserves the CTD.
This embedding has been successfully applied to vision tasks, \eg, anomaly detection \citep{albano2012euclidean}, image segmentation and motion tracking \citep{qiu2007clustering}.

Another, closely related dimensionality reduction technique is that of diffusion maps \citep{coifman2006diffusion}, where the embedding is derived to preserve diffusion distances, resulting in the \ref{eqn:scaled-kLE}, for $t \geq 0$, $\alpha_{k}(t) = (1 - \lambda_{k}^{\text{rw}})^t$.

\citet{coifman2006diffusion,coifman2008graph} also propose a normalization of the similarity matrix $\tilde{W} \gets D^{-1} W D^{-1}$, to extend the convergence of $\Lrw$ towards the Laplace-Beltrami operator on a curve when the similarity is obtained through a heat kernel on points that are \emph{non uniformly} sampled along that curve.

Finally, we will use in practice the heuristic scaling $\alpha_k = 1/\sqrt{k}$ to damp high dimensions, as explained in Appendix~\ref{ssec:scaling-sensitivity}.

For a deeper discussion about spectral graph theory and the relations between these methods, see for instance \citet{qiu2007clustering} and \citet{chung2000discrete}.

\subsection{Link with Continuous Operators}\label{ssec:asymptotic-lap}

In the context of dimensionality reduction, when the data points $x_1, \ldots, x_n \in \reals^D$ lie on a manifold $\mathcal{M} \subset \reals^d$ of dimension $K \ll D$, the graph Laplacian $L$ of the heat kernel ($W_{ij} = \exp{\left(-\| x_i - x_j\|^2/t\right)}$) used in \citet{belkin2003laplacian} is a discrete approximation of $\Delta_{\mathcal{M}}$, the Laplace-Beltrami operator on $\mathcal{M}$ (a differential operator akin to the Laplace operator, adapted to the local geometry of $\mathcal{M}$).
\citet{singer2006graph} specify the hypothesis on the data and the rate of convergence of $L$ towards $\Delta_{\mathcal{M}}$ when $n$ grows and the heat-kernel bandwidth $t$ shrinks.
\citet{von2005limits} also explore the spectral asymptotics of the spectrum of $L$ to prove consistency of spectral clustering.

This connection with continuous operators gives hints about the Laplacian embedding in some settings of interest for \ref{eqn:seriation} and \ref{eqn:circ-seriation}.
Indeed, consider $n$ points distributed along a curve $\Gamma \subset \reals^D$ of length $1$, parameterized by a smooth function $\gamma : \reals \rightarrow \reals^D$, 
$\Gamma = \{ \mathbf{\gamma}(s) \::\: s \in [0,1] \}$, say $x_i = \mathbf{\gamma}(i/n)$. If their similarity measures their proximity along the curve, then the similarity matrix is a circular-R matrix if the curve is closed ($\gamma(0)=\gamma(1)$), and a R matrix otherwise.
\citet{coifman2008graph} motivate a method for \ref{eqn:circ-seriation} with the spectrum of the Laplace-Beltrami operator $\Delta_{\Gamma}$ on $\Gamma$ when $\Gamma$ is a closed curve.
Indeed, $\Delta_{\Gamma}$ is simply the second order derivative with respect to the arc-length $s$, $\Delta_{\Gamma} f(s) = f^{\prime \prime}(s)$ (for $f$ twice continuously differentiable), and its eigenfunctions are given by,
\begin{align}\label{eqn:lapl-beltr-eigenpb}
    f^{\prime \prime}(s) = -\lambda f(s).
\end{align}
With periodic boundary conditions, $f(0)=f(1)$,  $f^{\prime}(0)=f^{\prime}(1)$, and smoothness assumptions, the first eigenfunction is constant with eigenvalue $\lambda_0=0$, and the remaining are $\left\{ \cos{\left(2 \pi m s \right)}, \: \sin{\left(2 \pi m s \right)} \right\}_{m=1}^{\infty}$, associated to the eigenvalues $\lambda_m = (2\pi m)^2$ of multiplicity 2.
Hence, the \kLE{2}, $\left( f_{1}(i), f_{2}(i) \right) \approx \left( \cos{(2\pi s_i)}, \sin{(2\pi s_i)} \right)$ should approximately lay the points on a circle, allowing for solving \ref{eqn:circ-seriation} \citep{coifman2008graph}.
More generally, the \kLE{2d}, $\left(f_{1}(i), \ldots, f_{2d+1}(i)\right)^T \approx \left( \cos{(2\pi s_i)}, \sin{(2\pi s_i)}, \ldots, \cos{(2 d \pi s_i)}, \sin{(2 d \pi s_i)} \right)$ is a closed curve in $\reals^{2d}$.

If $\Gamma$ is not closed, we can also find its eigenfunctions. For instance, with Neumann boundary conditions (vanishing normal derivative), say, $f(0)=1$, $f(1)=0$, $f^{\prime}(0)=f^{\prime}(1)=0$, the non-trivial eigenfunctions of $\Delta_{\Gamma}$ are $\left\{ \cos{\left(\pi m s \right)}\right\}_{m=1}^{\infty}$, with associated eigenvalues $\lambda_m = (\pi m)^2$ of multiplicity 1.
The \kLE{1} $f_{1}(i) \approx \cos{\left(\pi s_i \right)}$ respects the monotonicity of $i$, which is consistent with Theorem~\ref{thm:spectral-solves-seriation-preR}. \citet{lafon2004diffusion} invoked this asymptotic argument to solve an instance of \ref{eqn:seriation} but seemed unaware of the existence of Atkin's Algorithm~\ref{alg:spectral}.
Note that here too, the \kLE{d}, $\left(f_{1}(i), \ldots, f_{d}(i)\right)^T \approx \left( \cos{(\pi s_i)}, \ldots, \cos{(d \pi s_i)} \right)$ follows a closed curve in $\reals^{d}$, with endpoints.

These asymptotic results hint that the Laplacian embedding preserves the latent ordering of data points lying on a curve embedded in $\reals^D$.
However, these results are only asymptotic and there is no known guarantee for the \ref{eqn:circ-seriation} problem as there is for \ref{eqn:seriation}.
Also, the curve (sometimes called filamentary structure) stemming from the Laplacian embedding has been observed in more general cases where no hypothesis on a latent representation of the data is made, and the input similarity matrix is taken as is (see, \eg, \citet{diaconis2008horseshoes} for a discussion about the horseshoe phenomenon).

\subsection{Ordering points lying on a curve}
Finding the latent ordering of some points lying on (or close to) a curve can also be viewed as an instance of the traveling salesman problem (TSP), for which a plethora of (heuristic or approximation) algorithms exist \citep{reinelt1994traveling,laporte1992traveling}. We can think of this setting as one where the cities to be visited by the salesman are already placed along a single road, thus these TSP instances are easy and may be solved by simple heuristic algorithms.

Existing approaches for \ref{eqn:seriation} and \ref{eqn:circ-seriation} have only used 2D embeddings so far, for simplicity.
\citet{kuntz2001iterative} use the \kLE{2} to find a circular ordering of the data. They use a somehow exotic TSP heuristic which maps the 2D points onto a pre-defined ``space-filling'' curve, and unroll the curve through its closed form inverse to obtain a 1D embedding and sort the points.
\citet{friendly2002corrgrams} uses the angle between the two first coordinates of the 2D-MDS embedding and sorts them to perform \ref{eqn:seriation}.
\citet{coifman2008graph} use the \kLE{2} to perform \ref{eqn:circ-seriation} in a tomographic reconstruction setting,
and use a simple algorithm that sorts the inverse tangent of the angle between the two components to reorder the points.
\citet{liu2018unsupervised} use a similar approach to solve \ref{eqn:circ-seriation} in a cell-cycle related problem, but with the 2D embedding given by MDS.

\section{Spectral properties of some (circular) Robinson matrices}\label{sec:theory}

We have claimed that the \kLE{d} enhances the latent ordering of the data and we now present some theoretical evidences. We adopt a point of view similar to \citet{Atkins}, where the feasibility of \ref{eqn:seriation} relies on structural assumptions on the similarity matrix ($\cR$).
For a subclass $\SCR$ of $\circR$ (set of circular-R matrices), we show that the \kLE{d} lays the points on a closed curve, and that for $d=2$, the elements are embedded on a circle according to their latent circular ordering.
This is a counterpart of Theorem~\ref{thm:spectral-solves-seriation-preR} for \ref{eqn:circ-seriation}. It extends the asymptotic results motivating the approach of \citet{coifman2008graph}, shifting the structural assumptions on the elements (data points lying on a curve embedded in $\reals^D$) to assumptions on the raw similarity matrix that can be verified in practice.
Then, we develop a perturbation analysis to bound the deformation of the embedding when the input matrix is in $\SCR$ up to a perturbation.
Finally, we discuss the spectral properties of some (non circular) $\cR$-matrices that shed light on the filamentary structure of their \kLE{d} for $d>1$.

For simplicity, we assume $n \triangleq 2p+1$ odd in the following. The results with $n=2p$ even are relegated to the Appendix, together with technical proofs.
 
\subsection{Circular Seriation with Symmetric, Circulant matrices}

Let us consider the set $\SCR$ of matrices in $\circR$ that are circulant, in order to have a closed form expression of their spectrum.
A matrix $A \in \reals^{n \times n}$ is Toeplitz if its entries are constant on a given diagonal, $A_{ij} = b_{(i-j)}$ for a vector of values $b$ of size ${2n-1}$. A symmetric Toeplitz matrix $A$ satisfies $A_{ij} = b_{|i-j|}$, with $b$ of size ${n}$.
In the case of circulant symmetric matrices, we also have that $b_{k} = b_{n-k}$, for $1 \leq k \leq n$, thus symmetric circulant matrices are of the form,
\begin{align}
\label{eq:defrsptmat} A\ =\ \left(
\begin{array}{cccccc}
	b_0    &  b_1   &  b_2   &  \cdots  &  b_2  &  b_1  \\
	b_1    &  b_0   &  b_1   &  \cdots  &  b_3  &  b_2  \\
	b_2    &  b_1   &  b_0   &  \cdots  &  b_4  &  b_3  \\
	\vdots & \vdots & \vdots &  \ddots  & \vdots &  \vdots  \\
	b_2    & b_3    & b_4    &  \cdots  & b_0   &  b_1  \\
	b_1    & b_2    & b_3    &  \cdots  & b_1   &  b_0
\end{array}\right).
\end{align}
Where $b$ is a vector of values of size $p+1$ (recall that $n=2p+1$).
The circular-R assumption (Def~\ref{def:circ-R-mat}) imposes that the sequence $(b_0, \ldots, b_{p+1})$ is non-increasing.
We thus define the set $\SCR$ of circulant matrices of $\circR$ as follows.
\begin{definition}
A matrix $A \in \symm^n$ is in $\SCR$ iff it verifies $A_{ij}=b_{|i-j|}$ and $b_k = b_{n-k}$ for $1\leq k\leq n$ with $(b_k)_{k=0,\ldots,\lfloor n/2 \rfloor}$ a non-increasing sequence.
\end{definition}

The spectrum of symmetric circulant  matrices is known \citep{reichel1992eigenvalues,gray2006toeplitz,massey2007distribution}, and for a matrix $A$ of size $n=2p+1$, it is given by,
\begin{align}\label{eqn:spectrum-circ}
\BA{lll}
\vspace{.1cm}
    \nu_m  &=&  b_0 + 2 {\textstyle \sum_{k=1}^{p}} {b_k \cos{ \left(2 \pi k m/n\right)}}\\
\vspace{.1cm}
y^{m, \cos} &= & \frac{1}{\sqrt{n}} \left(1, \cos \left( 2 \pi m / n \right), \ldots, \cos \left( 2 \pi m (n-1) / n \right) \right)\\
\vspace{.1cm}
y^{m, \sin} &= & \frac{1}{\sqrt{n}} \left(1, \sin \left( 2 \pi m / n \right), \ldots, \sin \left( 2 \pi m (n-1) / n \right) \right)~.
\EA
\end{align}
For $m = 1,\ldots,p$, $\nu_m$ is an eigenvalue of multiplicity 2 with associated eigenvectors $y^{m, \cos}$,$y^{m, \sin}$.
For any $m$, $(y^{m, \cos}, y^{m, \sin})$ embeds the points on a circle, but for $m>1$, the circle is walked through $m$ times, hence the ordering of the points on the circle does not follow their latent ordering.
The $\nu_m$ from equations~\eqref{eqn:spectrum-circ} are in general not sorted. It is the Robinson property (monotonicity of $(b_k)$) that guarantees that $\nu_1 \geq \nu_m$, for $m \geq 1$, and thus that
the \kLE{2} embeds the points on a circle \emph{that follows the latent ordering} and allows one to recover it by scanning through the unit circle.
This is formalized in Theorem~\ref{th:without_noise}, which is the main result of our paper, proved in Appendix \ref{sec:circular_toeplitz_matrix}.
It provides guarantees in the same form as in Theorem~\ref{thm:spectral-solves-seriation-preR} with the simple Algorithm~\ref{alg:circular-2d-ordering} that sorts the angles, used in \citet{coifman2008graph}.
\begin{algorithm}[H]
	\caption{Circular Spectral Ordering \citep{coifman2008graph}}\label{alg:circular-2d-ordering}
	\begin{algorithmic} [1]
		\REQUIRE Connected similarity matrix $A \in \mathbb{R}^{n \times n}$
		\STATE Compute normalized Laplacian $\Lrw_{A}= \idm - \left(\diag(A\ones)\right)^{-1}A$
		\STATE Compute the two first non-trivial eigenvectors of $\Lrw_A$, $\left(f_{1}, f_{2}\right)$
		\STATE Sort the values of $\theta(i) \triangleq   \tan^{-1}{\left(f_{2}(i)/f_{1}(i) \right)} + \mathbbm{1}[f_1(i)<0] \pi$
		\ENSURE Permutation $\sigma : \theta({\sigma(1)}) \leq 
		\ldots \leq \theta({\sigma(n)})$
	\end{algorithmic}
\end{algorithm}
\begin{theorem}\label{th:without_noise}
Given a permuted observation $\PAP$ ($\Pi \in \cP$) of a matrix $A \in \SCR$, the \kLE{2} maps the items on a circle, equally spaced by angle $2\pi/n$, following the circular ordering in $\Pi$.
Hence, Algorithm~\ref{alg:circular-2d-ordering} recovers a permutation $\Pi \in \cP_n$ such that $\PAP \in \SCR$, \ie, it solves \ref{eqn:circ-seriation}.
\end{theorem}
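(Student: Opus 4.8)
The plan is to reduce everything to the explicit spectrum of $A$ recalled in~\eqref{eqn:spectrum-circ} and to show that the Robinson assumption forces the relevant Laplacian eigenspace to be spanned by the pair $y^{1,\cos}, y^{1,\sin}$. First I would record that a circulant $A$ has constant degree $d = \sum_{j} A_{1j}$, so that $\diag(A\ones) = d\,\idm$ and $\Lrw = \idm - \tfrac1d A$. Hence $\Lrw$ and $A$ share all eigenvectors, and the eigenvalues $1 - \nu_m/d$ of $\Lrw$ are increasing in $\nu_m$: the two first non-trivial eigenvectors used in Algorithm~\ref{alg:circular-2d-ordering} are exactly those attached to the \emph{largest} non-trivial eigenvalue $\nu_\star = \max_{m\geq 1}\nu_m$ of $A$. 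Permutation equivariance ($B=\PAP$ satisfies $B(\Pi v) = \nu\,\Pi v$ whenever $Av=\nu v$, and inherits the same constant degree $d$) then lets me carry any statement about $A$ over to the observed matrix.

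The core claim is the \emph{separation} $\nu_1 > \nu_m$ for every $m = 2,\ldots,p$, so that $\nu_\star = \nu_1$ and the eigenspace of interest is precisely $\Span\{y^{1,\cos},y^{1,\sin}\}$. Granting this, the \kLE{2} of $A$ sends index $i$ to $\tfrac{1}{\sqrt n}\bigl(\cos(2\pi(i-1)/n),\,\sin(2\pi(i-1)/n)\bigr)$, i.e. to the $n$-th roots of unity, equally spaced by $2\pi/n$ and visited in the cyclic order $1,\ldots,n$. By equivariance the embedding of $\PAP$ sends $i$ to the angle $\theta_i = 2\pi(\pi(i)-1)/n$, so sorting the $\theta_i$ as in Algorithm~\ref{alg:circular-2d-ordering} returns the latent circular ordering. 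The only subtlety is that $\nu_1$ has multiplicity $2$, so the computed $(f_1,f_2)$ agree with $(y^{1,\cos},y^{1,\sin})$ only up to a $2\times 2$ orthogonal transformation; but such a map is a rotation or reflection of the circle, which preserves both the equal spacing and the cyclic order (up to a global rotation and a choice of orientation), leaving the conclusion intact.

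The hard part is the separation $\nu_1 \geq \nu_m$, which is false for general symmetric circulant matrices and must exploit the monotonicity of $(b_k)$. Writing $\nu_m = b_0 + 2g(m)$ with $g(m) = \sum_{k=1}^p b_k \cos(2\pi km/n)$, and abbreviating the Dirichlet-type partial sums $C_m(j) = \sum_{k=1}^{j}\cos(2\pi km/n)$, I would apply summation by parts to obtain
\begin{equation*}
g(1) - g(m) = b_p\bigl(C_1(p) - C_m(p)\bigr) + \sum_{k=1}^{p-1} (b_k - b_{k+1})\bigl(C_1(k) - C_m(k)\bigr).
\end{equation*}
Since $b_p \geq 0$ and $b_k - b_{k+1}\geq 0$ by the Robinson property, it suffices to prove the purely trigonometric inequality $C_1(j) \geq C_m(j)$ for all $1\leq j\leq p$ and $2\leq m\leq p$, which I expect to be the main obstacle. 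Using the closed form $C_m(j) = \frac{\sin(\pi m j/n)}{\sin(\pi m/n)}\cos\bigl(\pi m (j+1)/n\bigr)$ this reduces to an elementary, if delicate, estimate on ratios of sines. Finally, strict monotonicity of $(b_k)$ at some index upgrades $\geq$ to the strict separation needed to pin down the eigenspace; the degenerate case of constant $b$ carries no ordering information and is excluded.
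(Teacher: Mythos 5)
Your skeleton coincides with the paper's own proof: the constant-degree observation ($L_A = c\,\idm - A$ for circulant $A$, cf.\ \S\ref{ssec:spectral-prop-lapl}), the permutation equivariance, the treatment of the multiplicity-two eigenspace up to an orthogonal map of the plane, and above all the summation-by-parts step, which is literally the paper's Proposition~\ref{prop:partial_sum_trick}, with your reduction to $C_1(j)\geq C_m(j)$ being exactly the paper's Lemma~\ref{lemma:cos_partial_sum}. (Minor slip: the eigenvalues $1-\nu_m/d$ of $\Lrw$ are \emph{decreasing} in $\nu_m$, though your conclusion that the bottom non-trivial Laplacian eigenspace corresponds to $\nu_\star=\max_{m\geq1}\nu_m$ is the right one.)

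The genuine gap is that you never prove the partial-sum domination $C_1(j)\geq C_m(j)$, and that inequality \emph{is} the mathematical content of the theorem --- everything surrounding it is routine. Your expectation that the closed form $C_m(j)=\cos\bigl(\pi m(j+1)/n\bigr)\sin(\pi m j/n)/\sin(\pi m/n)$ yields it by ``an elementary estimate on ratios of sines'' underestimates the difficulty: when $d=\gcd(m,n)>1$ the sequence $C_m(j)$ is periodic with period $n'=n/d$ and repeatedly returns to $0$, so no pointwise ratio bound works; the paper's proof is instead a rearrangement-type argument comparing the values $z_k^{(m)}\in\mathcal{Z}_n$ against the sorted values $z_k^{(1)}$, split into four ranges of $q$, and it needs the symmetry inequality $S^{(1)}_{p-q}\geq S^{(1)}_{q}$ from Lemma~\ref{lemma:equality_ending_partial_sum} to handle the middle range. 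Moreover, for $n=2p$ even your reduction is not even sufficient as stated: the spectrum carries the extra term $b_p\cos(\pi m)$ (equation~\eqref{eqn:eigval-circ-even}), so for $m$ even one must show $2\sum_{k=1}^{p-1} b_k\bigl(z_k^{(1)}-z_k^{(m)}\bigr)\geq 2b_p$, which the paper extracts from the strengthened gap $S^{(1)}_{p-1}=0$, $S^{(m)}_{p-1}=-1$ (equation~\eqref{eq:S_m_p_1}); plain domination is too weak there, and your formula $\nu_m=b_0+2g(m)$ silently assumes $n$ odd, the only case treated in the paper's main text. On the credit side, your insistence on strict separation $\nu_1>\nu_m$ to pin down the eigenspace is a point the paper's Proposition~\ref{prop:eigen_values_order} glosses over (it only states $\nu_1\geq\nu_m$) --- though the degenerate case is $b_1=\dots=b_p$ constant, giving $A=(b_0-b_1)\idm+b_1\ones\ones^T$, not only fully constant $b$.
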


\subsection{Perturbation analysis}

The spectrum is a continuous function of the matrix.
Let us bound the deformation of the \kLE{2} under a perturbation of the matrix $A$ using
the Davis-Kahan theorem \citep{davis1970rotation}, well introduced in \citep[Theorem 7]{von2007tutorial}.
We give more detailed results in Appendix~\ref{sec:perturbation_analysis} for a subclass of $\SCR$ (KMS) defined further.
\begin{proposition}[Davis-Kahan]\label{prop:davis_Kahan}
Let $L$ and $\tilde{L} = L + \delta L$ be the Laplacian matrices of $A \in \SCR$ and $A + \delta A \in \symm^n$, respectively, and $V,\tilde{V} \in \reals^{2 \times n}$ be the associated \kLE{2} of $L$ and $\tilde{L}$, \ie, the concatenation of the two eigenvectors associated to the two smallest non-zero eigenvalues, written $\lambda_1 \leq \lambda_2$ for $L$.
Then, there exists an orthonormal rotation matrix $O$ such that
\begin{eqnarray}\label{eq:perturbation_result}
\frac{\|V_1-\tilde{V}_1 O\|_F}{\sqrt{n}} \leq  \frac{\|\delta A\|_F}{\min(\lambda_1,\lambda_2-\lambda_1)}~.
\end{eqnarray}
\end{proposition}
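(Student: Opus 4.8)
The plan is to read Proposition~\ref{prop:davis_Kahan} as a turnkey application of the Davis--Kahan $\sin\Theta$ theorem, so that almost all the work consists in matching our objects to its hypotheses and in correctly accounting for the eigengap and the degree perturbation.

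First I would recall the theorem in the operational form of \citet[Theorem 7]{von2007tutorial}: for symmetric $M$ and $\tilde M = M + E$, if an interval $S_1$ isolates the eigenvalues indexing the subspace of interest and $V, \tilde V$ are orthonormal bases of the corresponding eigenspaces of $M$ and $\tilde M$, then there is an orthogonal matrix $O$ with $\| V - \tilde V O \|_F \leq \sqrt 2\,\| E \|_F / \delta$, where $\delta$ is the distance between $S_1$ and the rest of the spectrum. The rotation $O$ is unavoidable because it absorbs the basis ambiguity of the (possibly repeated) eigenvectors, which is exactly the object appearing in the statement.

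Next I would instantiate $M = L$, $\tilde M = \tilde L$, with the subspace spanned by the embedding eigenvectors attached to $\lambda_1 \leq \lambda_2$. The eigengap is read off the Laplacian spectrum: since $\lambda_0 = 0$ with eigenvector $\ones$, the interval around $\lambda_1$ is separated from below by $\lambda_1 - \lambda_0 = \lambda_1$ and from above by $\lambda_2 - \lambda_1$, so $\delta = \min(\lambda_1, \lambda_2 - \lambda_1)$, precisely the denominator. For the numerator, I would translate the matrix perturbation into a Laplacian perturbation: because $A \in \SCR$ is circulant its degree matrix is a multiple of $\idm$, hence $\delta L = \diag(\delta A\, \ones) - \delta A$. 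A Cauchy--Schwarz estimate gives $\| \diag(\delta A\, \ones) \|_F \leq \sqrt n\, \| \delta A \|_F$, so $\| \delta L \|_F \leq (\sqrt n + 1)\,\| \delta A \|_F$; dividing the Davis--Kahan bound through by $\sqrt n$ is exactly what cancels this blow-up and yields \eqref{eq:perturbation_result}. I consider recognizing this as the source of the $1/\sqrt n$ factor to be the one genuinely non-routine point.

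The main obstacle is the degeneracy of the eigengap. For an exactly circulant $A$ the eigenvalue $\nu_1$ has multiplicity two (see the spectrum in the setup of Theorem~\ref{th:without_noise}), so $\lambda_1 = \lambda_2$, the stated gap $\lambda_2 - \lambda_1$ vanishes, and the bound is vacuous. The proposition therefore becomes informative only once the two embedding directions are treated as a single $2$-dimensional eigenspace, whose separation from the remaining spectrum is $\min(\lambda_1, \lambda_3 - \lambda_2) > 0$, or once one passes to the non-degenerate KMS regime of Appendix~\ref{sec:perturbation_analysis}, where $\lambda_1 < \lambda_2$ and the stated gap is genuinely positive. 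I would make explicit which subspace $V$ denotes and adjust $\delta$ accordingly, as this is the only place the argument can fail.
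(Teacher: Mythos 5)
Your proposal is correct and takes essentially the same route as the paper: the main text justifies Proposition~\ref{prop:davis_Kahan} precisely as a turnkey application of Davis--Kahan in the form of \citet[Theorem 7]{von2007tutorial}, and the detailed version in Appendix~\ref{sec:perturbation_analysis} (Proposition~\ref{prop:davis_kahan_circular}) is the same blackbox instantiation via \citep[Theorem 2]{yu2014useful}. Two points of comparison are worth recording. First, the paper's own proof never performs your conversion from $\delta A$ to $\delta L$: the appendix bound is left in terms of the perturbation's Laplacian, with explicit constant $2^{3/2}\min(\sqrt{2}\|L_H\|_2,\|L_H\|_F)$ and no $1/\sqrt{n}$ normalization. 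Your reverse-engineering of the $\sqrt{n}$ factor via $\|\diag(\delta A\,\ones)\|_F\leq\sqrt{n}\,\|\delta A\|_F$, hence $\|\delta L\|_F\leq(1+\sqrt{n})\|\delta A\|_F$, is the right structural explanation, but note that it establishes \eqref{eq:perturbation_result} only up to a factor of order $\sqrt{2}\,(1+n^{-1/2})$, not with constant $1$ as literally stated; this looseness is inherited from the informal main-text statement (which the paper does not prove at the constant level) rather than being a flaw specific to your argument. Second, your caveat about the degenerate eigengap is exactly right and is the sharpest observation in your writeup: for $A\in\SCR$ the eigenvalue $\nu_1$ is double, so with multiplicities the denominator $\min(\lambda_1,\lambda_2-\lambda_1)$ vanishes; the appendix resolves this only implicitly, by writing $\lambda_1<\lambda_2$ for the first \emph{distinct} non-zero eigenvalues and taking $V$ to be the two-dimensional eigenspace of the double eigenvalue $\lambda_1$ --- which is the same fix you propose (your $\min(\lambda_1,\lambda_3-\lambda_2)$ in their relabeling). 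The only ingredient of the paper's appendix proof you do not reproduce, the passage from the Frobenius bound to a rowwise bound via $\|\cdot\|_{2,\infty}\leq\|\cdot\|_F$ and the replacement of the rotation $O$ by a cyclic permutation using the circle structure of Theorem~\ref{th:without_noise}, pertains to the refined appendix statement, not to the proposition at hand.
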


\subsection{Robinson Toeplitz matrices}

Let us investigate how the latent linear ordering of Toeplitz matrices in $\cR$ translates to the \kLE{d}.
Remark that from Theorem~\ref{thm:spectral-solves-seriation-preR}, the \kLE{1} suffices to solve \ref{eqn:seriation}.
Yet, for perturbed observations of $A \in \cR$, the \kLE{d} may be more robust to the perturbation than the \kLE{1}, as the experiments in~\S\ref{sec:numerical_results} indicate.

\textbf{Tridiagonal Toeplitz matrices}
are defined by $b_0 > b_1 > 0=b_2 = \ldots = b_p $.
For $m=0,\ldots,n-1$, they have eigenvalues $\nu_m$ with multiplicity 1 associated to eigenvector $y^{(m)}$ \citep{trench1985eigenvalue},
\begin{align}\label{eqn:spectrum-tridiag}
	\BA{lll}
    \nu_m & =& b_0 + 2 b_1 \cos{\left(m \pi / (n+1)\right)}\\
     y^{(m)} &=& \left( \sin{\left(m \pi / (n+1)\right)}, \ldots,  \sin{\left(m n \pi / (n+1)\right)} \right),~
     \EA
\end{align}
thus matching the spectrum of the Laplace operator on a curve with endpoints from \S\ref{ssec:asymptotic-lap} (up to a shift).
This type of matrices can indeed be viewed as a limit case with points uniformly sampled on a line with strong similarity decay, leaving only the two nearest neighbors with non-zero similarity. 

\textbf{Kac-Murdock-Szegö (KMS) matrices}
are defined, for $\alpha > 0$, $\rho = e^{-\alpha}$, by $A_{ij} = b_{|i-j|} = e^{-\alpha |i-j|} = \rho^{|i-j|}$.
For $m=1,\ldots,\lfloor n/2 \rfloor$, there exists $\theta_m \in \left({(m-1)\pi}/{n}, {m\pi}/{n}\right)$, such that $\nu_m$ is  a double eigenvalue associated to eigenvectors $y^{m, \cos}$,$y^{m, \sin}$,
\begin{align}\label{eqn:spectrum-KMS}
\BA{lll}
\nu_m  &=&  \frac{1 - \rho^2}{1 - 2 \rho \cos{\theta_m} + \rho^2}\\
\vspace{.1cm}
y^{m, \cos} &= & \left( \cos{\left( (n-2 r +1)\theta_m /2 \right)} \right)_{r=1}^{n}\\
\vspace{.1cm}
y^{m, \sin} &= & \left( \sin{\left( (n-2 r +1)\theta_m /2 \right)} \right)_{r=1}^{n}~.
\EA
\end{align}

\textbf{Linearly decreasing Toeplitz matrices}
defined by $A^{lin}_{ij} = b_{|i-j|} = n - |i-j|$ have spectral properties analog to those of KMS matrices (trigonometric expression, interlacement, low frequency assigned to largest eigenvalue), but with more technical details available in \citet{bunger2014inverses}.
This goes beyond the asymptotic case modeled by tridiagonal matrices.

\textbf{Banded Robinson Toeplitz matrices} typically include
similarity matrices from DNA sequencing.
Actually, any Robinson Toeplitz matrix becomes banded under a thresholding operation.
Also, fast decaying Robinson matrices such as KMS matrices are almost banded.
There is a rich literature dedicated to the spectrum of generic banded Toeplitz matrices \citep{boeottcher2005spectral,gray2006toeplitz,bottcher2017asymptotics}.
However, it mostly provides asymptotic results on the spectra.
Notably, some results indicate that the eigenvectors of some banded symmetric Toeplitz matrices become, up to a rotation, close to the sinusoidal, almost equi-spaced eigenvectors observed in equations~\eqref{eqn:spectrum-tridiag} and \eqref{eqn:spectrum-KMS} \citep{bottcher2010structure,ekstrom2017eigenvalues}.

\subsection{Spectral properties of the Laplacian}\label{ssec:spectral-prop-lapl}
For circulant matrices $A$, $L_A$ and $A$ have the same eigenvectors since $L_A = \diag (A \ones) - A = c \idm - A$, with $c \triangleq \sum_{k=0}^{n-1} b_k$. For general symmetric Toeplitz matrices, this property no longer holds as $c_i = \sum_{j=1}^{n} b_{|i-j|}$ varies with $i$.
Yet, for fast decaying Toeplitz matrices, $c_i$ is almost constant except for $i$ at the edges, namely $i$ close to $1$ or to $n$.
Therefore, the eigenvectors of $L_A$ resemble those of $A$ except for the ``edgy'' entries.
\section{Recovering Ordering on Filamentary Structure}\label{sec:results}

We have seen that (some) similarity matrices $A$ with a latent ordering lead to a filamentary \kLE{d}.
The \kLE{d} integrates local proximity constraints together into a global consistent embedding. We expect isolated (or, uncorrelated) noise on $A$ to be averaged out by the spectral picture.
Therefore, we present Algorithm~\ref{algo:Recovery_order_filamentary} that redefines the similarity $S_{ij}$ between two items from their proximity within the \kLE{d}. Basically, it fits the points by a line \emph{locally}, in the same spirit as LLE, which makes sense when the data lies on a linear manifold (curve) embedded in $\reals^K$.
Note that Spectral Ordering (Algorithm~\ref{alg:spectral}) projects all points on a given line (it only looks at the first coordinates $f_{1}(i)$) to reorder them. Our method does so in a local neighborhood, allowing for reordering points on a curve with several oscillations.
We then run the basic Algorithms~\ref{alg:spectral} (or~\ref{alg:circular-2d-ordering} for \ref{eqn:circ-seriation}).
Hence, the \kLE{d} is eventually used to pre-process the similarity matrix.
\setlength{\textfloatsep}{10pt}
\begin{algorithm}[ht]
	\footnotesize
	\caption{Ordering Recovery on Filamentary Structure in $\reals^K$.}\label{algo:Recovery_order_filamentary}
	\begin{algorithmic} [1]
        \REQUIRE  A similarity matrix $A\in\mathcal{S}_n$, a neighborhood size $k \geq 2$, a dimension of the Laplacian Embedding $d$.
        \STATE $\Phi =\left( \yy_{1}^T, \ldots, \yy_{n}^T \right)^T \in \reals^{n \times d}\gets \text{\kLE{d}}(A)$ \hfill $\triangleright$ \text{Compute Laplacian Embedding}\label{line:laplacian_embedding_enhance_filament}
		\STATE Initialize $S = \idm_n$ \hfill $\triangleright$ \text{New similarity matrix}
		\FOR{$i=1,\ldots,n$}
			\STATE $V \gets  \{j \: : \: j \in k\text{-NN}(\yy_i) \}\cup \{i\}$ \hfill $\triangleright$ \text{find $k$ nearest neighbors of $\yy_i \in \reals^d$} \label{line:find_neighborhood}
			\STATE $w \gets  \text{LinearFit}(V)$ \hfill $\triangleright$ \text{fit $V$ by a line }
			\label{line:find_direction}
			\STATE $D_{uv} \gets |w^T (\yy_u - \yy_v) |$, for $u,v \in V$. \hfill \text{$\triangleright$ Compute distances on the line}
			\STATE  $S_{uv} \gets S_{uv} + D_{uv}^{-1}$, for $u,v \in V$. \hfill \text{$\triangleright$ Update similarity} \label{line:update-similarity}
		\ENDFOR
		\STATE Compute $\sigma^*$ from the matrix $S$ with Algorithm~\ref{alg:spectral} (resp., Algorithm~\ref{alg:circular-2d-ordering}) for a linear (resp., circular) ordering. \label{line:laplacian_embedding_enhance_basic_ordering}
		\ENSURE A permutation $\sigma^*$. 
	\end{algorithmic}
\end{algorithm}

In Algorithm~\ref{algo:Recovery_order_filamentary}, we compute a \kLE{d} in line~\ref{line:laplacian_embedding_enhance_filament} and then a \kLE{1} (resp., a \kLE{2}) for linear ordering (resp., a circular ordering) in line~\ref{line:laplacian_embedding_enhance_basic_ordering}. For reasonable number of neighbors $k$  in the $k$-NN of line 4 (in practice, $k=10$), the complexity of computing the \kLE{d} dominates Algorithm~\ref{algo:Recovery_order_filamentary}. We shall see in Section~\ref{sec:numerical_results} that our method, while being almost as computationally cheap as the base Algorithms~\ref{alg:spectral} and \ref{alg:circular-2d-ordering} (roughly only a factor 2), yields substantial improvements.
In line~\ref{line:update-similarity} we can update the similarity $S_{uv}$ by adding any non-increasing function of the distance $D_{uv}$, \eg, $D_{uv}^{-1}$, $\exp{\left(-D_{uv}\right)}$, or $-D_{uv}$ (the latter case requires to add an offset to $S$ afterwards to ensure it has non-negative entries. It is what we implemented in practice.)
In line~\ref{line:laplacian_embedding_enhance_basic_ordering}, the matrix $S$ needs to be connected in order to use Algorithm~\ref{alg:spectral}, which is not always verified in practice (for low values of $k$, for instance).
In that case, we reorder separately each connected component of $S$ with Algorithm~\ref{alg:spectral}, and then merge the partial orderings into a global ordering by using the input matrix $A$, as detailed in Algorithm~\ref{alg:connect_clusters}, Appendix~\ref{sec:appendix_algo}.

\section{Numerical Results}\label{sec:numerical_results}
\subsection{Synthetic Experiments}\label{ssec:synthetic-exps}
We performed synthetic experiments with noisy observations of Toeplitz matrices $A$, either linear ($\cR$) or circular ($\SCR$). We added a uniform noise on all the entries, with an amplitude parameter $a$ varying between 0 and $5$, with maximum value of the noise $a \| A \|_F$. The matrices $A$ used are either banded (sparse), with linearly decreasing entries when moving away from the diagonal, or dense, with exponentially decreasing entries (KMS matrices).
We used $n=500$, several values for the parameters $k$ (number of neighbors) and $d$ (dimension of the \kLE{d}), and various scalings of the \kLE{d} (parameter $\alpha$ in~\ref{eqn:scaled-kLE}), yielding similar results (see sensitivity to the number of neighbors $k$ and to the scaling~\ref{eqn:scaled-kLE} in Appendix~\ref{ssec:app-k-sensitivity}).
In an given experiment, the matrix $A$ is randomly permuted with a ground truth permutation $\pi^*$. We report the Kendall-Tau scores between $\pi^*$ and the solution of Algorithm~\ref{algo:Recovery_order_filamentary} for different choices of dimension $K$, for varying noise amplitude $a$, in Figure~\ref{fig:exp-main-banded}, for banded (circular) matrices.
For the circular case, the ordering is defined up to a shift. To compute a Kendall-Tau score from two permutations describing a circular ordering, we computed the best Kendall-Tau scores between the first permutation and all shifts from the second, as detailed in Algorithm~\ref{alg:circular-kendall-tau}.
The analog results for exponentially decaying (KMS) matrices are given in Appendix~\ref{ssec:app-KMS-main-exp}, Figure~\ref{fig:exp-main-KMS}.
For a given combination of parameters, the scores are averaged on 100 experiments and the standard-deviation divided by $\sqrt{n_{\text{exps}}} = 10$ (for ease of reading) is plotted in transparent above and below the curve.
The baseline (in blue) corresponds to the basic spectral method of Algorithm~\ref{alg:spectral} for linear and Algorithm~\ref{alg:circular-2d-ordering} for circular seriation.
Other lines correspond to given choices of the dimension of the \kLE{d}, as written in the legend.
\begin{figure}[hbt]
	\begin{center}
		\begin{subfigure}[htb]{0.45\textwidth}
			\includegraphics[width=\textwidth]{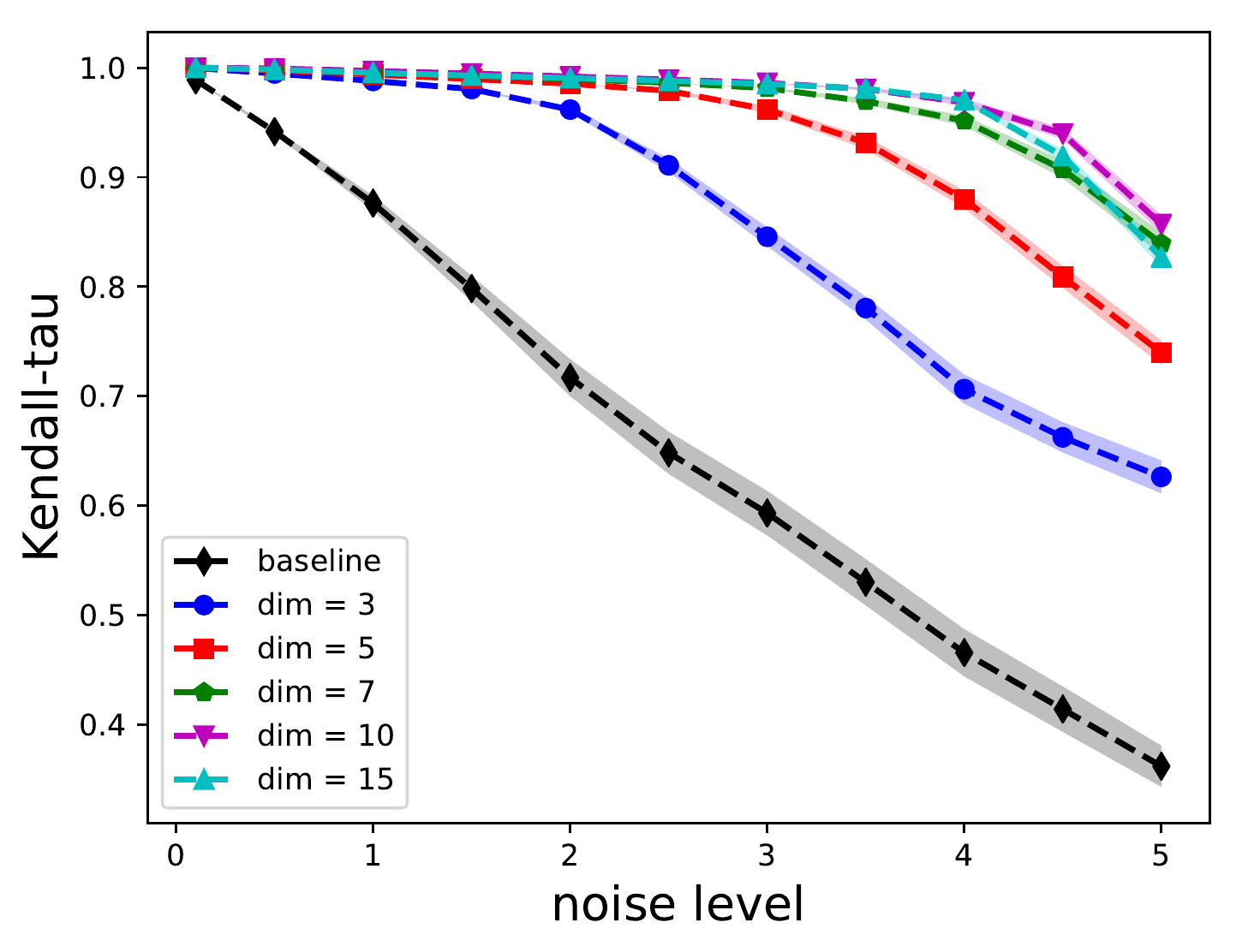}
			\caption{Linear Banded}\label{subfig:exps-lin-banded}
		\end{subfigure}
		\begin{subfigure}[htb]{0.45\textwidth}
			\includegraphics[width=\textwidth]{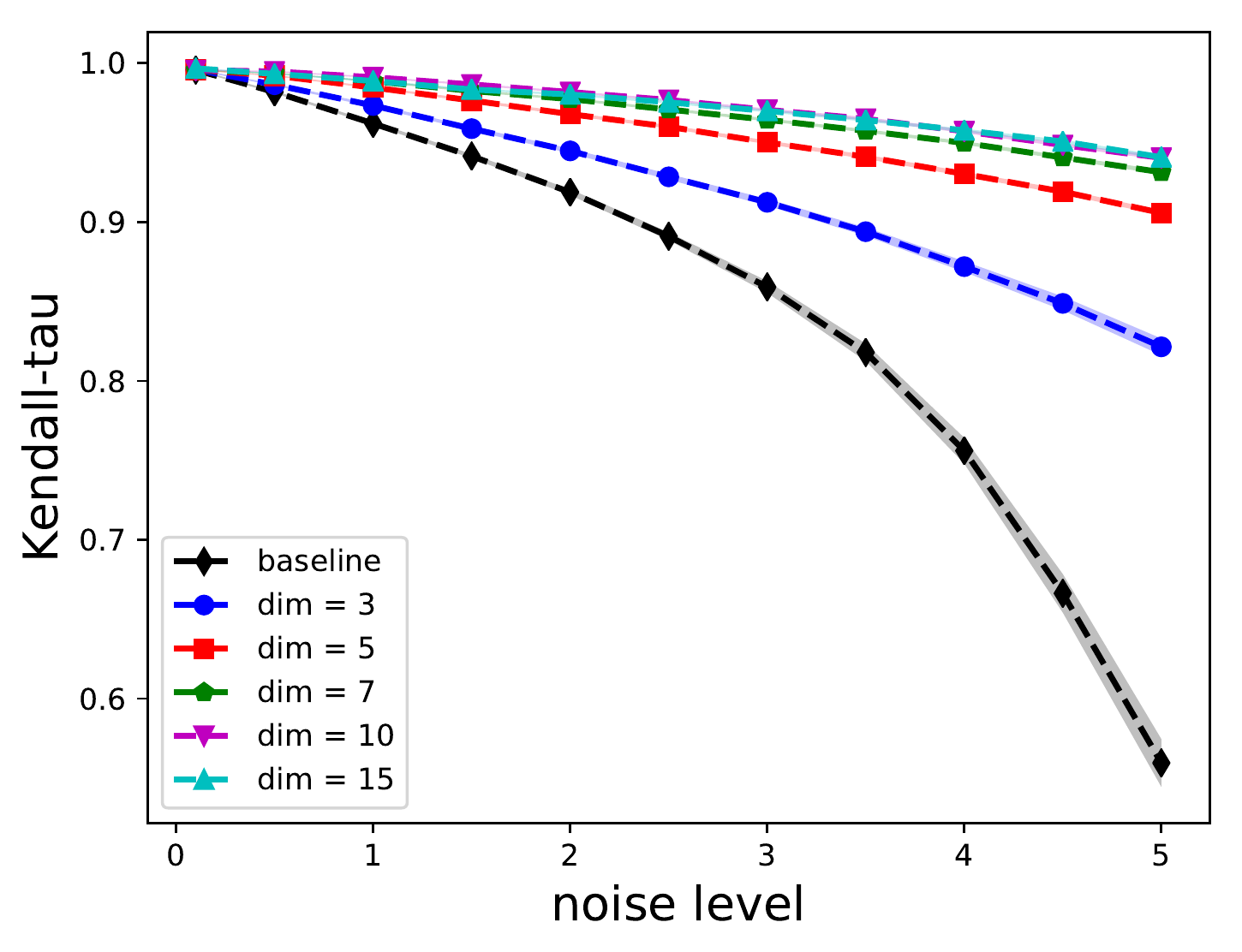}
			\caption{Circular Banded}\label{subfig:exps-circ-banded}
		\end{subfigure}
		\caption{
		Kendall-Tau scores for Linear (\ref{subfig:exps-lin-banded}) and Circular (\ref{subfig:exps-circ-banded}) Seriation for noisy observations of banded, Toeplitz, matrices, displayed for several values of the dimension parameter of the \kLE{d}($d$), for fixed number of neighbors $k=15$.
		}
		\label{fig:exp-main-banded}
	\end{center}
	\vskip -0.2in
\end{figure}

We observe that leveraging the additional dimensions of the \kLE{d} unused by the baseline methods Algorithm~\ref{alg:spectral} and~\ref{alg:circular-2d-ordering} substantially improves the robustness of Seriation. For instance, in Figure~\ref{subfig:exps-lin-banded}, the performance of Algorithm~\ref{algo:Recovery_order_filamentary} is almost optimal for a noise amplitude going from 0 to 4, when it falls by a half for Algorithm~\ref{alg:spectral}.
We illustrate the effect of the pre-processing of Algorithm~\ref{algo:Recovery_order_filamentary} in Figures~\ref{fig:exp-noisy-illustration} and~\ref{fig:exp-clean-illustration}, Appendix~\ref{ssec:illustrations}.

\subsection{Genome assembly experiment}
In {\it de novo} genome assembly, a whole DNA strand is reconstructed from randomly sampled sub-fragments (called {\it reads}) whose positions within the genome are unknown. The genome is oversampled so that all parts are covered by multiple reads with high probability. 
Overlap-Layout-Consensus (OLC) is a major assembly paradigm based on three main steps.
First, compute the overlaps between all pairs of read. This provides a similarity matrix $A$, whose entry $(i,j)$ measures how much reads $i$ and $j$ overlap (and is zero if they do not).
Then, determine the layout from the overlap information, that is to say find an ordering and positioning of the reads that is consistent with the overlap constraints. This step, akin to solving a one dimensional jigsaw puzzle, is a key step in the assembly process.
Finally, given the tiling of the reads obtained in the layout stage, the consensus step aims at determining the most likely DNA sequence that can be explained by this tiling. It essentially consists in performing multi-sequence alignments.

In the true ordering (corresponding to the sorted reads' positions along the genome), a given read overlaps much with the next one, slightly less with the one after it, and so on, until a point where it has no overlap with the reads that are further away. This makes the read similarity matrix Robinson and roughly band-diagonal (with non-zero values confined to a diagonal band). 
Finding the layout of the reads therefore fits the \ref{eqn:seriation} framework (or \ref{eqn:circ-seriation} for circular genomes, as illustrated in Supplementary Figure~\ref{fig:circular-genome-illustration}).
In practice however, there are some repeated sequences (called {\it repeats}) along the genome that induce false positives in the overlap detection tool \citep{Pop04}, resulting in non-zero similarity values outside (and possibly far away) from the diagonal band. The similarity matrix ordered with the ground truth is then the sum of a Robinson band matrix and a sparse ``noise'' matrix, as in Figure~\ref{subfig:ecoli-sim-mat}.
Because of this sparse ``noise'', the basic spectral Algorithm~\ref{alg:spectral} fails to find the layout, as the quadratic loss appearing in \ref{eqn:2sum} is sensitive to outliers.
\citet{recanati2018robust} tackle this issue by modifying the loss in \ref{eqn:2sum} to make it more robust.
Instead, we show that the simple multi-dimensional extension proposed in Algorithm~\ref{algo:Recovery_order_filamentary} suffices to capture the ordering of the reads despite the repeats.

\begin{figure}[hbt]
	\begin{center}
		\begin{subfigure}[htb]{0.45\textwidth}
			\includegraphics[width=\textwidth]{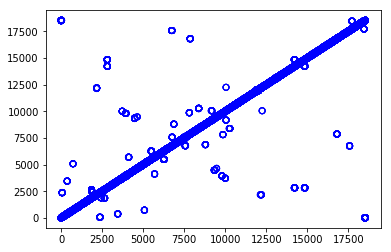}
			\caption{similarity matrix}\label{subfig:ecoli-sim-mat}
		\end{subfigure}
		\begin{subfigure}[htb]{0.45\textwidth}
			\includegraphics[width=\textwidth]{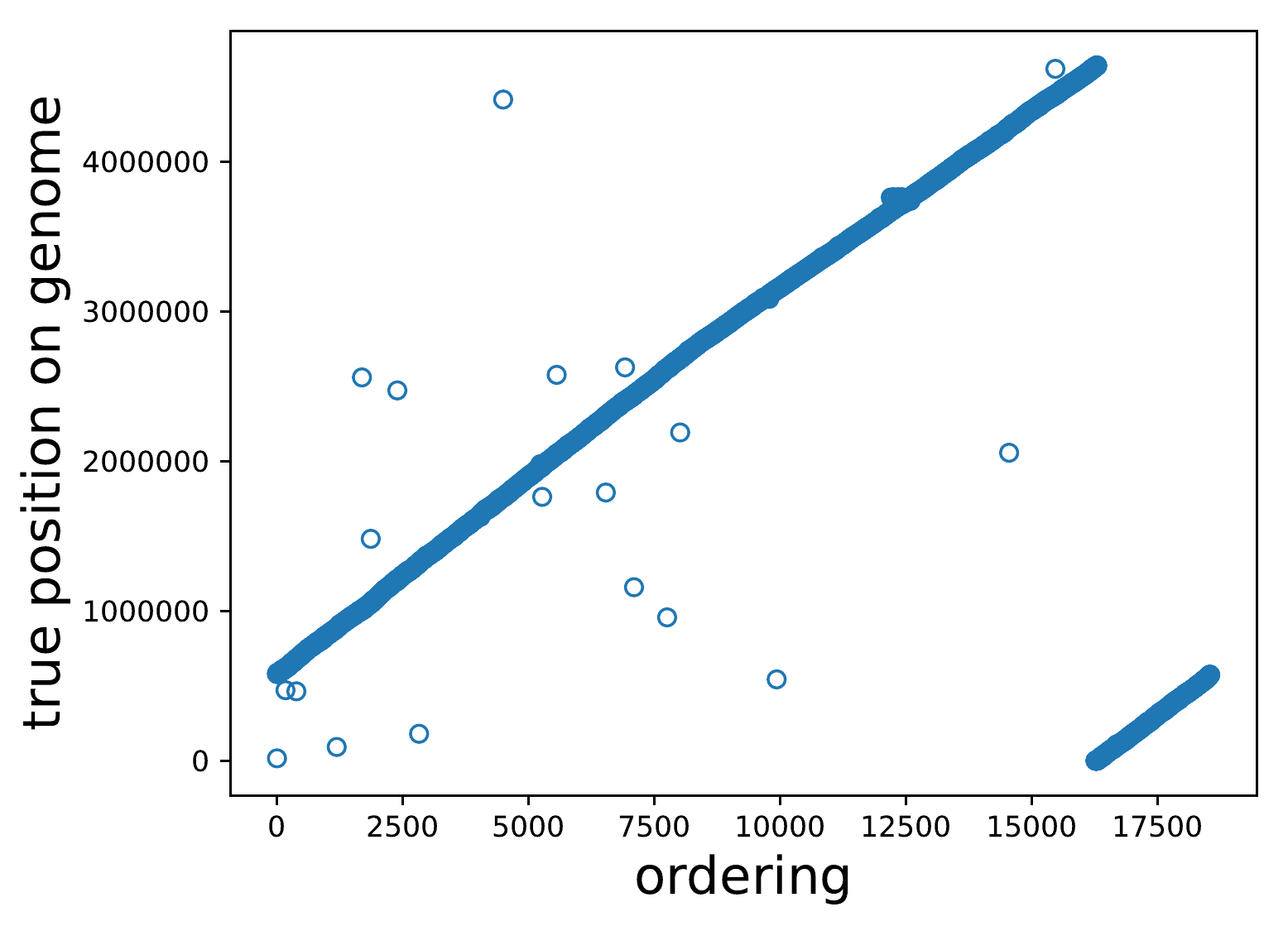}
			\caption{ordering found}\label{subfig:ecoli-total-ordering}
		\end{subfigure}
		\caption{
			Overlap-based similarity matrix (\ref{subfig:ecoli-sim-mat}) from {\it E. coli} reads, and the ordering found with Algorithm~\ref{algo:Recovery_order_filamentary} (\ref{subfig:ecoli-total-ordering}) versus the position of the reads within a reference genome obtained by mapping to a reference with minimap2.
			The genome being circular, the ordering is defined up to a shift, which is why we observe two lines instead of one in (\ref{subfig:ecoli-total-ordering}).
		}
		\label{fig:ecoli-exp}
	\end{center}
	\vskip -0.2in
\end{figure}

We used our method to perform the layout of a {\it E. coli} bacterial genome. We used reads sequenced with third-generation sequencing data, and computed the overlaps with dedicated software, as detailed in Appendix~\ref{ssec:supp-genome-assembly-exp}.
The new similarity matrix $S$ computed from the embedding in Algorithm~\ref{algo:Recovery_order_filamentary} was disconnected, resulting in several connected component instead of one global ordering (see Figure~\ref{subfig:ecoli-partial-orderings}). However, the sub-orderings could be unambiguously merged into one in a simple way described in Algorithm~\ref{alg:connect_clusters}, resulting in the ordering shown in Figure~\ref{subfig:ecoli-total-ordering}.
The Kendall-Tau score between the ordering found and the one obtained by sorting the position of the reads along the genome (obtained by mapping the reads to a reference with minimap2 \citep{li2018minimap2}) is of 99.5\%, using Algorithm~\ref{alg:circular-kendall-tau} to account for the circularity of the genome.

\section{Conclusion}
In this paper, we bring together results that shed light on the filamentary structure of the Laplacian embedding of serial data.
It allows for tackling~\ref{eqn:seriation} and~\ref{eqn:circ-seriation} in a unifying framework.
Notably, we provide theoretical guarantees for~\ref{eqn:circ-seriation} analog to those existing for~\ref{eqn:seriation}.
These do not make assumptions about the underlying generation of the data matrix, and can be verified {\it a posteriori} by the practitioner.
Then, we propose a simple method to leverage the filamentary structure of the embedding.
It can be seen as a pre-processing of the similarity matrix.
Although the complexity is comparable to the baseline methods, experiments on synthetic and real data indicate that this pre-processing substantially improves robustness to noise.

\section*{Acknowledgements}
AA is at the d\'epartement d'informatique de l'ENS, \'Ecole normale sup\'erieure, UMR CNRS 8548, PSL Research University, 75005 Paris, France, and INRIA Sierra project-team. The authors would like to acknowledge support
from the \textit{data science} joint research initiative with the \textit{fonds AXA pour la recherche} and Kamet Ventures. TK acknowledges funding from the CFM-ENS chaire {\em les mod\`eles et sciences des donn\'ees}. Finally the authors would like to thanks Raphael Berthier for fruitfull discussions.

\bibliographystyle{apalike}
\bibliography{biblio}{}

\clearpage
\appendix
\textbf{Notation:} We will commonly denote $\sigma$ a permutation of $\{1,\ldots,n\}$ and $\mathfrak{S}$ the set of all such permutations. When represented matricially, $\sigma$ will often be noted $\Pi$ while cyclic permutation of $\{1,\ldots,n\}$ will be noted as $\tau$. $A$ will usually denote the matrix of raw pair-wise similarities. $S$ will denote the similarity matrix resulting from Algorithm~\ref{algo:Recovery_order_filamentary}, and $k$ a neighboring parameter. Finally we use indexed version $\nu$ (resp., $\lambda$) to denote eigenvalues of a similarity matrix (resp. a graph Laplacian).

\section{Additional Algorithms}\label{sec:appendix_algo}

\subsection{Merging connected components}
The new similarity matrix $S$ computed in Algorithm~\ref{algo:Recovery_order_filamentary} is not necessarily the adjacency matrix of a connected graph, even when the input matrix $A$ is. For instance, when the number of nearest neighbors $k$ is low and the points in the embedding  are non uniformly sampled along a curve, $S$ may have several, disjoint connected components (let us say there are $C$ of them in the following).
Still, the baseline Algorithm~\ref{alg:spectral} requires a connected similarity matrix as input.
When $S$ is disconnected, we run \ref{alg:spectral} separately in each of the $C$ components, yielding $C$ sub-orderings instead of a global ordering.

However, since $A$ is connected, we can use the edges of $A$ between the connected components to merge the sub-orderings together.
Specifically, given the $C$ ordered subsequences, we build a meta similarity matrix between them as follows. For each pair of ordered subsequences $(c_i, c_j)$, we check whether the elements in one of the two ends of $c_i$ have edges with those in one of the two ends of $c_j$ in the graph defined by $A$. According to that measure of similarity and to the direction of these meta-edges (\ie, whether it is the beginning or the end of $c_i$ and $c_j$ that are similar), we merge together the two subsequences that are the closest to each other. We repeat this operation with the rest of the subsequences and the sequence formed by the latter merge step, until there is only one final sequence, or until the meta similarity between subsequences is zero everywhere.
We formalize this procedure in the greedy Algorithm~\ref{alg:connect_clusters}, which is implemented  in the package at \url{https://github.com/antrec/mdso}.

Given $C$ reordered subsequences (one per connected component of $S$) $(c_i)_{i=1,\ldots,C}$, that form  a partition of $\{1,\ldots,n\}$, and a window size $h$ that define the length of the ends we consider ($h$ must be smaller than half the smallest subsequence), we denote by $c_i^-$ (resp. $c_i^+$) the first (resp. the last) $h$ elements of $c_i$, and $a(c_i^{\epsilon}, c_j^{\epsilon^\prime}) = \sum_{u \in c_i^{\epsilon}, v \in c_j^{\epsilon^\prime}} A_{uv}$ is the similarity between the ends $c_i^{\epsilon}$ and $c_j^{\epsilon^\prime}$, for any pair $c_i, c_j$,  $i \neq j \in \{1, \ldots, C\}$, and any combination of ends $\epsilon, \epsilon^\prime \in \{+,-\}$.
Also, we define the meta-similarity between $c_i$ and $c_j$ by,
\begin{eqnarray}\label{eqn:meta-sim-merge}
s(c_i,c_j) \triangleq \text{max}(a(c_i^+,c_j^+), a(c_i^+,c_j^-), a(c_i^-,c_j^+), a(c_i^-,c_j^-))~,
\end{eqnarray}
and $(\epsilon_i, \epsilon_j) \in \{+,-\}^2$ the combination of signs where the argmax is realized, \ie, such that $s(c_i,c_j) = a(c_i^{\epsilon_i}, c_j^{\epsilon_j})$.
Finally, we will use $\bar{c}_i$ to denote the ordered subsequence $c_i$ read from the end to the beginning, for instance if $c=(1,\ldots, n)$,  then $\bar{c} = (n, \ldots, 1)$.

\begin{algorithm}[H]
	\caption{Merging connected components}\label{alg:connect_clusters}
	\begin{algorithmic} [1]
		\REQUIRE $C$ ordered subsequences forming a partition $P=(c_1,\ldots,c_C)$ of $\{1,\ldots,n\}$, an initial similarity matrix $A$, a neighborhood parameter $h$. 
		\WHILE{$C > 1$}
			\STATE Compute meta-similarity $\tilde{S}$ such that $\tilde{S}_{ij} = s(c_i, c_j)$, and meta-orientation $(\epsilon_i, \epsilon_j)$, for all pairs of subsequences with equation~\ref{eqn:meta-sim-merge}.
			\IF{$\tilde{S} = 0$} \STATE break \ENDIF \label{line:break-merge}
			\STATE find $(i,j) \in \argmax \tilde{S}$, and $(\epsilon_i, \epsilon_j)$ the corresponding orientations.
			\IF{$(\epsilon_i, \epsilon_j) = (+,-)$}
				\STATE $c^{\text{new}} \gets (c_i, c_j)$
			\ELSIF{$(\epsilon_i, \epsilon_j) = (+,+)$}
				\STATE $c^{\text{new}} \gets (c_i, \bar{c}_j)$
			\ELSIF{$(\epsilon_i, \epsilon_j) = (-,-))$}
				\STATE $c^{\text{new}} \gets (\bar{c}_i, c_j)$
			\ELSIF{$(\epsilon_i, \epsilon_j) = (-,+))$}
				\STATE $c^{\text{new}} \gets (\bar{c}_i, \bar{c}_j)$
			\ENDIF
			\STATE Remove $c_i$ and $c_j$ from $P$.
			\STATE Add $c^{\text{new}}$ to $P$.
			\STATE $C \gets C - 1$
		\ENDWHILE
		\ENSURE Total reordered sequence $c^{\text{final}}$, which is a permutation if $C=1$ or a set of reordered subsequences if the loop broke at line~\ref{line:break-merge}.
	\end{algorithmic}
\end{algorithm}

\subsection{Computing Kendall-Tau score between two permutations describing a circular ordering}
Suppose we have data having a circular structure, \ie, we have $n$ items that can be laid on a circle such that the higher the similarity between two elements is, the closer they are on the circle.
Then, given an ordering of the points that respects this circular structure (\ie, a solution to \ref{eqn:circ-seriation}), we can shift this ordering without affecting the circular structure.
For instance, in Figure~\ref{fig:circular-shift-illustration}, the graph has a $\circR$ affinity matrix whether we use the indexing printed in black (outside the circle), or a shifted version printed in purple (inside the circle).
\begin{figure}[hbt]
	\begin{center}
		\includegraphics[width=.35\textwidth]{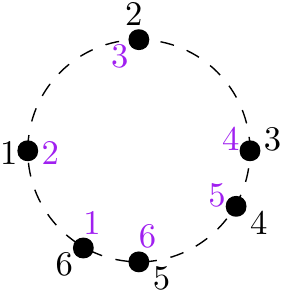}
		\caption{
			Illustration of the shift-invariance of permutations solution to a \ref{eqn:circ-seriation} problem.
		}
		\label{fig:circular-shift-illustration}
	\end{center}
\end{figure}
Therefore, we transpose the Kendall-Tau score between two permutations to the case where we want to compare the two permutations up to a shift with Algorithm~\ref{alg:circular-kendall-tau}
\begin{algorithm}[H]
	\caption{Comparing two permutation defining a circular ordering}\label{alg:circular-kendall-tau}
	\begin{algorithmic} [1]
		\REQUIRE Two permutations vectors of size $n$, $\sigma = \left(\sigma(1), \ldots, \sigma(n)\right)$ and $\pi = \left(\pi(1), \ldots, \pi(n)\right)$
		\FOR{$i=1$ \TO $n$}
		\STATE $KT(i) \gets \text{Kendall-Tau}(\sigma, \left(\pi(i), \pi(i+1), \ldots, \pi(n), \pi(1), \ldots, \pi(i-1)\right))$ 
		\ENDFOR
		\STATE $\text{best score} \gets \max_{i=1,\ldots,n} KT(i)$
		\ENSURE best score
	\end{algorithmic}
\end{algorithm}

\section{Additional Numerical Results}

\subsection{Genome assembly experiment (detailed)}\label{ssec:supp-genome-assembly-exp}
Here we provide background about the application of seriation methods for genome assembly and details about our experiment.
We used the {\it E. coli} reads from \citet{Loman15}. They were sequenced with Oxford Nanopore Technology (ONT) MinION device. The sequencing experiment is detailed in \url{http://lab.loman.net/2015/09/24/first-sqk-map-006-experiment} where the data is available.
The overlaps between raw reads were computed with minimap2 \citep{li2018minimap2} with the ONT preset.
The similarity matrix was constructed directly from the output of minimap2. For each pair $(i,j)$ of reads where an overlap was found, we let the number of matching bases be the similarity value associated (and zero where no overlap are found). The only preprocessing on the matrix is that we set a threshold to remove short overlaps. In practice we set the threshold to the median of the similarity values, \ie, we discard the lower half of the overlaps.
We then apply our method to the similarity matrix.
The laplacian embedding is shown in Figure~\ref{subfig:ecoli-3d-embedding}. We used no scaling of the Laplacian as it corrupted the filamentary structure of the embedding, but we normalized the similarity matrix beforehand with $W \gets D^{-1} W D^{-1}$ as in \citet{coifman2006diffusion}.
The resulting similarity matrix $S$ computed from the embedding in Algorithm~\ref{algo:Recovery_order_filamentary} is disconnected. Then, Algorithm~\ref{alg:spectral} is applied in each connected component, yielding a fragmented assembly with correctly ordered contigs, as shown in Figure~\ref{subfig:ecoli-partial-orderings}.
However, if the new similarity matrix $S$ is disconnected, the input matrix $A$ is connected. The fragmentation happened while ``scanning'' the nearest-neighbors from the embedding.
One can therefore merge the ordered contigs using the input matrix $A$ as follows.
For each contig, we check from $A$ if there are non-zero overlaps between reads at the edges of that contig and some reads at the edges of another contig. If so, we merge the two contigs, and repeat the procedure until there is only one contig left (or until there is no more overlaps between edges from any two contigs). This procedure is detailed in Algorithm~\ref{alg:connect_clusters}.
Note that the {\it E. coli}  genome is circular, therefore computing the layout should be casted as a \ref{eqn:circ-seriation} problem, as illustrated in Figure~\ref{fig:circular-genome-illustration}. Yet, since the genome is fragmented in subsequences since $S$ is disconnected, we end up using Algorithm~\ref{alg:spectral} in each connected component, \ie, solving an instance of \ref{eqn:seriation} in each contig.
\begin{figure}[hbt]
	\begin{center}
			\includegraphics[width=.4\textwidth]{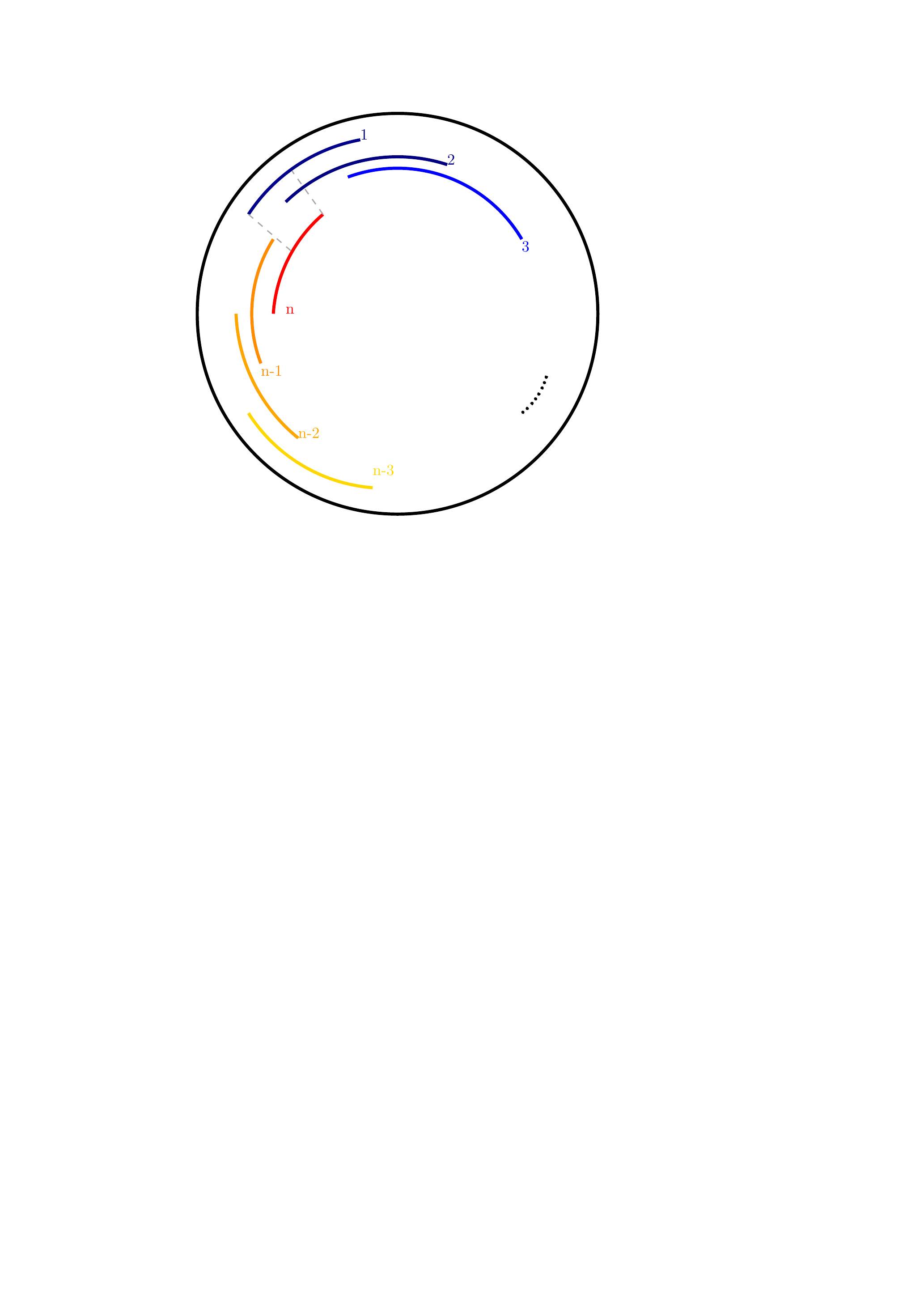}
		\caption{
			Illustration of why the overlap-based similarity matrix of an ideal circular genome should be $\circR$.
		}
		\label{fig:circular-genome-illustration}
	\end{center}
\end{figure}

\begin{figure}[hbt]
	\begin{center}
		\begin{subfigure}[htb]{0.45\textwidth}
			\includegraphics[width=\textwidth]{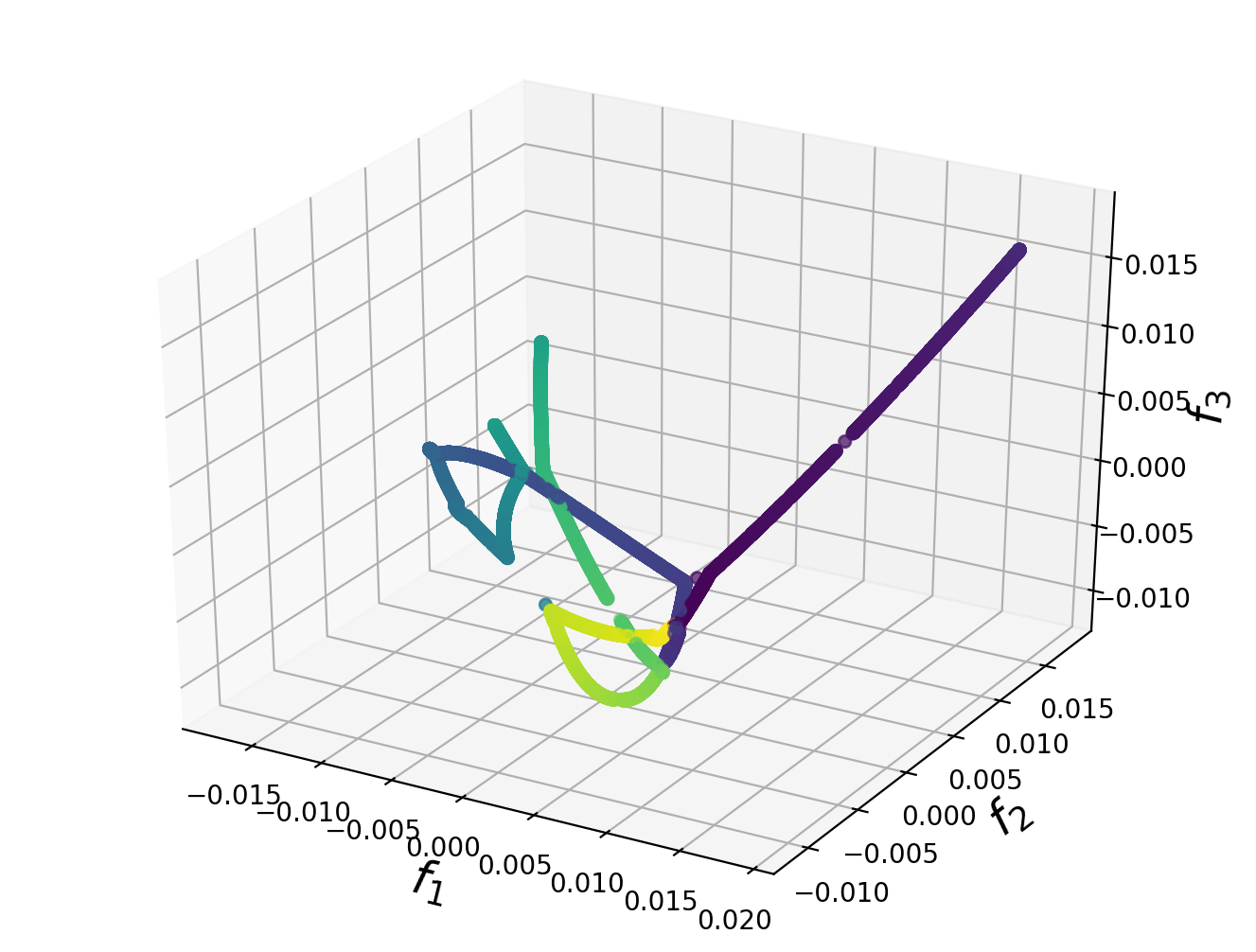}
			\caption{\kLE{3}}\label{subfig:ecoli-3d-embedding}
		\end{subfigure}
		\begin{subfigure}[htb]{0.45\textwidth}
			\includegraphics[width=\textwidth]{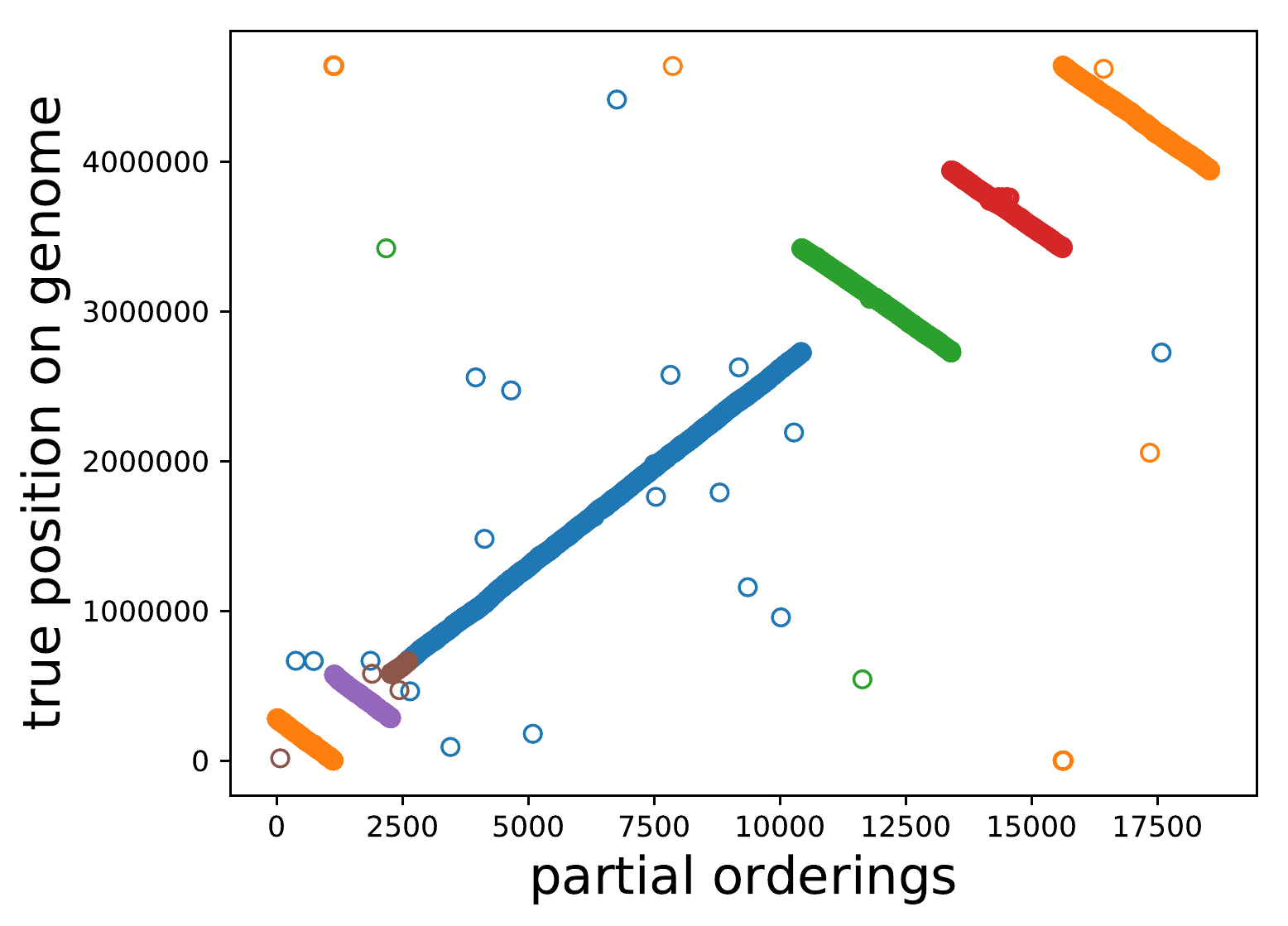}
			\caption{partial orderings}\label{subfig:ecoli-partial-orderings}
		\end{subfigure}
		\caption{
			3d Laplacian embedding from {\it E. coli} reads overlap-based similarity matrix (\ref{subfig:ecoli-3d-embedding}), and the orderings found in each connected component of the new similarity matrix created in Algorithm~\ref{algo:Recovery_order_filamentary} (\ref{subfig:ecoli-partial-orderings}) versus the position of the reads within a reference genome obtained by mapping tge reads to the reference with minimap2 (all plotted on the same plot for compactness).
			The orderings have no absolute direction, \ie, $(1,2,\ldots,n)$ and $(n,n-1,\ldots,1)$ are equivalent, which is why the lines in subfigure~\ref{subfig:ecoli-partial-orderings} can be either diagonal or anti-diagonal.
		}
		\label{fig:ecoli-exp-supp}
	\end{center}
	\vskip -0.2in
\end{figure}

The experiment can be reproduced with the material on \url{https://github.com/antrec/mdso}, and the parameters easily varied.
Overall, the final ordering found is correct when the threshold on the overlap-based similarity is sufficient (in practice, above $\sim40\%$ of the non-zero values).
When the threshold increases or when the number of nearest neighbors $k$ from Algorithm~\ref{algo:Recovery_order_filamentary} decreases, the new similarity matrix $S$ gets more fragmented, but the final ordering remains the same after the merging procedure.

\subsection{Gain over baseline}
In Figure~\ref{fig:exp-main-banded}, each curve is the mean of the Kendall-tau (a score directly interpretable by practitioners) over many different Gaussian random realizations of the noise. The shaded confidence interval represents the area in which the true expectation is to be with high probability but not the area in which the score of an experiment with a given noisy similarity would be. As mentioned in the main text, the shaded interval is the standard deviation divided by $\sqrt{n_{\text{exps}}}$, since otherwise the plot was hard to read, as the intervals crossed each others.

Practitioners may use this method in one-shot (e.g. for one particular data-set). In that case, it would be more relevant to show directly the standard deviation on the plots, which is the same as what is displayed, but multiplied by 10. Then, the confidence intervals between the baseline and our method would cross each other. However, the standard deviation on all experiments is due to the fact that some instances are more difficult to solve than some others. On the difficult instances, the baseline and our method perform more poorly than on easy instances.
However, we  also computed the gain over the baseline, \ie, the difference of score between our method and the baseline, for each experiment, and it is always, or almost always positive, \ie, our method almost always beats the baseline although the confidence intervals cross each other.

\subsection{Numerical results with KMS matrices}\label{ssec:app-KMS-main-exp}
In Figure~\ref{fig:exp-main-KMS} we show the same plots as in Section~\ref{sec:numerical_results} but with matrices $A$ such that $A_{ij} = e^{\alpha |i-j|}$, with $\alpha=0.1$ and $n=500$.

\begin{figure}[hbt]
	\begin{center}
		\begin{subfigure}[htb]{0.45\textwidth}
			\includegraphics[width=\textwidth]{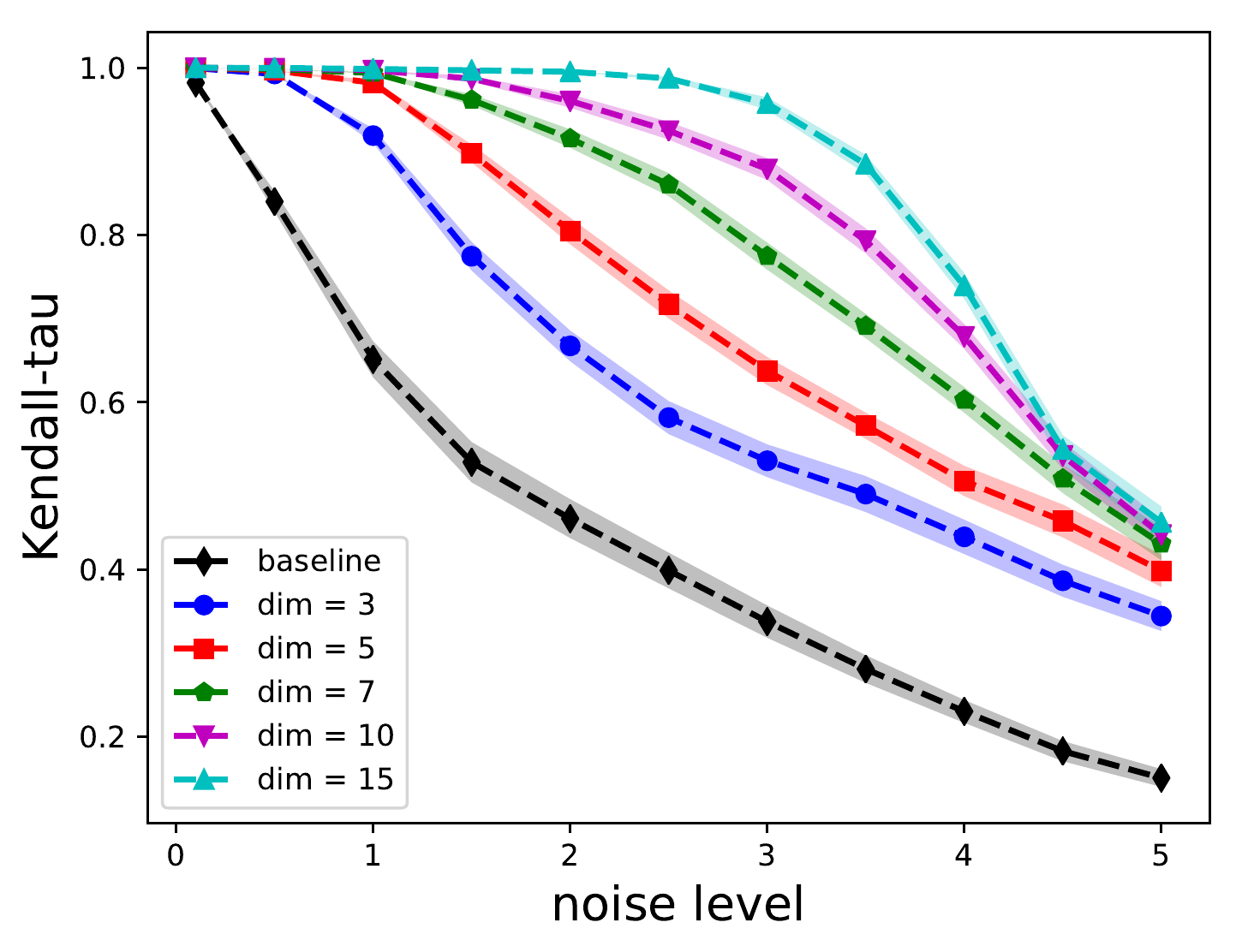}
			\caption{Linear KMS}\label{subfig:exps-lin-KMS}
		\end{subfigure}
		\begin{subfigure}[htb]{0.45\textwidth}
			\includegraphics[width=\textwidth]{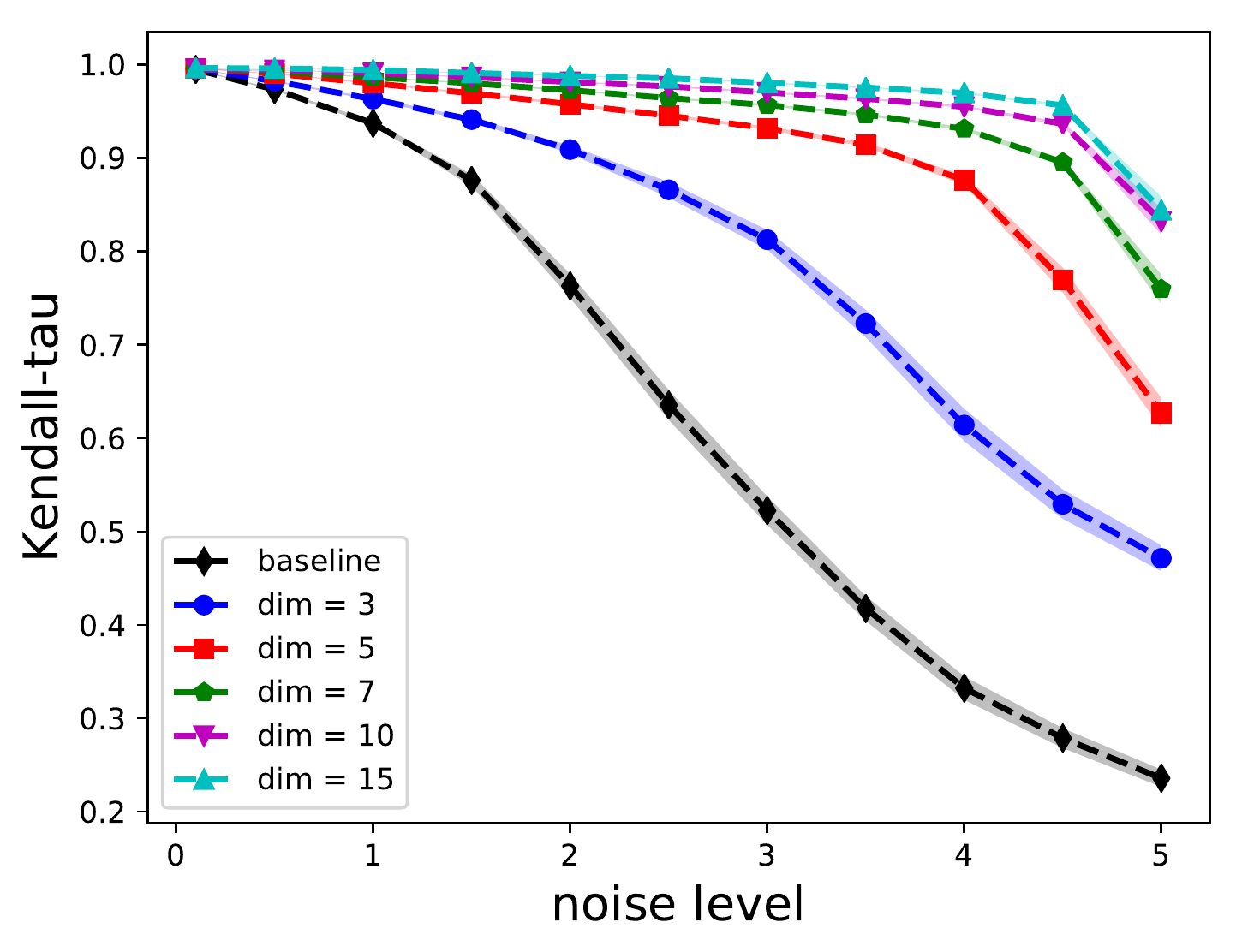}
			\caption{Circular KMS}\label{subfig:exps-circ-KMS}
		\end{subfigure}
		\caption{
		K-T scores for Linear (\ref{subfig:exps-lin-KMS}) and Circular (\ref{subfig:exps-circ-KMS}) Seriation for noisy observations of KMS, Toeplitz, matrices, displayed for several values of the dimension parameter of the \kLE{d}.
		}
		\label{fig:exp-main-KMS}
	\end{center}
	\vskip -0.2in
\end{figure}

\subsection{Sensitivity to parameter $k$ (number of neighbors)}\label{ssec:app-k-sensitivity}
Here we show how our method performs when we vary the parameter $k$ (number of neighbors at step 4 of Algorithm~\ref{algo:Recovery_order_filamentary}), for both linearly decrasing, banded matrices, $A_{ij} = \max \left( c - |i-j|, 0, \right)$ (as in Section~\ref{sec:numerical_results}), in Figure~\ref{fig:exp-ksensitivity-banded} and with matrices $A$ such that $A_{ij} = e^{\alpha |i-j|}$, with $\alpha=0.1$ (Figure~\ref{fig:exp-ksensitivity-KMS}.

\begin{figure}[hbt]
	\begin{center}
		\begin{subfigure}[htb]{0.45\textwidth}
			\includegraphics[width=\textwidth]{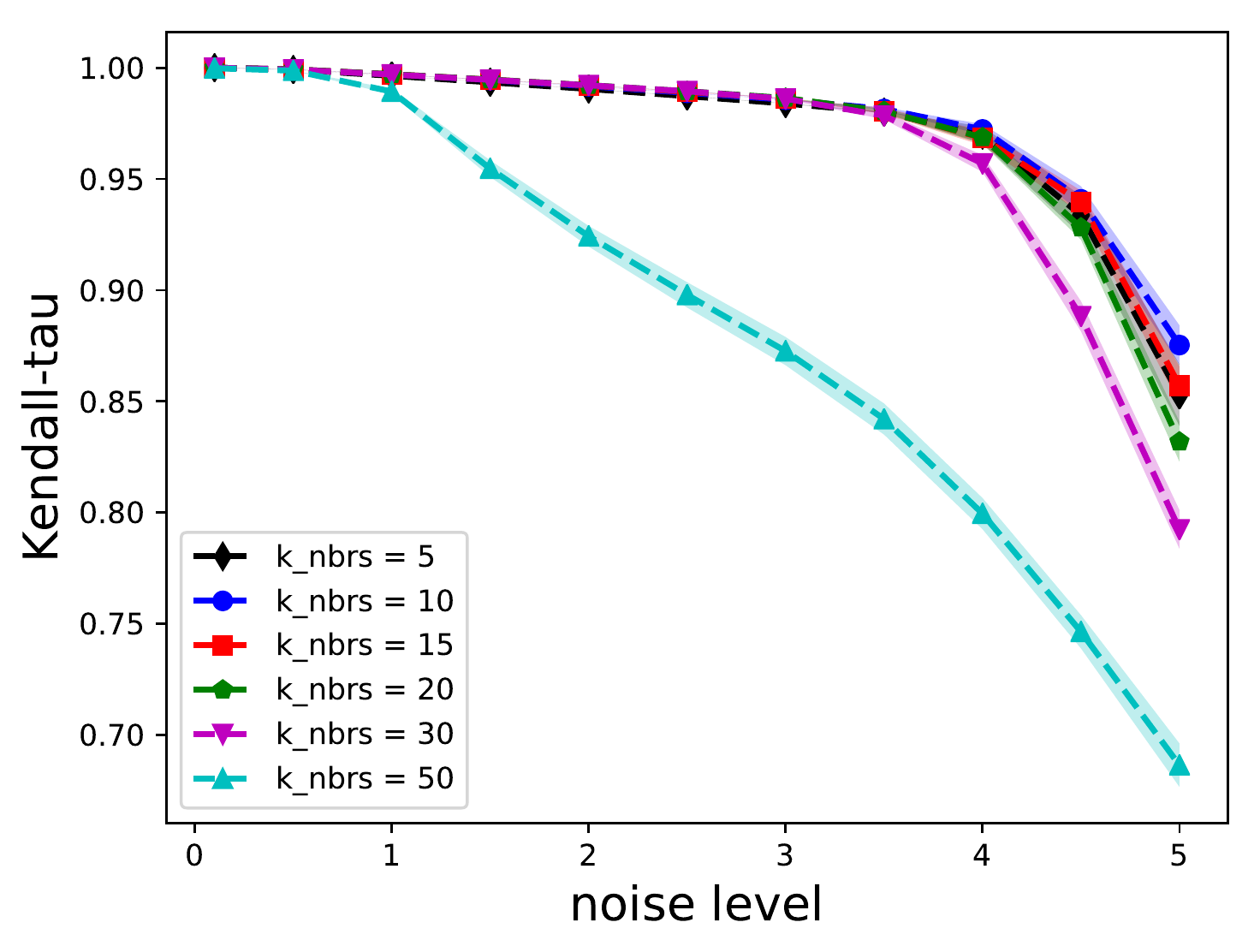}
			\caption{Linear Banded}\label{subfig:exps-lin-banded-ksensitivity}
		\end{subfigure}
		\begin{subfigure}[htb]{0.45\textwidth}
			\includegraphics[width=\textwidth]{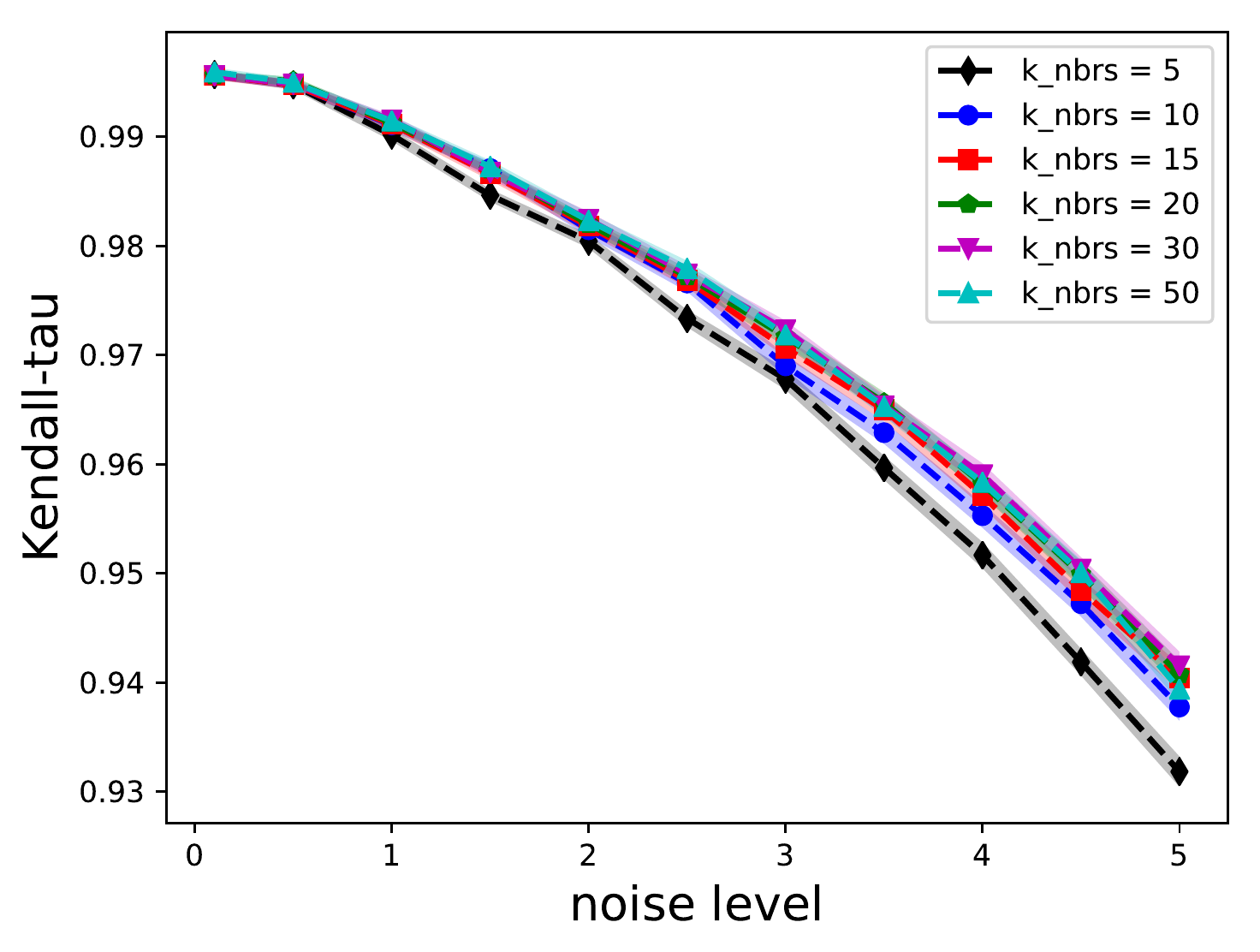}
			\caption{Circular Banded}\label{subfig:exps-circ-banded-ksensitivity}
		\end{subfigure}
		\caption{
		K-T scores for Linear (\ref{subfig:exps-lin-banded-ksensitivity}) and Circular (\ref{subfig:exps-circ-banded-ksensitivity}) Seriation for noisy observations of banded, Toeplitz, matrices, displayed for several values of the number of nearest neighbors $k$, with a fixed value of the dimension of the \kLE{d}, $d=10$.
		}
		\label{fig:exp-ksensitivity-banded}
	\end{center}
	\vskip -0.2in
\end{figure}

\begin{figure}[hbt]
	\begin{center}
		\begin{subfigure}[htb]{0.45\textwidth}
			\includegraphics[width=\textwidth]{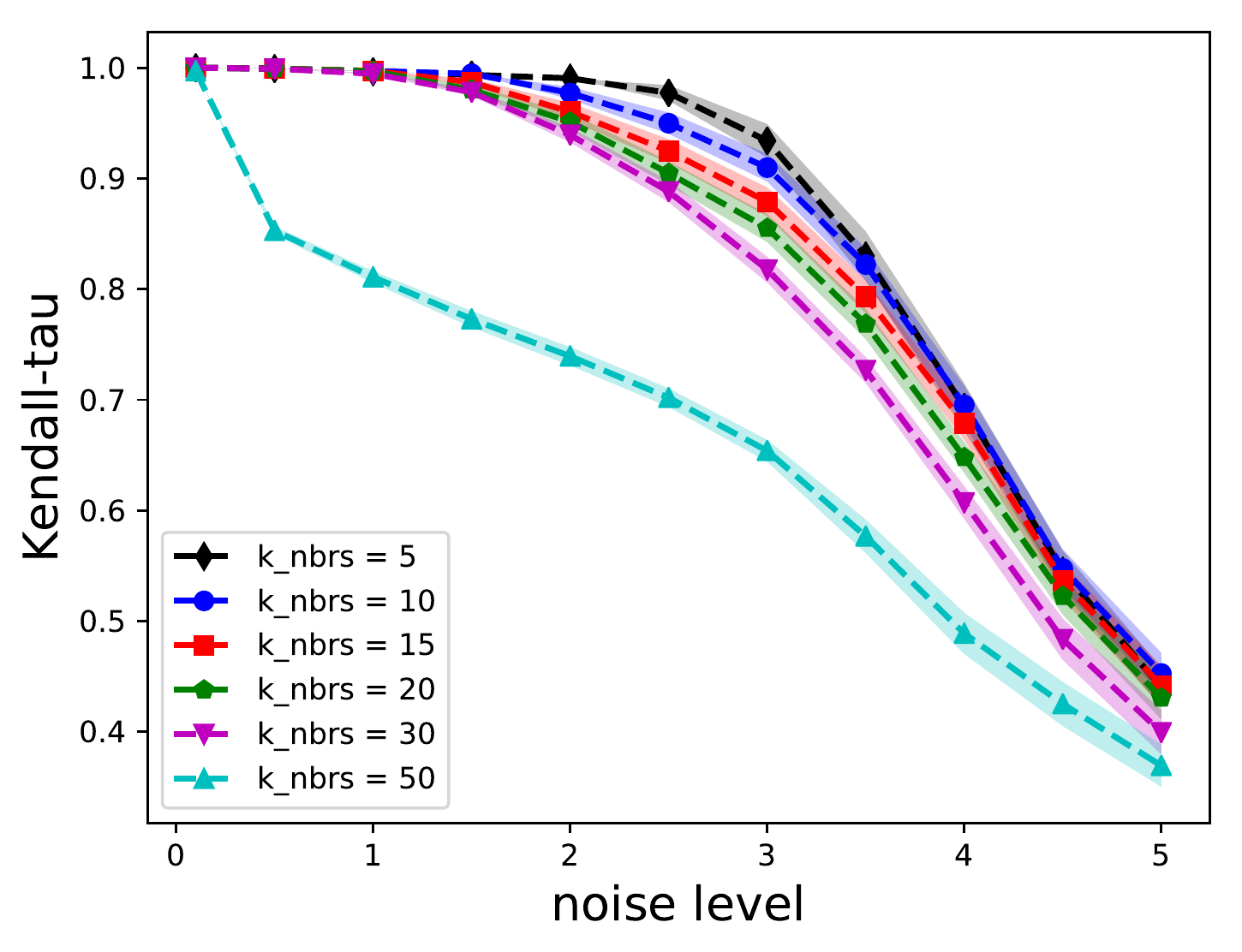}
			\caption{Linear KMS}\label{subfig:exps-lin-KMS-ksensitivity}
		\end{subfigure}
		\begin{subfigure}[htb]{0.45\textwidth}
			\includegraphics[width=\textwidth]{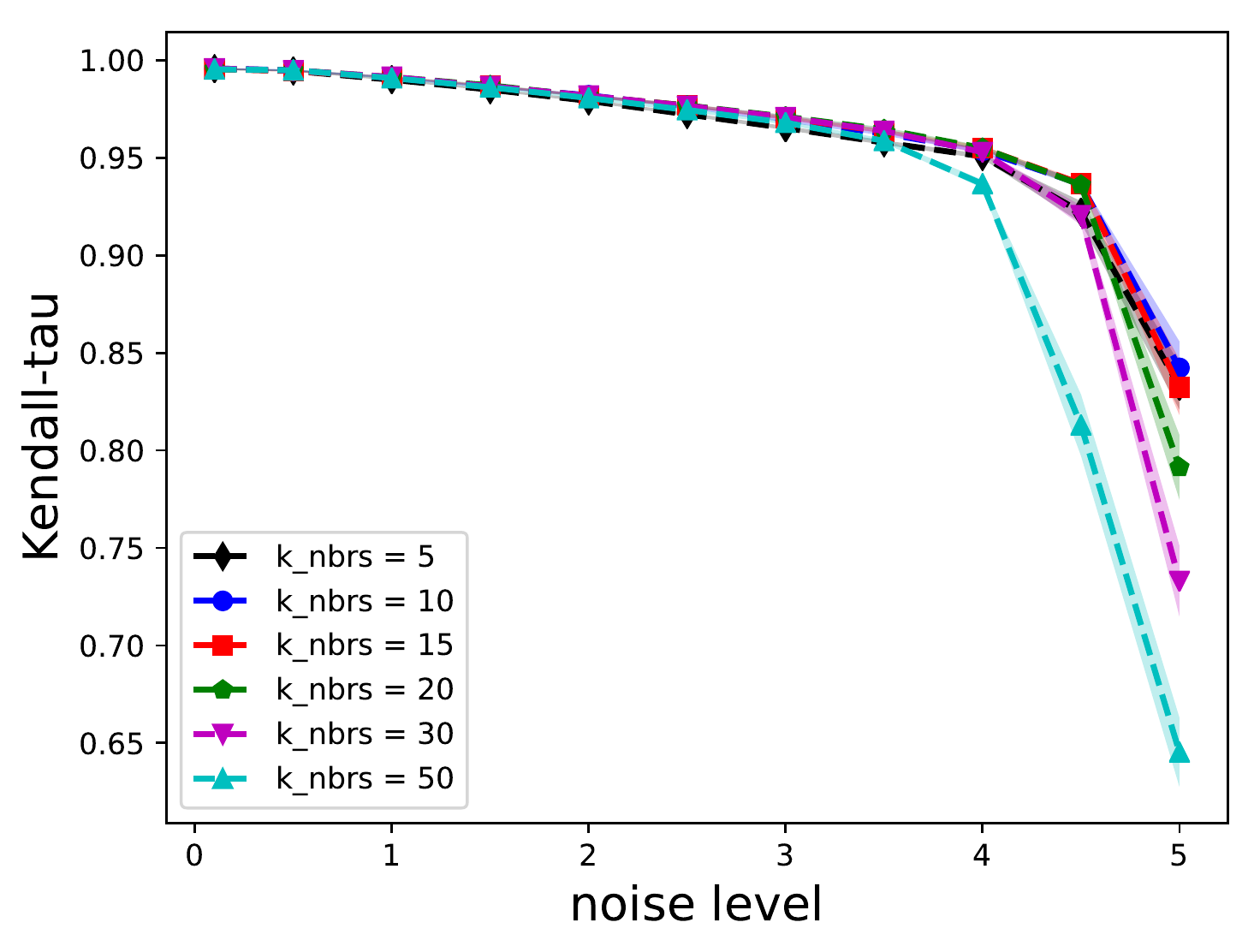}
			\caption{Circular KMS}\label{subfig:exps-circ-KMS-ksensitivity}
		\end{subfigure}
		\caption{
		K-T scores for Linear (\ref{subfig:exps-lin-KMS-ksensitivity}) and Circular (\ref{subfig:exps-circ-KMS-ksensitivity}) Seriation for noisy observations of KMS, Toeplitz, matrices, displayed for several values of the number of nearest neighbors $k$, with a fixed value of the dimension of the \kLE{d}, $d=10$.
		}
		\label{fig:exp-ksensitivity-KMS}
	\end{center}
	\vskip -0.2in
\end{figure}

We observe that the method performs roughly equally well with $k$ in a range from 5 to 20, and that the performances drop when $k$ gets too large, around $k=30$. This can be interpreted as follows. When $k$ is too large, the assumption that the points in the embedding are locally fitted by a line no longer holds.
Note also that in practice, for small values of $k$, \eg, $k=5$, the new similarity matrix $S$ can be disconnected, and we have to resort to the merging procedure described in Algorithm~\ref{alg:connect_clusters}.

\subsection{Sensitivity to the normalization of the Laplacian}\label{ssec:scaling-sensitivity}
We performed experiments to compare the performances of the method with the default Laplacian embedding \eqref{eqn:kLE} (red curve in Figure~\ref{fig:exp-appendix-scaling} and~\ref{fig:exp-appendix-scaling-20}) and with two possible normalized embeddings \eqref{eqn:scaled-kLE} (blue and black curve).
We observed that with the default \ref{eqn:kLE}, the performance first increases with $d$, and then collapses when $d$ gets too large.
The CTD scaling (blue) has the same issue, as the first $d$ eigenvalues are roughly of the same magnitude in our settings.
The heuristic scaling \ref{eqn:scaled-kLE} with $\alpha_k = 1/\sqrt{k}$ that damps the higher dimensions yields better results when $d$ increases, with a plateau rather than a collapse when $d$ gets large.
We interpret these results as follows.
With the  \eqref{eqn:kLE}, Algorithm \ref{algo:Recovery_order_filamentary}, line \ref{line:find_direction}  treats equally all dimensions of the embedding. However, the curvature of the embedding tends to increase with the dimension (for $\circR$ matrix, the period of the cosines increases linearly with the dimension). The filamentary structure is less smooth and hence more sensitive to noise in high dimensions, which is why the results are improved by damping the high dimensions (or using a reasonably small value for $d$).

\begin{figure}[hbt]
	\begin{center}
		\begin{subfigure}[htb]{0.45\textwidth}
			\includegraphics[width=\textwidth]{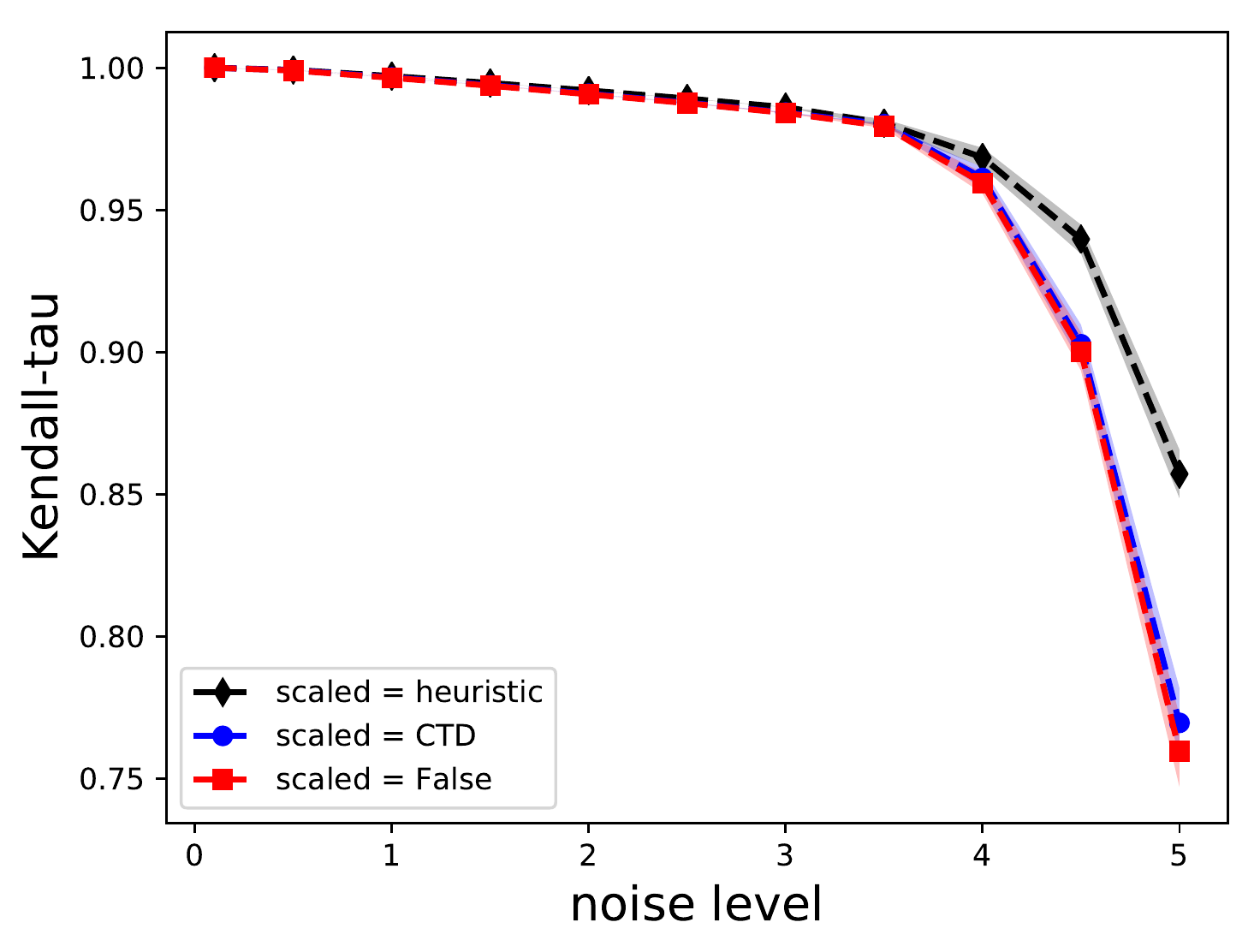}
			\caption{Linear Banded}\label{subfig:exps-lin-banded-scaling}
		\end{subfigure}
		\begin{subfigure}[htb]{0.45\textwidth}
			\includegraphics[width=\textwidth]{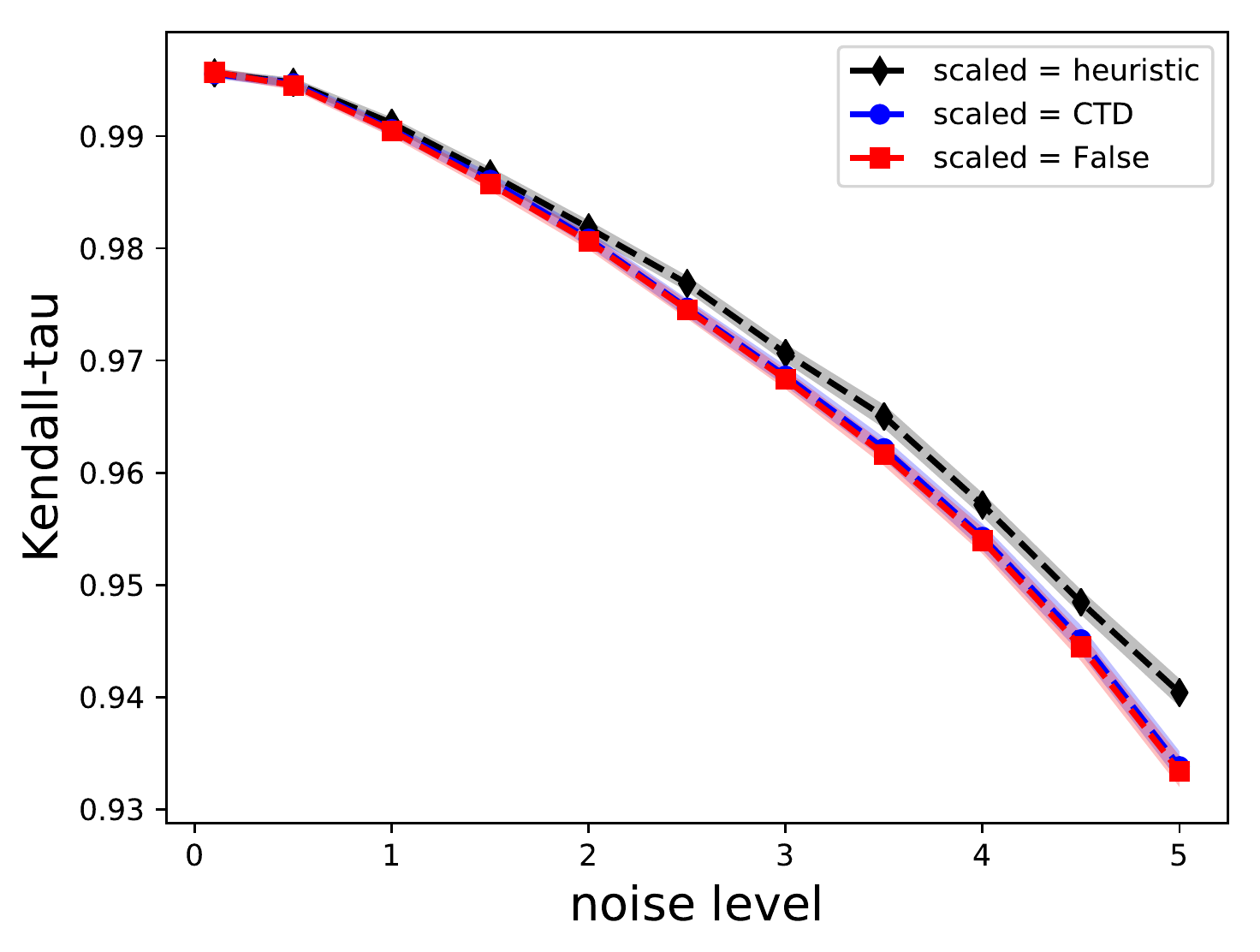}
			\caption{Circular Banded}\label{subfig:exps-circ-banded-scaling}
		\end{subfigure}
		\caption{
			Mean of Kendall-Tau for Linear (\ref{subfig:exps-lin-banded-scaling}) and Circular (\ref{subfig:exps-circ-banded-scaling}) Seriation for noisy observations of banded, Toeplitz, matrices, displayed for several scalings of the Laplacian embedding, with a fixed number of neighbors $k=15$ and number of dimensions $d=10$ in the \kLE{d}.}
		\label{fig:exp-appendix-scaling}
	\end{center}
	\vskip -0.2in
\end{figure}
\begin{figure}[hbt]
	\begin{center}
		\begin{subfigure}[htb]{0.45\textwidth}
			\includegraphics[width=\textwidth]{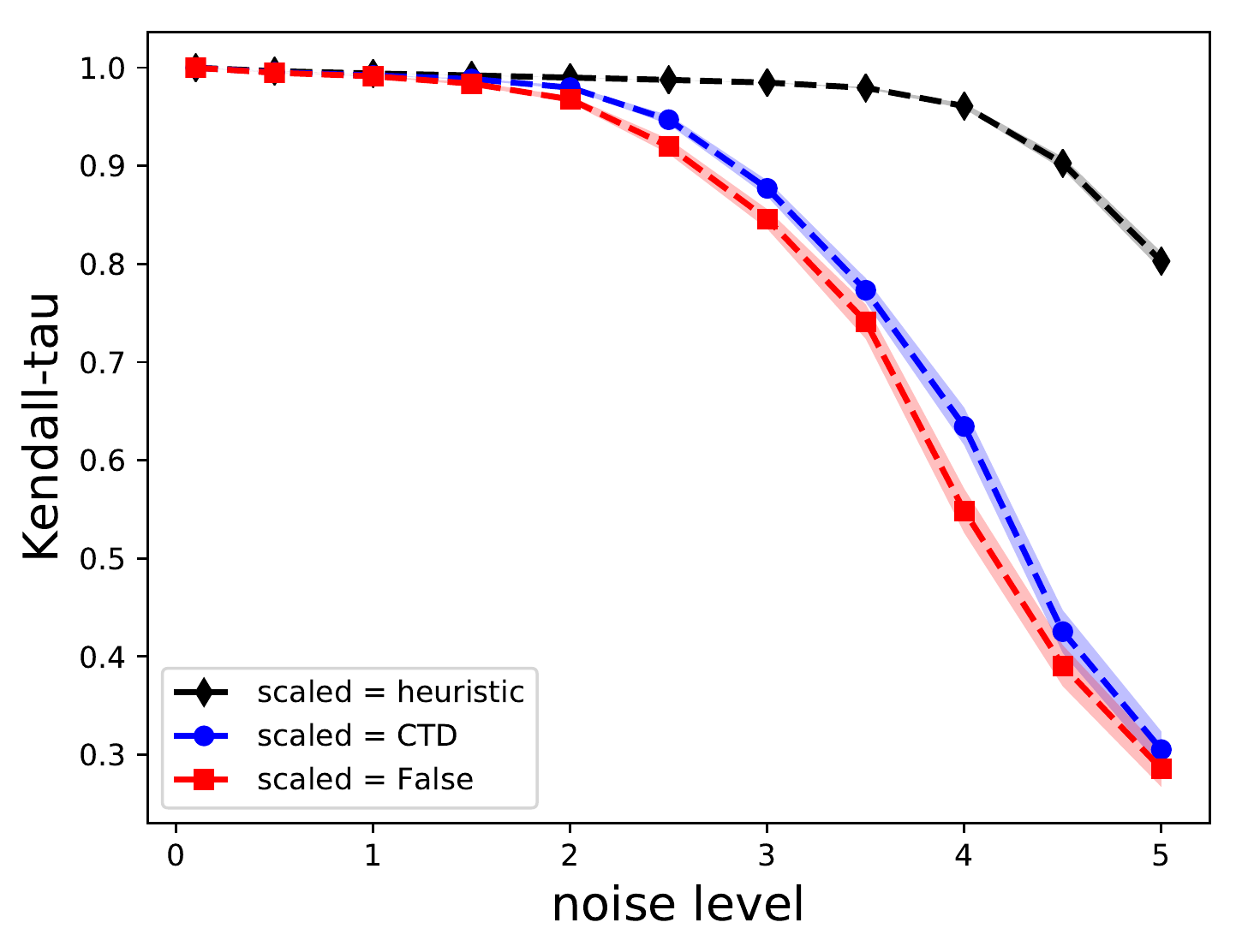}
			\caption{Linear Banded}\label{subfig:exps-lin-banded-scaling-20}
		\end{subfigure}
		\begin{subfigure}[htb]{0.45\textwidth}
			\includegraphics[width=\textwidth]{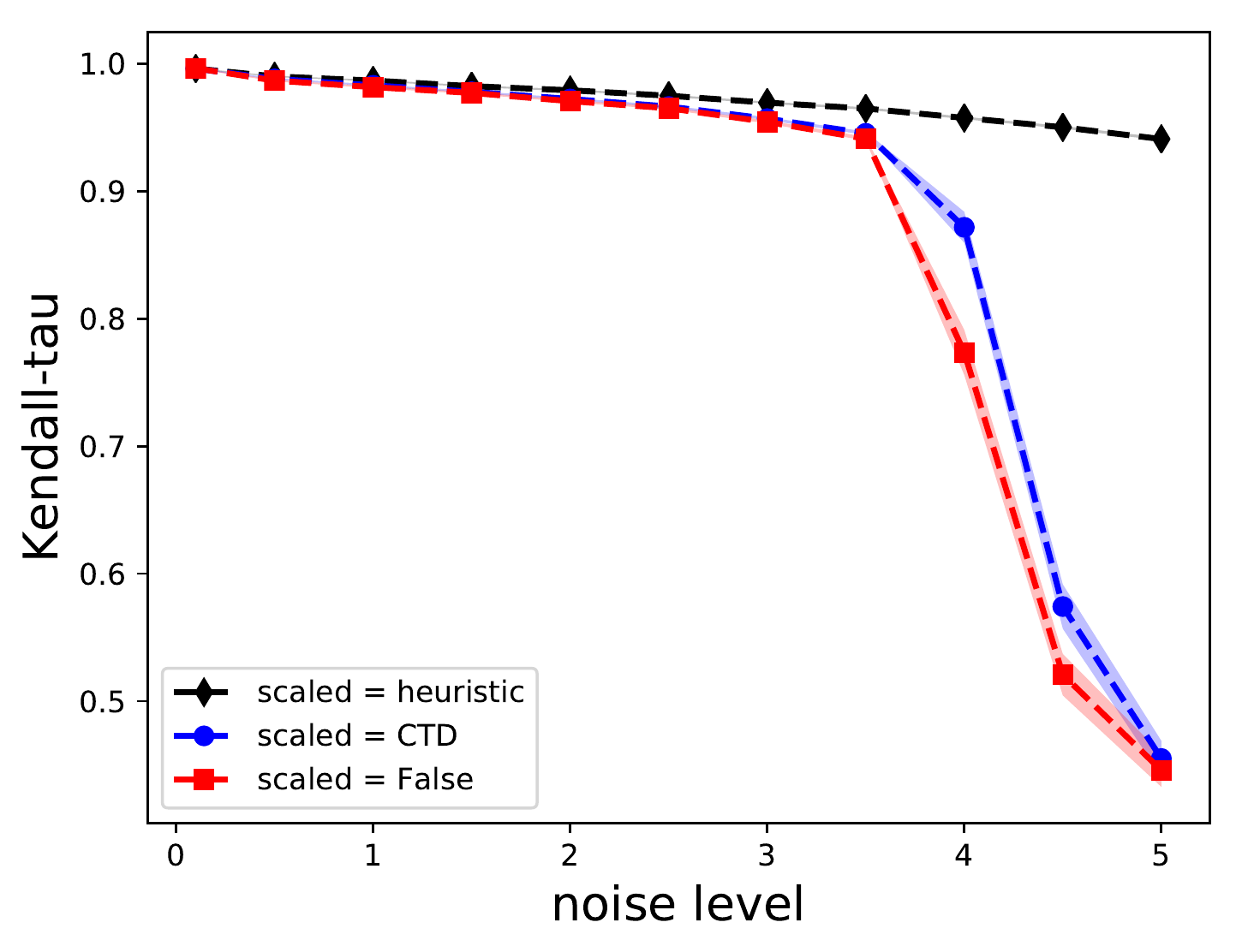}
			\caption{Circular Banded}\label{subfig:exps-circ-banded-scaling-20}
		\end{subfigure}
		\caption{
			Mean of Kendall-Tau for Linear (\ref{subfig:exps-lin-banded-scaling-20}) and Circular (\ref{subfig:exps-circ-banded-scaling-20}) Seriation for noisy observations of banded, Toeplitz, matrices, displayed for several scalings of the Laplacian embedding, with a fixed number of neighbors $k=15$ and number of dimensions $d=20$ in the \kLE{d}.}
		\label{fig:exp-appendix-scaling-20}
	\end{center}
	\vskip -0.2in
\end{figure}

\subsection{Illustration of Algorithm~\ref{algo:Recovery_order_filamentary}}\label{ssec:illustrations}
Here we provide some visual illustrations of the method with a circular banded matrix.
Given a matrix $A$ (Figure~\ref{subfig:exps-matshow-noisy}), Algorithm~\ref{algo:Recovery_order_filamentary} computes the \kLE{d}. The \kLE{2} is plotted for visualization in Figure~\ref{subfig:exps-embedding-noisy}.
Then, it creates a new matrix $S$ (Figure~\ref{subfig:exps-matshow-clean}) from the local alignment of the points in the \kLE{d}.
Finally, from the new matrix $S$, it computes the \kLE{2} (Figure~\ref{subfig:exps-matshow-clean}), on which it runs the simple method from Algorithm~\ref{alg:circular-2d-ordering}.

Figure~\ref{fig:exp-noisy-illustration} and \ref{fig:exp-clean-illustration} give a qualitative illustration of how the method behaves compared to the basic Algorithm~\ref{alg:circular-2d-ordering}.

\begin{figure}[hbt]
	\begin{center}
		\begin{subfigure}[htb]{0.35\textwidth}
			\includegraphics[width=\textwidth]{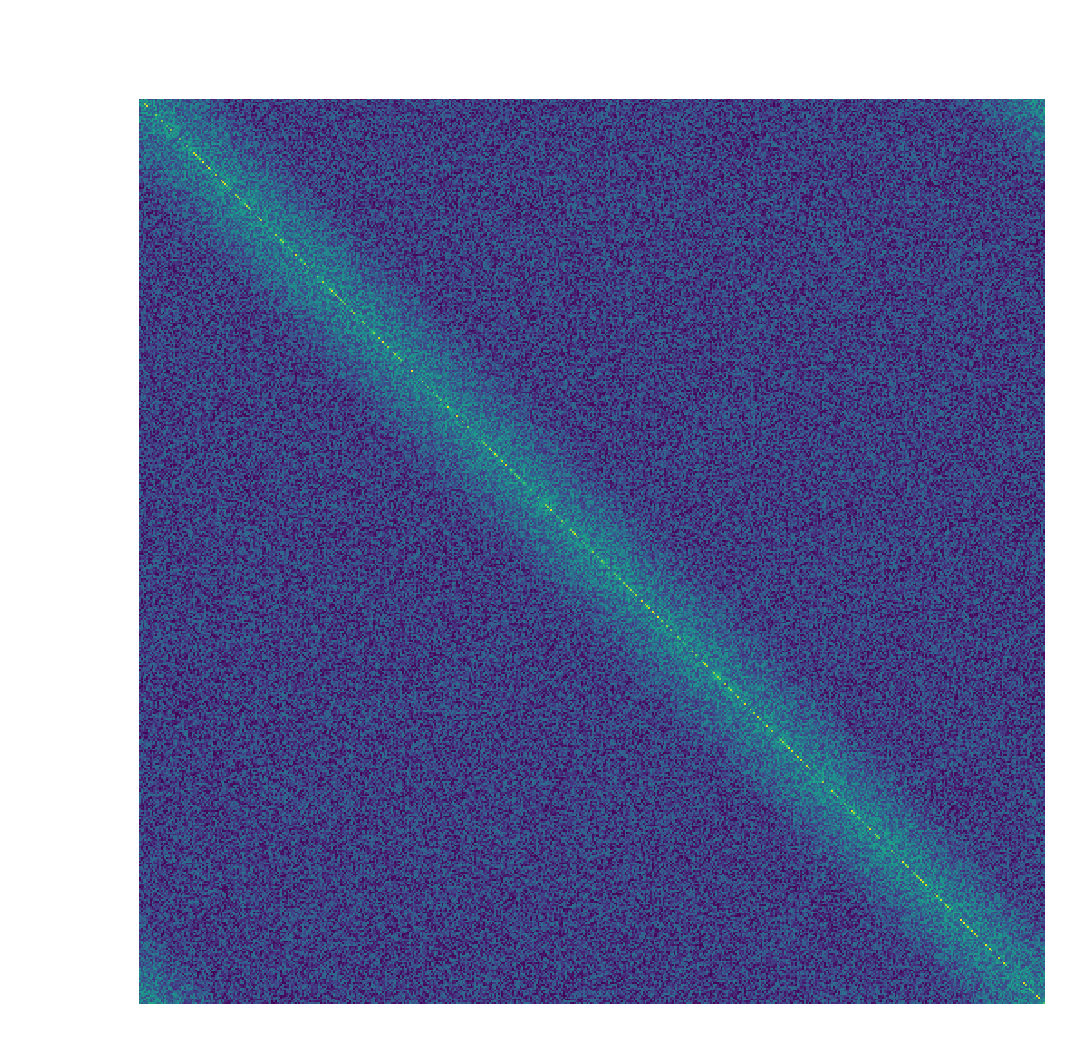}
			\caption{Noisy circular banded matrix $A$}\label{subfig:exps-matshow-noisy}
		\end{subfigure}
		\begin{subfigure}[htb]{0.45\textwidth}
			\includegraphics[width=\textwidth]{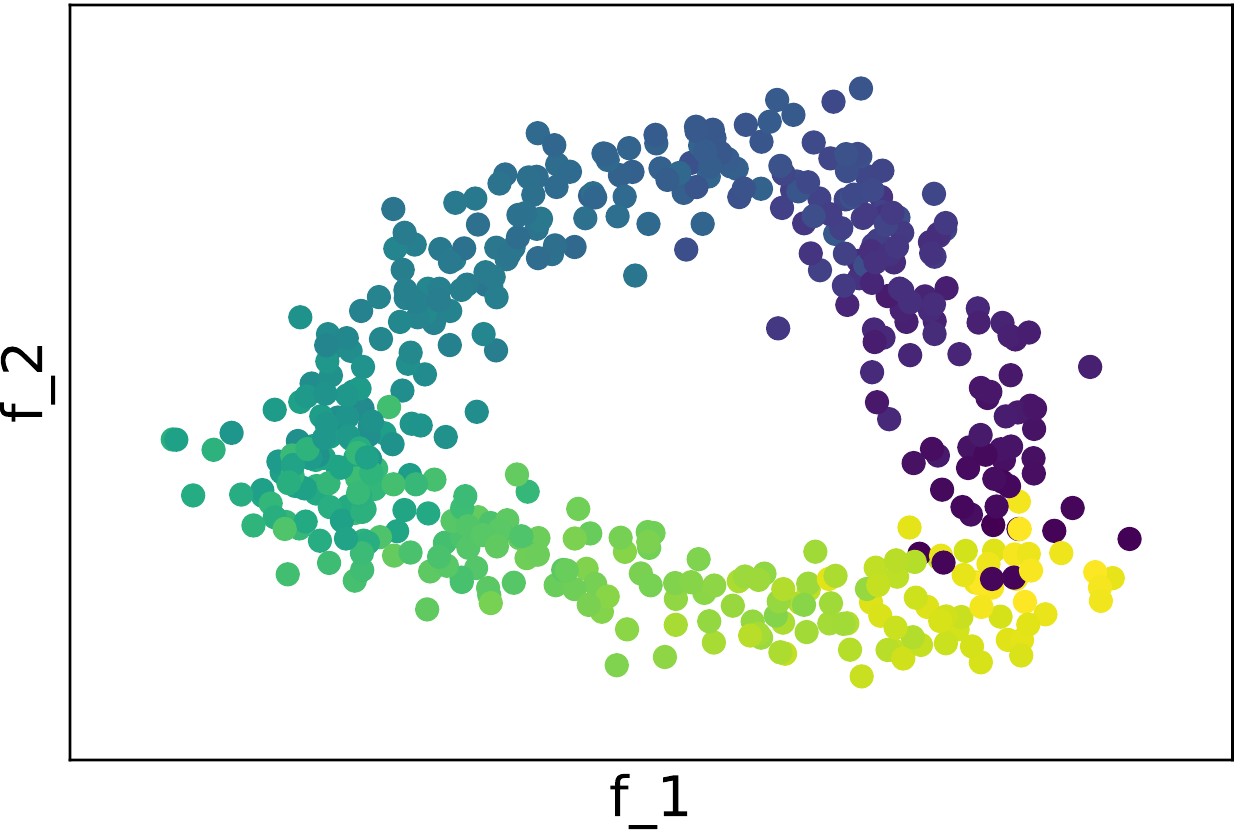}
			\caption{Noisy \kLE{2}}\label{subfig:exps-embedding-noisy}
		\end{subfigure}
		\caption{
		Noisy Circular Banded matrix (\ref{subfig:exps-matshow-noisy}) and associated 2d Laplacian embedding (\ref{subfig:exps-embedding-noisy}).
		}
		\label{fig:exp-noisy-illustration}
	\end{center}
	\vskip -0.2in
\end{figure}

\begin{figure}[hbt]
	\begin{center}
		\begin{subfigure}[htb]{0.35\textwidth}
			\includegraphics[width=\textwidth]{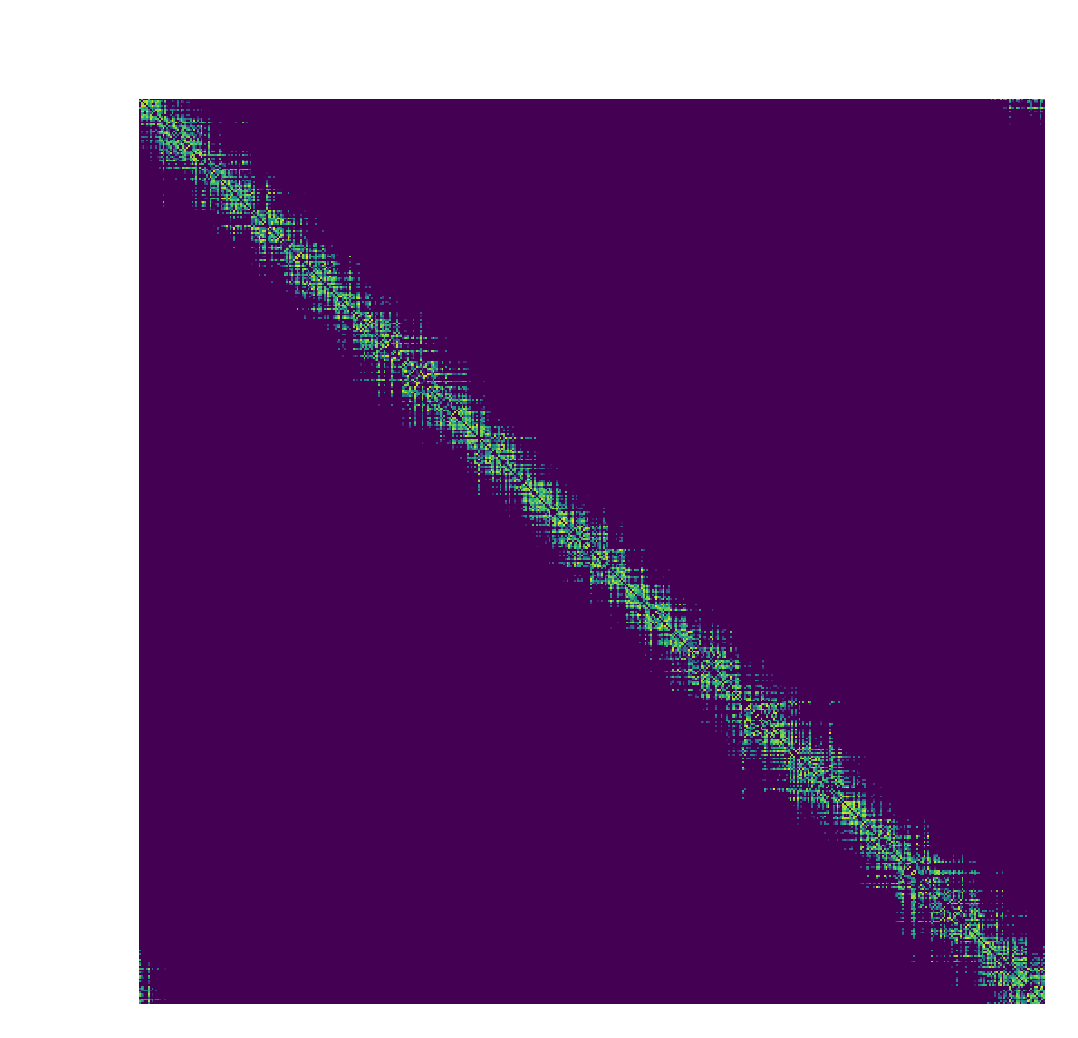}
			\caption{Matrix $S$ from Algorithm~\ref{algo:Recovery_order_filamentary}}\label{subfig:exps-matshow-clean}
		\end{subfigure}
		\begin{subfigure}[htb]{0.45\textwidth}
			\includegraphics[width=\textwidth]{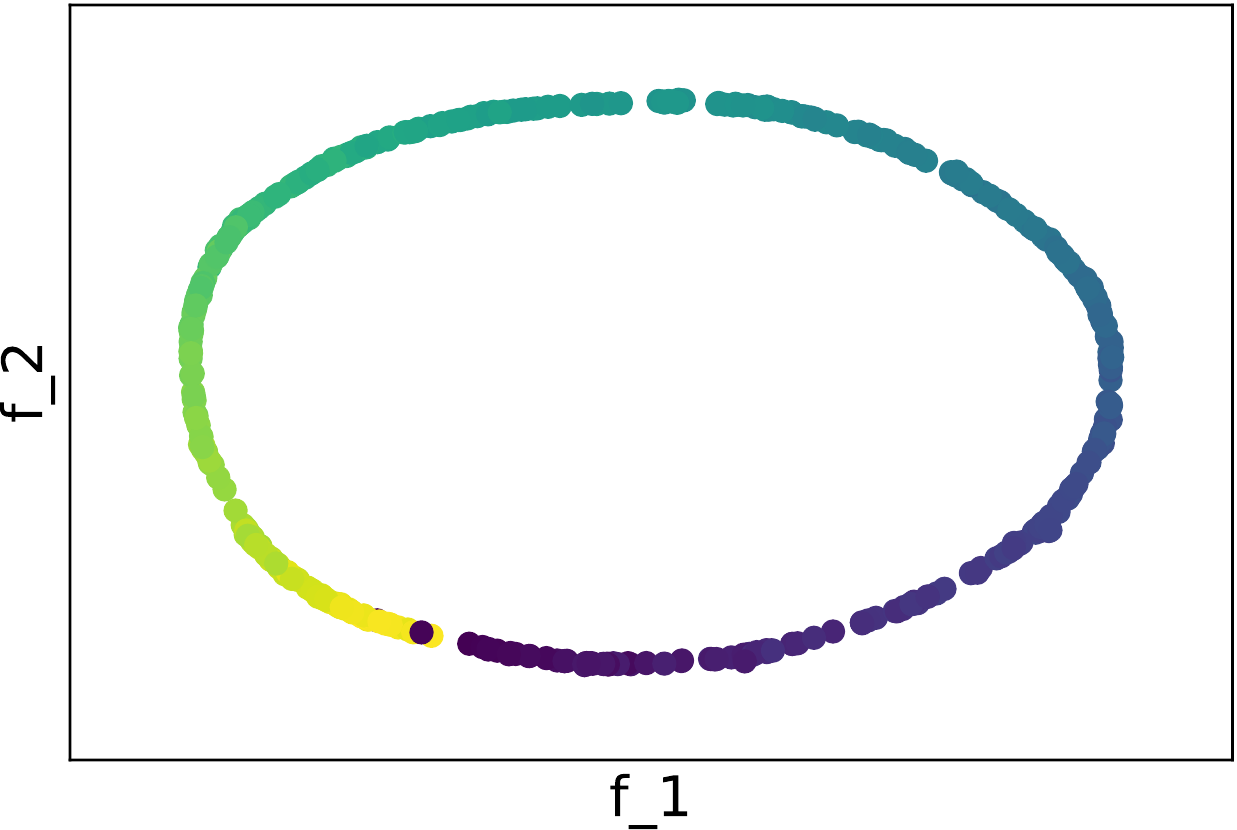}
			\caption{New \kLE{2}}\label{subfig:exps-embedding-clean}
		\end{subfigure}
		\caption{
		Matrix $S$ created through Algorithm~\ref{algo:Recovery_order_filamentary} (\ref{subfig:exps-matshow-clean}), and associated 2d-Laplacian embedding (\ref{subfig:exps-embedding-clean}).
		}
		\label{fig:exp-clean-illustration}
	\end{center}
	\vskip -0.2in
\end{figure}

\section{Proof of Theorem~\ref{th:without_noise}}\label{sec:circular_toeplitz_matrix}

In this Section, we prove Theorem~\ref{th:without_noise}.
There are many technical details, notably the distinction between the cases $n$ even and odd.
The key idea is to compare the sums involved in the eigenvalues of the circulant matrices $A \in \SCR$. It is the sum of the $b_k$ times values of cosines. For $\lambda_1$, we roughly have a reordering inequality where the ordering of the $b_k$ matches those of the cosines.
For the following eigenvalues, the set of values taken by the cosines is roughly the same, but it does not match the ordering of the $b_k$.
Finally, the eigenvectors of the Laplacian of $A$ are the same than those of $A$ for circulant matrices $A$, as observed in \S\ref{ssec:spectral-prop-lapl}.

We now introduce a few lemmas that will be useful in the proof.

\textbf{Notation.} In the following we denote $z_k^{(m)} \triangleq \cos(2\pi k m /n)$ and $S_p^{(m)} \triangleq \sum_{k=1}^{p}{z_k^{(m)}}$. Let's define $\mathcal{Z}_n = \{\cos(2\pi k/n)~|~k\in\mathbb{N}\}\setminus\{-1;1\}$. Depending on the parity of $n$, we will write $n=2p$ or $n=2p+1$. Hence we always have $p=\big\lfloor\frac{n}{2}\big\rfloor$. Also when $m$ and $n$ are not coprime we will note $m=d m^{\prime}$ as well as $n = d n^{\prime}$ with $n^{\prime}$ and $m^{\prime}$ coprime.

\subsection{Properties of sum of cosinus.}
The following lemma gives us how the partial sum sequence $(S_q^{(m)})$ behave for $q=p$ or $q=p-1$ as well as it proves its symmetric behavior in \eqref{eqn:sum-zk-sym-ineq}.
\begin{lemma}\label{lemma:equality_ending_partial_sum}
For $z_k^{(m)}=\cos(\frac{2\pi k m}{n})$, $n=2p+1$ and any $m=1,\ldots,p$
\begin{eqnarray}\label{eqn:sum-total-zk}
S^{(m)}_{p} \triangleq \sum_{k=1}^{p}{z_k^{(m)}} = -\frac{1}{2}~.
\end{eqnarray}
Also, for $1 \leq q \leq p/2$,
\begin{align}\label{eqn:sum-zk-sym-ineq}
S^{(1)}_{p-q} \geq S^{(1)}_{q}~.
\end{align}
For $n$ and $m\geq 2$ even ($n = 2p$), we have
\begin{align}
S^{(1)}_{p-1-q} &= S^{(1)}_{q}~~\text{for}~~ 1 \leq q \leq (p-1)/2\label{eq:sum_zk_n_m_even}\\
S^{(1)}_{p-1} &= 0 ~~\text{and}~~ S^{(m)}_{p-1} = -1 ~.\label{eq:S_m_p_1}
\end{align}

Finally for $n$ even and $m$ odd we have
\begin{eqnarray}\label{eq:sum_zk_n_even_m_odd}
S_p^{(m)} = S_p^{(1)} = -1~.
\end{eqnarray}

\end{lemma}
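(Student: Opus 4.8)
The plan is to derive all five identities from two elementary facts about the roots of unity: first, the discrete orthogonality relation $\sum_{k=0}^{n-1} e^{2\pi i k m/n}=0$ whenever $m\not\equiv 0 \pmod n$, whose real part gives $\sum_{k=0}^{n-1}\cos(2\pi km/n)=0$; and second, the reflection symmetry $z_{n-k}^{(m)}=\cos(2\pi(n-k)m/n)=\cos(2\pi km/n)=z_k^{(m)}$, valid for every integer $m$. Together with the value of the single central term, these two facts reduce each ``total sum'' claim to a one-line computation, while the two ``symmetry'' claims \eqref{eqn:sum-zk-sym-ineq} and \eqref{eq:sum_zk_n_m_even} follow from a reflection that sends a cosine to its negative.

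For \eqref{eqn:sum-total-zk} I would start from $\sum_{k=0}^{n-1}z_k^{(m)}=0$ (legitimate since $1\le m\le p<n$), so that $\sum_{k=1}^{n-1}z_k^{(m)}=-1$. With $n=2p+1$ the index set $\{1,\dots,2p\}$ pairs off perfectly under $k\leftrightarrow n-k$, and the symmetry $z_{n-k}^{(m)}=z_k^{(m)}$ gives $\sum_{k=1}^{2p}z_k^{(m)}=2S_p^{(m)}$; hence $2S_p^{(m)}=-1$ and $S_p^{(m)}=-\tfrac{1}{2}$. The even-$n$ identities are the same computation with one leftover central term: for $n=2p$ the pairing $k\leftrightarrow n-k$ leaves the fixed point $k=p$, for which $z_p^{(m)}=\cos(\pi m)=(-1)^m$, so $\sum_{k=1}^{n-1}z_k^{(m)}=2S_{p-1}^{(m)}+(-1)^m=-1$. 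Taking $m$ even recovers $S_{p-1}^{(m)}=-1$ and $S_{p-1}^{(1)}=0$ (the latter from $z_p^{(1)}=\cos\pi=-1$), which is \eqref{eq:S_m_p_1}; taking $m$ odd gives $S_{p-1}^{(m)}=0$, whence $S_p^{(m)}=S_{p-1}^{(m)}+z_p^{(m)}=0-1=-1=S_p^{(1)}$, which is \eqref{eq:sum_zk_n_even_m_odd}.

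The symmetry claim \eqref{eq:sum_zk_n_m_even} for even $n$ is exact and I would prove it by cancellation: $S_{p-1-q}^{(1)}-S_q^{(1)}=\sum_{k=q+1}^{p-1-q}\cos(\pi k/p)$, and the involution $k\mapsto p-k$ maps this range onto itself about its midpoint $p/2$ while flipping the sign of each summand, since $\cos(\pi(p-k)/p)=-\cos(\pi k/p)$; the terms cancel in pairs (the central term $k=p/2$, if present, being $\cos(\pi/2)=0$), giving $0$. The remaining claim \eqref{eqn:sum-zk-sym-ineq} for odd $n$ is the delicate one, because no integer reflection sends $\cos(2\pi k/n)$ exactly to its negative when $n=2p+1$. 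I would write $S_{p-q}^{(1)}-S_q^{(1)}=\sum_{k=q+1}^{p-q}\cos(2\pi k/n)$ and evaluate this partial sum in closed form with the Dirichlet identity $\sum_{k=a}^{b}\cos(k\theta)=\frac{\sin((b-a+1)\theta/2)}{\sin(\theta/2)}\cos\big(\tfrac{(a+b)\theta}{2}\big)$, here with $\theta=2\pi/n$, $a=q+1$, $b=p-q$, so that the sign is governed by the factor $\cos\big(\tfrac{(p+1)\pi}{n}\big)$ times the sine ratio, the latter being nonnegative precisely because $p-2q\ge 0$ for $q\le p/2$.

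The main obstacle is exactly this last sign analysis: pinning down the sign of the closed-form expression uniformly over the whole range $1\le q\le p/2$ and tracking the half-integer asymmetry intrinsic to odd $n$, which is what turns \eqref{eqn:sum-zk-sym-ineq} into an inequality rather than the exact equality \eqref{eq:sum_zk_n_m_even}. Everything else is bookkeeping with the two root-of-unity facts and the value $(-1)^m$ of the central term.
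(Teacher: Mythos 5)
The genuine gap is \eqref{eqn:sum-zk-sym-ineq}: you correctly identify the sign analysis as the main obstacle and leave it open, but it cannot be closed in the stated direction — completing your own Dirichlet computation refutes the claim. With $a=q+1$, $b=p-q$, $\theta=2\pi/n$, your identity gives
\begin{equation*}
S^{(1)}_{p-q}-S^{(1)}_{q}\;=\;\frac{\sin\bigl((p-2q)\pi/n\bigr)}{\sin(\pi/n)}\,\cos\Bigl(\frac{(p+1)\pi}{n}\Bigr),
\end{equation*}
and since $n=2p+1$ means $(p+1)\pi/n=\pi/2+\pi/(2n)$, the phase factor equals $-\sin\bigl(\pi/(2n)\bigr)<0$, while the sine ratio is nonnegative for $q\le p/2$. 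Hence $S^{(1)}_{p-q}\le S^{(1)}_{q}$, with equality only when $p=2q$: the inequality is reversed. A direct check confirms this (for $n=9$, $q=1$: $S^{(1)}_3\approx 0.440 < S^{(1)}_1\approx 0.766$). You are not at fault relative to the paper here: the paper's own proof of \eqref{eqn:sum-zk-sym-ineq} contains a sign slip — it computes $df/dq=(\pi/n)\cos\bigl(\pi(2q+1)/n\bigr)\ge 0$, i.e.\ $f$ non-decreasing on $[1,p/2]$, and then concludes $f(q-1/2)\ge f(q)$ as though $f$ were non-increasing. The true, repairable statement — and the one the proof of Lemma~\ref{lemma:cos_partial_sum} actually needs, since there $S^{(m)}_q$ is bounded by $S^{(1)}_{n^\prime-1}$ rather than $S^{(1)}_{n^\prime}$ — is the index-shifted inequality $S^{(1)}_{p-q}\ge S^{(1)}_{q-1}$: rerunning your identity with $a=q$, $b=p-q$ gives the phase $\cos(p\pi/n)=\sin\bigl(\pi/(2n)\bigr)>0$ and the nonnegative sine factor $\sin\bigl((p-2q+1)\pi/n\bigr)$. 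So the half-integer asymmetry of odd $n$ that you flagged is real, but the fix is to shift the lower index by one, not to sharpen the sign estimate.

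The other four identities you prove correctly, and by a genuinely different and more economical route than the paper's: where the paper manipulates the closed form $S^{(m)}_q=\cos\bigl(\pi(q+1)m/n\bigr)\sin(\pi q m/n)/\sin(\pi m/n)$ case by case with half-angle bookkeeping, you get each total-sum identity in one line from the root-of-unity relation $\sum_{k=0}^{n-1}z^{(m)}_k=0$ together with the reflection $z^{(m)}_{n-k}=z^{(m)}_k$, the central term $z^{(m)}_p=(-1)^m$ handling the parity cases for $n=2p$. Your cancellation argument under the involution $k\mapsto p-k$ also fills an actual omission: the paper invokes \eqref{eq:sum_zk_n_m_even} in the proof of Lemma~\ref{lemma:cos_partial_sum} but never establishes it.
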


\begin{proof}


Let us derive a closed form expression for the cumulative sum $S^{(m)}_{q}$, for any $m,q \in \{1,\ldots,p\}$
\begin{align}\label{eqn:sum-zk-m}
\BA{lll}
S^{(m)}_{q} = \sum_{k=1}^{q}{z_k^{(m)}} &=& \Re\Big(\sum_{k=1}^{q}{e^{\frac{2 i \pi k m}{n}}}\Big)\\
&=& \Re\Big(e^{2 i \pi m/n} \frac{ 1- e^{2 i \pi q m/n} }{ 1- e^{2 i \pi m/n}}\Big)\\
&=& \cos\big(\pi (q+1) m / n\big) \frac{\sin(\pi q m /n)}{\sin(\pi m/n)}~.
\EA
\end{align}


Let us prove equation~\eqref{eqn:sum-total-zk} with the latter expression for $q=p$.
Given that $n=2p+1 = 2 (p+1/2)$, we have,
\begin{eqnarray*}
\frac{\pi (p+1) m }{n} = \frac{\pi (p+1/2 + 1/2) m }{2(p+1/2)} = \frac{\pi m }{2} + \frac{\pi m }{2n}, \\
\frac{\pi p m }{n} = \frac{\pi (p+1/2 - 1/2) m }{2(p+1/2)} = \frac{\pi m }{2} - \frac{\pi m }{2n}~.
\end{eqnarray*}
Now, by trigonometric formulas, we have,
\begin{eqnarray*}
    \cos{\left(\frac{\pi m }{2} + x \right)}=
    \begin{cases}
      (-1)^{m/2}\cos{(x)}, & \text{if}\ \text{$m$ is even} \\
      (-1)^{(m+1)/2} \sin{(x)}, & \text{if}\ \text{$m$ is odd}
    \end{cases}
\end{eqnarray*}
\begin{eqnarray*}
    \sin{\left(\frac{\pi m }{2} - x \right)}=
    \begin{cases}
      (-1)^{(1+m/2)}\sin{(x)}, & \text{if}\ \text{$m$ is even} \\
      (-1)^{(m-1)/2} \cos{(x)}, & \text{if}\ \text{$m$ is odd}
    \end{cases}
\end{eqnarray*}
It follows that, for any $m$,
\begin{eqnarray*}
    \cos{\left(\frac{\pi m }{2} + x \right)} \sin{\left(\frac{\pi m }{2} - x \right)} = - \cos{(x)} \sin{(x)} = - \frac{1}{2} \sin{(2x)}
\end{eqnarray*}
Finally, with $x=\pi m /(2n)$, this formula simplifies the numerator appearing in equation~\eqref{eqn:sum-zk-m} and yields the result in equation~\eqref{eqn:sum-total-zk}.


Let us now prove equation~\eqref{eqn:sum-zk-sym-ineq} with a similar derivation.
Let $f(q) \triangleq \cos\big(\pi (q+1)  / n\big) \sin(\pi q  /n)$, defined for any real $q \in [1,p/2]$.
We wish to prove $f(p-q) \geq f(q)$ for any integer $q \in \{1,\ldots,\lfloor p/2 \rfloor \}$.
Using $n= 2 (p+1/2)$, we have,
\begin{eqnarray*}
\frac{\pi (p-q+1) }{n} = \frac{\pi (p+1/2 - (q- 1/2))  }{2(p+1/2)} = \frac{\pi }{2} - \frac{\pi (q - 1/2)  }{n}, \\
\frac{\pi (p -q) }{n} = \frac{\pi (p+1/2 - (q+1/2))  }{2(p+1/2)} = \frac{\pi }{2} - \frac{\pi (q+1/2) }{n}~.
\end{eqnarray*}
Using $\cos{(\pi/2 - x )} = \sin{(x)}$ and $\sin{(\pi/2 - x )} = \cos{(x)}$, we thus have,
\begin{align}
    f(p-q) = \cos\big(\pi (q+1/2)  / n\big) \sin(\pi (q-1/2) /n) = f(q-1/2)
\end{align}
To conclude, let us observe that $f(q)$ is non-increasing on  $[1,p/2]$.
Informally, the terms $\{z^{1}_k\}_{1\leq k \leq q}$ appearing in the partial sums $S^{(1)}_{q}$ are all non-negative for $q \leq p/2$.
Formally, remark that the derivative of $f$, $df/dq (q) = (\pi/n) \cos{\left( \pi (2q + 1)/n \right)}$ is non-negative for $ q \in [1,p/2]$.
Hence, for $q \leq p/2$, $f(q-1/2) \geq f(q)$, which ends the proof of equation~\eqref{eqn:sum-zk-sym-ineq}.

To get the first equality of \eqref{eq:S_m_p_1}, from the exact form in \eqref{eqn:sum-zk-m}, we have ($n=2p$)
\begin{eqnarray*}
S_{p-1}^{(1)} = \cos(\pi p/(2p)) \frac{\sin(\pi (p-1)/n)}{\sin(\pi/n)} = 0~.
\end{eqnarray*}
For the second equality in \eqref{eq:S_m_p_1}, we have ($m=2q$):
\begin{eqnarray*}
S_{p-1}^{m} &=& \cos(\pi q) \frac{\sin(\pi q -\pi m/n)}{\sin(\pi m /n)} = (-1)^q \frac{-(-1)^q\sin(\pi m/n)}{\sin(\pi m /n)} = -1~.
\end{eqnarray*}

Finally to get \eqref{eq:sum_zk_n_even_m_odd}, let us write ($n=2p$ and $m$ odd):
\begin{eqnarray*}
S_p^{(m)} &=& (-1)^{m+1}\frac{\cos(\pi (p+1)m/n)}{\sin(\pi m/n)} = (-1)^{m+1}\frac{\cos(\pi m/2 + \pi m/n)}{\sin(\pi m/n)}\\
&=& (-1)^m \sin(\pi m/2) = -1~.
\end{eqnarray*}
\end{proof}

The following lemma gives an important property of the partial sum of the $z_k^{(m)}$ that is useful when combined with proposition \ref{prop:partial_sum_trick}.

\begin{lemma}\label{lemma:cos_partial_sum}
Denote by $z_k^{(m)}=\cos({2\pi k m}/{n})$. Consider first $n = 2p$ and $m$ even. For $m= 1,\ldots, p$ and $q=1,\ldots,p-2$
\begin{eqnarray}\label{eq:partial_sum_domination_sub_case}
S_q^{(1)}=\sum_{k=1}^{q}{z_k^{(1)}} \geq \sum_{k=1}^{q}{z_k^{(m)}}=S_q^{(m)}~.
\end{eqnarray}
Otherwise we have for every $(m,q)\in\{1,\ldots,p\}^2$
\begin{eqnarray}\label{eq:partial_sum_domination}
S_q^{(1)} > S_q^{(m)}~,
\end{eqnarray}
with equality when $q= p$.
\end{lemma}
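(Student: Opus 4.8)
The plan is to reduce both inequalities \eqref{eq:partial_sum_domination} and \eqref{eq:partial_sum_domination_sub_case} to a single statement about a Dirichlet-type sum, and then to establish that statement by a rearrangement (reordering) argument. First I would take the closed form \eqref{eqn:sum-zk-m} from Lemma~\ref{lemma:equality_ending_partial_sum} and simplify its numerator with the product-to-sum identity $\sin A\cos B=\tfrac12\big(\sin(A+B)+\sin(A-B)\big)$, which yields
\begin{equation*}
S_q^{(m)}=\frac12\left(\frac{\sin\!\big((2q+1)\pi m/n\big)}{\sin(\pi m/n)}-1\right).
\end{equation*}
Writing $g(m):=2S_q^{(m)}+1=\sum_{k=-q}^{q}\cos(2\pi km/n)$, the whole lemma becomes the assertion that, for each fixed $q$, the sampled Dirichlet kernel $g(m)$ is maximal at $m=1$ over $m\in\{1,\dots,p\}$ (strictly so for $m\ge2$ in the ``otherwise'' regime). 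The claimed equality at $q=p$ is then immediate, since \eqref{eqn:sum-total-zk} and \eqref{eq:sum_zk_n_even_m_odd} show all the $g(m)$ collapse to the same value there, so it remains to treat $q\le p-1$.

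The core step is the rearrangement argument. Since $q\le p-1$ forces $2q+1<n$, the indices $\{k:|k|\le q\}$ are $2q+1$ distinct residues modulo $n$. When $\gcd(m,n)=1$, multiplication by $m$ is a bijection of $\mathbb Z/n\mathbb Z$, so $g(m)=\sum_{j\in J_m}\cos(2\pi j/n)$ for the symmetric $(2q+1)$-element set $J_m=\{km\bmod n:|k|\le q\}$. Because $\cos(2\pi j/n)$ orders residues exactly by their distance to $0$ modulo $n$, the set $J_1=\{0,\pm1,\dots,\pm q\}$ collects precisely the $2q+1$ largest values of $\cos(2\pi\,\cdot\,/n)$; hence $g(1)$ is the maximum possible sum and $g(m)\le g(1)$. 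Moreover this top set is the unique maximiser (there is a strict gap between the $q$-th cosine pair and the next), so $g(m)=g(1)$ would force $J_m=J_1$, which happens only for $m=1$. This gives the strict inequality $S_q^{(1)}>S_q^{(m)}$ for every $m\ge2$ coprime to $n$, and is exactly the ``reordering inequality'' alluded to in the proof outline.

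The hard part will be the non-coprime case $d:=\gcd(m,n)>1$, which is where the even/odd bookkeeping and the restriction $q\le p-2$ in \eqref{eq:partial_sum_domination_sub_case} originate. Writing $m=dm'$ and $n=dn'$ with $\gcd(m',n')=1$, one has $\cos(2\pi km/n)=\cos(2\pi km'/n')$, so the samples have the shorter period $n'<n$ and the $2q+1$ consecutive indices wrap around, hitting each residue class modulo $n'$ either $\lfloor(2q+1)/n'\rfloor$ or $\lceil(2q+1)/n'\rceil$ times. Since $\sum_{r=0}^{n'-1}\cos(2\pi r/n')=0$ for $n'\ge2$, the complete periods cancel and $g(m)$ reduces to a partial sum of fewer than $n'$ cosines, which the same rearrangement bounds from above. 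The delicate point is then comparing this reduced sum, living at modulus $n'$, against $g(1)$ at modulus $n$: here the parity of $n$ (hence of $n'$, and whether $m$ is even) controls the residual sign pattern, which is precisely why the even/$m$-even branch only yields a non-strict inequality.

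Finally I would patch the two boundary indices $q=p-1$ and $q=p$, which the wrap-around estimate does not cleanly reach, by reading their values directly off Lemma~\ref{lemma:equality_ending_partial_sum}: \eqref{eqn:sum-total-zk} and \eqref{eq:sum_zk_n_even_m_odd} give equality at $q=p$, while \eqref{eq:S_m_p_1} supplies $S_{p-1}^{(1)}=0>-1=S_{p-1}^{(m)}$ in the even case. Once the partial-sum dominance is in place, it is meant to be combined with Proposition~\ref{prop:partial_sum_trick} via Abel summation against the non-increasing weights $(b_k)$, turning the pointwise dominance of the $S_q^{(m)}$ into the eigenvalue comparison $\nu_1\ge\nu_m$ needed for Theorem~\ref{th:without_noise}.
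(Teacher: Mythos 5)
Your Dirichlet-kernel reformulation $2S_q^{(m)}+1=\sum_{k=-q}^{q}\cos(2\pi km/n)$ is correct, and your rearrangement argument for the coprime case is sound — indeed cleaner and more explicit than the paper's own treatment, which merely notes that the $z_k^{(m)}$ are distinct and leaves the reordering step implicit. The boundary patches at $q=p-1$ and $q=p$ via Lemma~\ref{lemma:equality_ending_partial_sum} are also fine. But there is a genuine gap exactly where you flag ``the hard part'': in the non-coprime case, after cancelling full periods you bound $g(m)$ by the top-$t$ symmetric sum $\sum_{|j|\le (t-1)/2}\cos(2\pi j/n')$ with $t=(2q+1)\bmod n'$, yet you never prove the required comparison of this reduced sum (at modulus $n'$) against $g(1)=\sum_{|k|\le q}\cos(2\pi k/n)$ (at modulus $n$). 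This cross-modulus inequality \emph{is} the content of the lemma when $\gcd(m,n)=d>1$, and it is not routine: the two sums have different lengths ($2q+1$ versus $t\le n'-1$ terms) and the extra terms in $g(1)$ can be negative, so having ``more of the largest values'' does not dominate automatically; nor do the reduced terms $\cos(2\pi jd/n)$ necessarily sit inside the window $|k|\le q$ (writing $2q+1=an'+t$, that containment needs $(t-1)(d-1)\le an'$, which can fail). The paper closes precisely this step with a four-range case analysis on $q$: the backwards partial-sum argument from $q=p$ using \eqref{eqn:sum-total-zk}, the direct comparison for $q\le n'-1$, the monotonicity of $S_q^{(1)}$ up to $q\approx p/2$, and the symmetry bound \eqref{eqn:sum-zk-sym-ineq} $S^{(1)}_{p-q}\ge S^{(1)}_{q}$ for the remaining range — none of which your appeal to ``parity controls the residual sign pattern'' replaces.

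A secondary inaccuracy: your uniqueness claim (``$J_m=J_1$ only for $m=1$'') fails at $q=p-1$ when $n=2p$. For any odd $m$ coprime to $n$, multiplication by $m$ maps $\{-(p-1),\ldots,p-1\}$ onto all residues except $p$, so $J_m=J_1$ and in fact $S_{p-1}^{(m)}=S_{p-1}^{(1)}=0$ (consistent with \eqref{eq:sum_zk_n_even_m_odd} and \eqref{eq:S_m_p_1}). Strict inequality is therefore false at that index — this is admittedly an overstatement already present in the lemma as stated, and it is harmless for the application since Proposition~\ref{prop:partial_sum_trick} only needs non-strict dominance, but your write-up derives strictness from uniqueness of the maximising set without checking this degenerate case. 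Bottom line: the coprime case and the $q\in\{p-1,p\}$ patches stand, but as written the proposal does not prove \eqref{eq:partial_sum_domination} or \eqref{eq:partial_sum_domination_sub_case} for $\gcd(m,n)>1$, which is where the bulk of the paper's proof lives.
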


\begin{proof}
\textbf{Case $m$ and $n$ coprime.} Values of $\big(z_k^{(m)}\big)_{k=1,\ldots,p}$ are all distinct. Indeed $z_k^{(m)}= z_{k^\prime}^{(m)}$ implies that $n$ divides $k+k^{\prime}$ or $k-k^{\prime}$. It is impossible (the range of $k+k^{\prime}$ is $[2,2p]$) unless $k=k^{\prime}$.

\textbf{Case $m$ and $n$ not coprime.} $m = d m^{\prime}$ and $n = d n^{\prime}$, with $d\geq 3$. In that situation we need to distinguish according to the parity of $n$. 

\textbf{Case $n=2p+1$.} 
Let's first remark that $\big(z_k^{(1)}\big)_{k=1,\ldots,p}$ takes all values but two ($-1$ and $1$) of the cosinus of multiple of the angle $\frac{2\pi}{n}$, e.g. $\big(z_k^{(1)}\big)_{k=1,\ldots,p}\subset \mathcal{Z}_n$. Also $(z_k^{(1)})_{k=1,\ldots,p}$ is non-increasing. 

Let's prove \eqref{eq:partial_sum_domination} by distinguishing between the various values of $q$.

\begin{itemize}
    \item Consider $q=p-(n^{\prime}-1),\ldots,p$. From
    \eqref{eqn:sum-total-zk} in lemma \eqref{lemma:cos_partial_sum}, we have $S_p^{(1)}=S_p^{(m)}$. The $\big(z_k^{(1)}\big)_k$ are ordered in non-increasing order and the $\big(z_k^{(m)}\big)_{k=p-n^{\prime}+1,\ldots,p}$ take value in $\mathcal{Z}_n\cup\{1\}$ without repetition (it would requires $k\pm k^\prime \sim 0 ~[n^\prime]$). Also the partial sum of $z^{(1)}_k$ starting from the ending point $p$ are lower than any other sequence taking the same or greater value without repetition. Because $1$ is largest than any possible value in $\mathcal{Z}_n$, we hence have
    \begin{eqnarray}\label{eq:partial_from_ending}
    \sum_{k=q}^{p}{z^{(1)}_k} \leq \sum_{k=q}^{p}{z^{(m)}_k}~\text{for any } q=p-(n^{\prime}-1),\ldots,p~.
    \end{eqnarray}
     Since $S_q^{(m)}= S_p^{(m)} - \sum_{k= q+1}^{p}{z^{(m)}_k}$, \eqref{eq:partial_from_ending} implies \eqref{eq:partial_sum_domination} for that particular set of $q$. 

    \item For $q=1,\ldots,n^{\prime}-1$ it is the same type of argument. Indeed the $(z_k^{(1)})_k$ takes the highest values in $\mathcal{Z}_n$ in decreasing order, while $(z_k^{(m)})_k$ takes also its value in $\mathcal{Z}_n$ (because $z_q^{(m)}\neq 1$). This concludes \eqref{eq:partial_sum_domination}.

    Note that when $n^\prime \geq \frac{p+1}{2}$, \eqref{eq:partial_sum_domination} is then true for all $q$. In the sequel, let's then assume that this is not the case, e.g. $n^\prime < \frac{p+1}{2}$. 
    
    \item For $q= n^{\prime}-1,\ldots,\big\lfloor \frac{p}{2}\big\rfloor$, the $z_q^{(1)}$ are non-negative. Hence $S_q^{(1)}$ is non-decreasing and lower bounded by $S_{n^{\prime}-1}^{(1)}$.
    Also because $S^{(m)}_{n^\prime}=0$ and $S^{(1)}_{n^\prime-1}\geq S^{(m)}_{k}$ for $k=1,\ldots,n^\prime$, it is true that for all $q$ in the considered set, $S_q^{(m)}$ is upper-bounded by $S_{n^{\prime}-1}^{(1)}$. All in all it shows \eqref{eq:partial_sum_domination} for these values of $q$.
    
    \item For $q= \big\lfloor \frac{p}{2}\big\rfloor+1 , \ldots , p- n^{\prime}$, we apply \eqref{eqn:sum-zk-sym-ineq} with $q = n^{\prime}$ (and indeed $n^{\prime}\leq\frac{p}{2}$) to get $S_{p-n^{\prime}}^{(1)} \geq S^{(1)}_{n^{\prime}}$. 
    Because $S_q^{(m)}$ is upper-bounded by $S_{n^{\prime}-1}^{(1)}$, it follows that $S_{p-n^{\prime}}^{(1)} \geq S_{q}^{(m)}$. Finally since $(S_{q}^{(1)})$ is non-increasing for the considered sub-sequence of $q$, \eqref{eq:partial_sum_domination} is true.

\end{itemize}

\textbf{Case $n=2p$.}  Here $\big(z_k^{(1)}\big)_{k=1,\ldots,p}$ takes unique values in $\mathcal{Z}_n\cup\{-1\}$. We also need to distinguish according to the parity of $m$.

\begin{itemize}
    \item $\big(z_k^{(m)}\big)_{k=1,\ldots,n^{\prime}-1}$ takes also unique value in $\mathcal{Z}_n$. We similarly get \eqref{eq:partial_sum_domination} for $q=1,\ldots,n^{\prime}-1$, and for $q=n^{\prime}$ because $S_{n^{\prime}}^{(m)}=0$.
    
    \item Consider $m$ odd, from \eqref{eq:sum_zk_n_even_m_odd}, $S^{(m)}_p=S^{(1)}_p=-1$ so that we can do the same reasoning as with $n$ odd to prove \eqref{eq:partial_sum_domination} for $q=p-n^{\prime}+1,\ldots,p$ and $q=1,\ldots,n^{\prime}$. The remaining follows from the symmetry property \eqref{eq:sum_zk_n_m_even} of the sequence $(S_q^{(1)})_q$ in Lemma \ref{lemma:equality_ending_partial_sum}.
    
    \item $m$ and $n$ even, we have that $S_{p-1}^{(1)}=0$ and $S_{p-1}^{(m)}=-1$ so that
    \begin{eqnarray*}
    S_{p-1}^{(1)} \geq S_{p-1}^{(m)}+1~.
    \end{eqnarray*}
    $S_q^{(1)}\geq S_q^{(m)}$ for $q<p-1$ follows with same techniques as before.
\end{itemize}

\end{proof}

\subsection{Properties on R-Toeplitz circular matrix.}

This proposition is a technical method that will be helpful at proving that the eigenvalues of a R-circular Toeplitz matrix are such that $\nu_1>\nu_m$.
\begin{proposition}\label{prop:partial_sum_trick}
Suppose than for any $k=1,\ldots,q:$
\begin{eqnarray*}
W_k\triangleq\sum_{i=1}^{k}{w_i} \geq \sum_{i=1}^{k}{\tilde{w}_i}\triangleq\tilde{W}_k~,
\end{eqnarray*}
with $(w_i)$ and $(\tilde{w}_i)$ two sequences of reals. Then, if $(b_k)_{k}$ is non increasing and non negative, we have
\begin{eqnarray}\label{eq:order_eigen_values}
\sum_{k=1}^{q}{b_k w_k} \geq \sum_{k=1}^{q}{b_k \tilde{w}_k}~.
\end{eqnarray}
\end{proposition}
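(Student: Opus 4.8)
The plan is to reduce the claim to a single application of summation by parts (Abel's transformation). First I would introduce the difference sequence $d_k \triangleq w_k - \tilde{w}_k$ together with its partial sums $D_k \triangleq \sum_{i=1}^{k} d_i = W_k - \tilde{W}_k$, and set $D_0 \triangleq 0$ by convention. By the hypothesis $W_k \geq \tilde{W}_k$, we have $D_k \geq 0$ for every $k = 1, \ldots, q$. The target inequality \eqref{eq:order_eigen_values} is then equivalent to the single statement $\sum_{k=1}^{q} b_k d_k \geq 0$, so it suffices to show this quantity is non-negative.

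Next I would rewrite the sum by substituting $d_k = D_k - D_{k-1}$ and reindexing to obtain the Abel summation identity
\begin{eqnarray*}
\sum_{k=1}^{q} b_k d_k = b_q D_q + \sum_{k=1}^{q-1} (b_k - b_{k+1}) D_k,
\end{eqnarray*}
where the boundary term arising at $k=0$ drops out precisely because $D_0 = 0$. This is the one computational step, and it is a standard telescoping/reindexing manipulation.

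Finally I would conclude by a sign inspection of the right-hand side. The factor $b_q$ is non-negative and $D_q \geq 0$, so $b_q D_q \geq 0$; the monotonicity of $(b_k)$ gives $b_k - b_{k+1} \geq 0$, which combined with $D_k \geq 0$ makes every summand of the remaining sum non-negative. Hence the whole expression is $\geq 0$, which is exactly the desired inequality. Note that the hypotheses are tailored to this argument: non-negativity of the $b_k$ controls the endpoint term $b_q D_q$, while monotonicity controls the increments $b_k - b_{k+1}$.

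The only delicate point is keeping the boundary terms straight during the reindexing step; I do not anticipate any genuine obstacle, since once the Abel transformation is written correctly each term is manifestly a product of non-negative quantities. (In the application, one takes $w_k = z_k^{(1)}$ and $\tilde{w}_k = z_k^{(m)}$, so that the partial-sum dominations of Lemma~\ref{lemma:cos_partial_sum} supply the hypothesis $W_k \geq \tilde{W}_k$, and the Robinson monotonicity of $(b_k)$ supplies the rest.)
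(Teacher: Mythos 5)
Your proof is correct and uses essentially the same argument as the paper: an Abel summation (summation by parts) exploiting $b_q \geq 0$ and $b_k - b_{k+1} \geq 0$. Working with the difference sequence $d_k = w_k - \tilde{w}_k$ rather than transforming both sums separately is only a cosmetic repackaging of the paper's derivation.
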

\begin{proof}
We have
\begin{eqnarray*}
\sum_{k=1}^{q}{b_k w_k} &=& \sum_{k=1}^{q}{b_k (W_k-W_{k-1})}\\
&=& \underbrace{b_q}_{\geq 0} W_q + \sum_{k=1}^{q-1}{\underbrace{(b_k-b_{k+1})}_{\geq 0}W_k}\\
&\geq & b_q \tilde{W}_q + \sum_{k=1}^{q-1}{(b_k - b_{k+1})\tilde{W}_k} = \sum_{k=1}^{q}{b_k\tilde{W}_k}~.
\end{eqnarray*}
\end{proof}

As soon as there exists $k_0\in\{1,\ldots,q\}$ such that
\begin{eqnarray*}
\sum_{i=1}^{k_0}{w_i} > \sum_{i=1}^{k_0}{\tilde{w}_i} ~,
\end{eqnarray*}
then \eqref{eq:order_eigen_values} holds strictly.

The following proposition gives the usual derivations of eigenvalues in the R-circular Toeplitz case.
\begin{proposition}\label{prop:eigen_value_vector_circular_Toeplitz_full}
Consider $A$, a circular-R Toeplitz matrix of size $n$.  

For $n=2p+1$
\begin{eqnarray}
\nu_m & \triangleq & b_0 + 2\sum_{k=1}^{p}{b_k \cos \left(\frac{2 \pi k m}{n}\right)}~.
\end{eqnarray}
For $m = 1,\ldots,p$ each $\nu_m$ are eigenvalues of $A$ with multiplicity 2 and associated eigenvectors
\begin{align}\label{eqn:eigvec-circ}
\BA{ll}
\vspace{.1cm}
y^{m, \text{cos}} = & \frac{1}{\sqrt{n}} \left(1, \cos \left( 2 \pi m / n \right), \ldots, \cos \left( 2 \pi m (n-1) / n \right) \right)\\
\vspace{.1cm}
y^{m, \text{sin}} = & \frac{1}{\sqrt{n}} \left(1, \sin \left( 2 \pi m / n \right), \ldots, \sin \left( 2 \pi m (n-1) / n \right) \right)~.
\EA
\end{align}

For $n = 2p$
\textbf{\begin{align}\label{eqn:eigval-circ-even}
\BA{lll}
\nu_m  & \triangleq & b_0 + 2\sum_{k=1}^{p-1}{b_k \cos \left(\frac{2 \pi k m}{n}\right)} + b_p \cos \left( \pi m \right)~,
\vspace{.1cm}
\EA
\end{align}}
where $\nu_{0}$ is still singular, with $y^{(0)} = \frac{1}{\sqrt{n}} \left(1 ,\ldots, 1\right)~$. 
$\nu_{p}$ also is, with  $y^{(p)} = \frac{1}{\sqrt{n}} \left(+1,-1, \ldots, +1,-1\right)~$, and there are $p-1$ double eigenvalues, for $m=1,\ldots,p-1$, each associated to the two eigenvectors given in equation~\eqref{eqn:eigvec-circ}.

\end{proposition}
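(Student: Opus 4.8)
The plan is to recognize that the symmetry $b_k = b_{n-k}$ turns $A$ into a genuine circulant matrix, since $A_{jl} = b_{|j-l|} = b_{(l-j)\bmod n}$ once $(b_k)$ is regarded as an $n$-periodic sequence. This places the whole statement inside the classical theory of circulant diagonalization (as already cited), so the task reduces to directly verifying the explicit eigenpairs and then carefully counting multiplicities in each parity class. I would prove the eigenvector claim over $\reals$ (rather than through complex Fourier vectors) and recover the cosine form of $\nu_m$ by pairing the summation index $k$ with $n-k$.

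First I would verify that $y^{m,\cos}$ is an eigenvector. Writing the $j$-th entry of $A y^{m,\cos}$ with indices running from $0$ to $n-1$ (and dropping the $1/\sqrt n$ normalization), the substitution $k=(j-l)\bmod n$ together with $A_{jl}=b_{(l-j)\bmod n}=b_k$ gives
\begin{align*}
(A y^{m,\cos})_j = \sum_{k=0}^{n-1} b_k \cos\!\left(\tfrac{2\pi(j-k)m}{n}\right).
\end{align*}
Expanding with the cosine addition formula splits this into $\cos(2\pi jm/n)\sum_k b_k\cos(2\pi km/n) + \sin(2\pi jm/n)\sum_k b_k\sin(2\pi km/n)$. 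The sine sum vanishes because $b_k=b_{n-k}$ pairs $k$ with $n-k$ while $\sin(2\pi(n-k)m/n)=-\sin(2\pi km/n)$ (the unpaired indices $k=0$, and $k=p$ when $n=2p$, contribute $0$). The remaining cosine sum is exactly $\nu_m$, so $(Ay^{m,\cos})_j = \nu_m\cos(2\pi jm/n)$, i.e. $Ay^{m,\cos}=\nu_m y^{m,\cos}$; the identical computation yields $Ay^{m,\sin}=\nu_m y^{m,\sin}$.

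Next I would derive the stated cosine forms of $\nu_m$. Starting from $\nu_m = \sum_{k=0}^{n-1} b_k\cos(2\pi km/n)$ and again pairing $k$ with $n-k$ via $\cos(2\pi(n-k)m/n)=\cos(2\pi km/n)$: for $n=2p+1$ each $k=1,\dots,p$ pairs with a distinct $n-k$, giving $\nu_m=b_0+2\sum_{k=1}^p b_k\cos(2\pi km/n)$; for $n=2p$ the index $k=p$ is self-paired and contributes the boundary term $b_p\cos(\pi m)$, yielding the even-case formula. Finally I would settle multiplicities through the orthogonality of the real Fourier system: since $\nu_m=\nu_{n-m}$, distinct eigenvalues correspond to $m=0,1,\dots,\lfloor n/2\rfloor$, and for each $m\ge 1$ with $n-m\neq m$ the orthogonal nonzero pair $y^{m,\cos},y^{m,\sin}$ forces multiplicity $2$. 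The self-paired indices are simple: $m=0$ gives the constant vector $y^{(0)}$, and (only when $n=2p$) $m=p$ gives the alternating vector $y^{(p)}$, whose sine part is identically zero. Counting $1+2p$ vectors (odd case) or $1+1+2(p-1)$ (even case) shows they span $\reals^n$, so no further eigenvalues exist.

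The main obstacle is purely bookkeeping in the even case: one must track that $k=p$ is self-conjugate, producing both the boundary term $b_p\cos(\pi m)$ and the simple alternating eigenvector at $m=p$, and that the sine eigenvector degenerates at the self-paired indices, so that $\nu_0$ and $\nu_p$ are simple while every intermediate $\nu_m$ is double. Everything else is a direct trigonometric verification.
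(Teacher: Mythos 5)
Your proposal is correct, and it takes a genuinely more elementary route than the paper. The paper's proof imports the classical complex diagonalization of circulant matrices (cited from Gray): $\nu_m = \sum_{k=0}^{n-1} b_k \rho_m^k$ with $\rho_m = e^{2i\pi m/n}$ and complex Fourier eigenvectors $y^{(m)} = \frac{1}{\sqrt n}\left(1, e^{-2i\pi m/n}, \ldots, e^{-2i\pi m (n-1)/n}\right)$; the symmetry $b_k = b_{n-k}$ then makes each $\nu_m$ real with the stated cosine form, the identity $\nu_{n-m} = \nu_m$ pairs the complex eigenvectors, and the real eigenvectors are recovered as the combinations $y^{m,\cos} = \frac{1}{2}\left(y^{(m)} + y^{(n-m)}\right)$ and $y^{m,\sin} = \frac{1}{2i}\left(y^{(m)} - y^{(n-m)}\right)$, with multiplicities inherited from the complex eigenbasis. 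You avoid the complex detour entirely: you verify the real eigenpairs directly via the cosine addition formula, and the vanishing of the sine sum under the pairing $k \leftrightarrow n-k$ is exactly the same use of $b_k = b_{n-k}$ that the paper makes on the complex side. What the paper's citation buys is completeness for free (the $n$ complex Fourier vectors are known to diagonalize any circulant); since you forgo the citation, your orthogonality-and-counting step ($1+2p$, resp.\ $1+1+2(p-1)$, mutually orthogonal vectors spanning $\reals^n$) is not optional bookkeeping but a necessary piece of your argument --- and you do supply it, including the even-case degenerations (the self-paired index $k=p$ producing the boundary term $b_p \cos(\pi m)$, and $y^{p,\sin} \equiv 0$ making $\nu_p$ simple), which matches the paper's treatment. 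Two incidental remarks: your computation tacitly corrects a typo in the statement, since the first entry of $y^{m,\sin}$ must be $\sin(0) = 0$ rather than $1$, and your verification holds for the corrected vector; also, as in the paper, ``multiplicity 2'' should strictly be read per index $m$ (multiplicity at least 2), since distinct indices can share an eigenvalue when $(b_k)$ is not strictly decreasing --- neither proof makes this caveat explicit, so you are not behind the paper on that point.
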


\begin{proof}
Let us compute the spectrum of a circular-R, symmetric, circulant Toeplitz matrix. From \citet{gray2006toeplitz}, the eigenvalues are
\begin{align}
\nu_m = \sum_{k=0}^{n-1}{b_k \rho_m^k}~,
\end{align}
with $\rho_m =\exp(\frac{2 i \pi m}{n})$, and the corresponding eigenvectors are,
\begin{align}
y^{(m)} = \frac{1}{\sqrt{n}} \left(1, e^{-2i\pi m/n},\ldots, e^{-2i\pi m (n-1)/n}\right)~,
\end{align}
for $m=0, \ldots, n-1$.

\textbf{Case $n$ is odd, with $n=2p+1$.}
Using the symmetry assumption $b_k = b_{n-k}$, and the fact that $\rho_m^{n-k} = \rho_m^{n} \rho_m^{-k} = \rho_m^{-k}$, it results in real eigenvalues,
\begin{align}\label{eq:derivation_eigen_values}
\BA{lll}
\vspace{.1cm}
\nu_m & = & b_0 + \sum_{k=1}^{p}{b_k \rho_m^k} + \sum_{k=p+1}^{n-1}{b_k \rho_m^k}\\
\vspace{.1cm}
& = & b_0 + \sum_{k=1}^{p}{b_k \rho_m^k} + \sum_{k=1}^{p}{b_{n-k} \rho_m^{n-k}}\\
\vspace{.1cm}
& = & b_0 + \sum_{k=1}^{p}{b_k (\rho_m^k+\rho_m^{-k})}\\
\vspace{.1cm}
& = & b_0 + 2\sum_{k=1}^{p}{b_k \cos \left(\frac{2 \pi k m}{n}\right)}~.
\EA
\end{align}
Observe also that $\nu_{n-m} = \nu_{m}$, for $m = 1, \ldots, n-1$, resulting in $p+1$ real distinct eigenvalues. $\nu_{0}$ is singular, whereas for $m = 1, \ldots, p$, $\nu_{m}$ has multiplicity $2$, with eigenvectors $y^m$ and $y^{n-m}$.
This leads to the two following real eigenvectors, $y^{m, \text{cos}} = 1/2 (y^m + y^{n-m})$ and $y^{m, \text{sin}}=1/(2i)(y^m - y^{n-m})$
\begin{align}
\BA{ll}
\vspace{.1cm}
y^{m, \text{cos}} = & \frac{1}{\sqrt{n}} \left(1, \cos \left( 2 \pi m / n \right), \ldots, \cos \left( 2 \pi m (n-1) / n \right) \right)\\
\vspace{.1cm}
y^{m, \text{sin}} = & \frac{1}{\sqrt{n}} \left(1, \sin \left( 2 \pi m / n \right), \ldots, \sin \left( 2 \pi m (n-1) / n \right) \right)\\
\EA
\end{align}

\textbf{Case $n$ is even, with $n=2p$.}
A derivation similar to \eqref{eq:derivation_eigen_values} yields,
\begin{align}
\BA{lll}
\nu_m  & = & b_0 + 2\sum_{k=1}^{p-1}{b_k \cos \left(\frac{2 \pi k m}{n}\right)} + b_p \cos \left( \pi m \right)
\vspace{.1cm}
\EA
\end{align}
$\nu_{0}$ is still singular, with $y^{(0)} = \frac{1}{\sqrt{n}} \left(1 ,\ldots, 1\right)~$,
$\nu_{p}$ also is, with  $y^{(p)} = \frac{1}{\sqrt{n}} \left(+1,-1, \ldots, +1,-1\right)~$, and there are $p-1$ double eigenvalues, for $m=1,\ldots,p-1$, each associated to the two eigenvectors given in equation~\eqref{eqn:eigvec-circ}.

\end{proof}

The following proposition is a crucial property of the eigenvalues of a circular Toeplitz matrix. It later ensures that when choosing the second eigenvalues of the laplacian, it will corresponds to the eigenvectors with the lowest period. It is paramount to prove that the latent ordering of the data can be recovered from the curve-like embedding.

\begin{proposition}\label{prop:eigen_values_order}
A circular-R, circulant Toeplitz matrix has eigenvalues $(\nu_m)_{m=0,\ldots,p}$ such that $\nu_1 \geq \nu_m$ for all $m=2,\ldots,p$ with $n=2p$ or $n = 2p+1$.
\end{proposition}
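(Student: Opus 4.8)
The plan is to reduce the claim $\nu_1 \ge \nu_m$ to the partial-sum domination already established in Lemma~\ref{lemma:cos_partial_sum}, fed into the summation-by-parts inequality of Proposition~\ref{prop:partial_sum_trick}. First I would subtract the closed forms of Proposition~\ref{prop:eigen_value_vector_circular_Toeplitz_full}. Setting $D_k \triangleq z_k^{(1)} - z_k^{(m)}$, the constant $b_0$ cancels, and for $n=2p+1$ one gets
\begin{eqnarray*}
\nu_1 - \nu_m = 2 \sum_{k=1}^{p} b_k D_k~,
\end{eqnarray*}
whereas for $n=2p$ the singular top term carries only a factor one, giving
\begin{eqnarray*}
\nu_1 - \nu_m = 2\sum_{k=1}^{p-1} b_k D_k + b_p\left(\cos \pi - \cos(\pi m)\right)~.
\end{eqnarray*}

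Second, I record that the partial sums $T_q \triangleq \sum_{k=1}^{q} D_k = S_q^{(1)} - S_q^{(m)}$ are exactly what Lemma~\ref{lemma:cos_partial_sum} controls: outside the case ``$n$ even and $m$ even'' one has $T_q \ge 0$ for every $q=1,\ldots,p$ (with $T_p=0$). In this regime Proposition~\ref{prop:partial_sum_trick}, applied with $w_k = z_k^{(1)}$, $\tilde{w}_k = z_k^{(m)}$ and the non-increasing, non-negative sequence $(b_k)$, immediately yields $\sum_k b_k D_k \ge 0$. For $n=2p+1$ this delivers $\nu_1 \ge \nu_m$ at once. For $n=2p$ with $m$ odd, the residual term is $b_p(\cos\pi - \cos(\pi m)) = b_p(-1-(-1)) = 0$, so the same conclusion follows.

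Third comes the delicate case $n=2p$ with $m$ even, where the residual term equals $b_p(-1-1) = -2b_p \le 0$ and, crucially, Lemma~\ref{lemma:cos_partial_sum} only guarantees $T_q \ge 0$ up to $q=p-2$. Here I would perform the Abel rearrangement explicitly,
\begin{eqnarray*}
\sum_{k=1}^{p-1} b_k D_k = b_{p-1} T_{p-1} + \sum_{k=1}^{p-2} (b_k - b_{k+1}) T_k~,
\end{eqnarray*}
and feed in the exact endpoint value from \eqref{eq:S_m_p_1}, namely $T_{p-1} = S_{p-1}^{(1)} - S_{p-1}^{(m)} = 0 - (-1) = 1$. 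Since each increment $b_k - b_{k+1}$ is non-negative and each $T_k \ge 0$ for $k \le p-2$, the sum is at least $b_{p-1}$, so by monotonicity $2\sum_{k=1}^{p-1} b_k D_k \ge 2 b_{p-1} \ge 2 b_p$, which precisely offsets the $-2b_p$ deficit and gives $\nu_1 \ge \nu_m$.

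I expect this last case to be the main obstacle: it is the only place where the factor-one singular eigenvalue term breaks the symmetry of the circulant spectrum and where the clean partial-sum domination fails at the endpoint. It forces the explicit Abel computation, and the argument only closes because the precise endpoint identity $T_{p-1}=1$ supplies exactly one unit of slack, which the monotonicity gap $b_{p-1}\ge b_p$ converts into the $b_p$ needed to absorb the negative residual.
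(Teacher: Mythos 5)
Your proof is correct and takes essentially the same route as the paper's: the same case split on the parities of $n$ and $m$, the same reduction to the partial-sum dominations of Lemma~\ref{lemma:cos_partial_sum} via the summation-by-parts inequality of Proposition~\ref{prop:partial_sum_trick}, and in the delicate case ($n$ and $m$ both even) the same use of the endpoint identity $S_{p-1}^{(1)} - S_{p-1}^{(m)} = 1$ converted through $b_{p-1} \geq b_p$ into the $2b_p$ needed to absorb the residual from the singular eigenvalue term. The only cosmetic difference is that you unroll the Abel rearrangement explicitly, where the paper instead invokes Proposition~\ref{prop:partial_sum_trick} with the shifted sequence $\tilde{w}_{p-1} = z_{p-1}^{(m)} + 1$.
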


\begin{proof}
Since the shape of the eigenvalues changes with the parity of $n$, let's again distinguish the cases.

For $n$ odd, $\nu_1\geq \nu_m$ is equivalent to showing
\begin{eqnarray}
\sum_{k=1}^{p}{b_k \cos(2\pi k /n)} \geq \sum_{k=1}^{p}{b_k \cos(2\pi k m /n)}~.
\end{eqnarray}
It is true by combining proposition \ref{prop:partial_sum_trick} with lemma \ref{lemma:cos_partial_sum}. The same follows for $n$ even and $m$ odd.


Consider $n$ and $m$ even. We now need to prove that
\begin{eqnarray}\label{eq:eigen_value_even}
2\sum_{k=1}^{p-1}{b_k \cos \left(\frac{2 \pi k }{n}\right)}-b_p\geq 2\sum_{k=1}^{p-1}{b_k \cos \left(\frac{2 \pi k m}{n}\right)}+b_p~.
\end{eqnarray}

From lemma \ref{lemma:cos_partial_sum}, we have that
\begin{eqnarray}
\sum_{k=1}^{q}{z_k^{(1)}} &\geq& \sum_{k=1}^{q}{z_k^{(m)}}~\text{for }q=1,\ldots,p-2\\
\sum_{k=1}^{p-1}{z_k^{(1)}} &\geq& \sum_{k=1}^{p-1}{z_k^{(m)}} +1~.
\end{eqnarray}
Applying proposition \ref{prop:partial_sum_trick} with $w_k = z_k^{(1)}$ and $\tilde{w}_k=z_k^{(m)}$ for $k\leq p-2$ and $\tilde{w}_{p-1} = z_{p-1}^{(m)}+1$, we get
\begin{eqnarray}
\sum_{k=1}^{p-1}{z_k^{(1)}b_k} \geq \sum_{k=1}^{p-1}{b_kz_k^{(m)}} +b_{p-1}\\
2\sum_{k=1}^{p-1}{z_k^{(1)}b_k} \geq 2\sum_{k=1}^{p-1}{b_kz_k^{(m)}} +2b_{p}~.
\end{eqnarray}
The last inequality results from the monotonicity of $(b_k)$ and is equivalent to \eqref{eq:eigen_value_even}. It concludes the proof.
\end{proof}

\subsection{Recovering exactly the order.}
Here we provide the proof for Theorem \ref{th:without_noise}.
\begin{theorem}\label{th:proof_without_noise}
Consider the seriation problem from an observed matrix $\Pi S\Pi^T$, where $S$ is a R-circular Toeplitz matrix. Denote by $L$ the associated graph Laplacian. Then the two dimensional laplacian spectral embedding (\eqref{eqn:lapl-embed} with d=2) of the items lies ordered and equally spaced on a circle.
\end{theorem}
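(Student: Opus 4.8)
The plan is to reduce the permuted problem to the unpermuted one, read off the explicit spectrum of $A\in\SCR$ from Proposition~\ref{prop:eigen_value_vector_circular_Toeplitz_full}, and then use the eigenvalue ordering of Proposition~\ref{prop:eigen_values_order} to isolate the correct two-dimensional eigenspace of the Laplacian. First I would note that conjugation by $\Pi$ merely permutes eigenvector entries: if $Av=\nu v$ then $(\Pi A\Pi^T)(\Pi v)=\nu(\Pi v)$, and $L_{\Pi A\Pi^T}=\Pi L_A\Pi^T$ as well. Hence the \kLE{2} point attached to item $i$ of $\Pi A\Pi^T$ equals the \kLE{2} point attached to index $\pi(i)$ of $A$, so it suffices to establish the circle picture for $A$ itself and then observe that item $i$ receives angle $2\pi\pi(i)/n$.

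Next I would pass from $A$ to its Laplacian. Since $A$ is circulant, every row sums to the same constant $c\triangleq\sum_{k=0}^{n-1}b_k$, so $D=\diag(A\ones)=c\idm$ and, as recorded in \S\ref{ssec:spectral-prop-lapl}, $L_A=c\idm-A$ while $\Lrw=D^{-1}L_A=\tfrac1c L_A$. All three matrices therefore share the eigenvectors of $A$, and the Laplacian eigenvalues are $\lambda_m=c-\nu_m$ (scaled by $1/c$ for $\Lrw$), which reverses the order so that the largest non-trivial $\nu_m$ becomes the smallest non-zero $\lambda_m$. By Proposition~\ref{prop:eigen_values_order}, $\nu_1\geq\nu_m$ for all $m\geq2$, so $\lambda_1=c-\nu_1$ is the smallest non-zero Laplacian eigenvalue, and by Proposition~\ref{prop:eigen_value_vector_circular_Toeplitz_full} it is a double eigenvalue with eigenspace $\mathrm{span}\{y^{1,\cos},y^{1,\sin}\}$ from~\eqref{eqn:eigvec-circ} (the even case $n=2p$ being identical, using the even-$n$ formulas of the same proposition). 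Reading off the $j$-th entries, the \kLE{2} places index $j$ at $\frac1{\sqrt n}\bigl(\cos(2\pi j/n),\sin(2\pi j/n)\bigr)$, up to the overall normalization fixing $\tilde\Phi^T\tilde\Phi=\idm$; these points lie on a circle, with consecutive indices separated by the angle $2\pi/n$ and ordered along the circle exactly as $j=0,1,\ldots,n-1$. Combined with the first paragraph, the items of $\Pi A\Pi^T$ land on the circle in the circular order encoded by $\Pi$, so sorting the angles $\theta(i)$ in Algorithm~\ref{alg:circular-2d-ordering} recovers that order and solves \ref{eqn:circ-seriation}.

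The delicate point is the two-fold degeneracy of $\lambda_1$: the \kLE{2} is only defined up to an orthogonal transformation of the eigenplane, so I must check that \emph{every} orthonormal basis of this eigenspace produces the same circular picture. This holds because $y^{1,\cos}$ and $y^{1,\sin}$ are orthogonal and of equal norm (the trigonometric sums give $\langle y^{1,\cos},y^{1,\sin}\rangle=\frac{1}{2n}\sum_j\sin(4\pi j/n)=0$ and $\|y^{1,\cos}\|=\|y^{1,\sin}\|$ for $n\geq3$); after normalization, any orthonormal basis of the eigenplane is obtained from this pair by a $2\times2$ orthogonal matrix, i.e. a rotation or reflection of the plane, which maps the circle to itself, preserves the equal spacing, and preserves the cyclic order up to a global shift and a possible reversal — both admissible solutions of \ref{eqn:circ-seriation}. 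The second subtlety I would flag is that this requires $\lambda_1$ to be \emph{exactly} doubly degenerate, i.e. $\nu_1>\nu_m$ strictly for $m\geq2$, so that no eigenvector of a higher-frequency mode ($m\geq2$, which winds around the circle $m$ times) contaminates the embedding; this strictness is precisely what the remark following Proposition~\ref{prop:partial_sum_trick} delivers, via the strict partial-sum inequalities of Lemma~\ref{lemma:cos_partial_sum}, whenever the Robinson sequence $(b_k)$ is not everywhere constant. Assembling these observations completes the argument; the genuinely hard work — the comparison of cosine partial sums yielding $\nu_1\geq\nu_m$ — has already been carried out in Lemma~\ref{lemma:cos_partial_sum} and Propositions~\ref{prop:partial_sum_trick}–\ref{prop:eigen_values_order}.
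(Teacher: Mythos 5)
Your proposal is correct and follows essentially the same route as the paper's proof: reduce to the unpermuted matrix by conjugation, use the circulant identity $L_A = c\,\idm - A$ so that Proposition~\ref{prop:eigen_values_order} identifies the second-smallest Laplacian eigenspace with $\mathrm{span}\{y^{1,\cos}, y^{1,\sin}\}$ from Proposition~\ref{prop:eigen_value_vector_circular_Toeplitz_full}, and read off the equally spaced circle at angles $2\pi\sigma(i)/n$. Your additional care about the orthogonal ($2\times 2$) ambiguity of the degenerate eigenplane and the need for a strict gap $\nu_1 > \nu_m$ (failing only for degenerate, e.g.\ constant, sequences $(b_k)$) makes explicit two points that the paper's terse proof passes over silently.
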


\begin{proof}
Denote $A= \Pi S\Pi^T$. The unnormalized Laplacian of $A$ is $L \triangleq \text{diag}(A 1)-A$. The eigenspace associated to its second smallest eigenvalue corresponds to that of $\mu_1$ in $A$. $A$ and $S$ share the same spectrum. Hence the eigenspace of $\mu_1$ in $A$ is composed of the two vectors $\Pi y^{1,sin}$ and $\Pi y^{1,cos}$. 

Denote by $(p_i)_{i=1,\ldots,n}\in\mathbb{R}^2$ the \kLE{2}. Each point is parametrized by
\begin{eqnarray}
p_i = (\cos(2\pi \sigma(i)/n),\sin(2\pi \sigma(i)/n))~,
\end{eqnarray}
where $\sigma$ is the permutation represented matricially by $\Pi$.
\end{proof}
\section{Perturbation analysis}\label{sec:perturbation_analysis}
The purpose of the following is to provide guarantees of robustness to the noise with respect to quantities that we will not try to explicit. Some in depths perturbation analysis exists in similar but simpler settings \citep{Fogel}. In particular, linking performance of the algorithm while controlling the perturbed embedding is much more challenging than with a one dimensional embedding. 

We have performed graph Laplacian re-normalization to make the initial similarity matrix closer to a Toeplitz matrix. Although we cannot hope to obtain exact Toeplitz Matrix. Hence perturbation analysis provide a tool to recollect approximate Toeplitz matrix with guarantees to recover the ordering.

\subsection{Davis-Kahan}

We first characterize how much each point of the new embedding deviate from its corresponding point in the rotated initial set of points. Straightforward application of Davis-Kahan provides a bound on the Frobenius norm that does not grant directly for individual information on the deviation.

\begin{proposition}[Davis-Kahan]\label{prop:davis_kahan_circular}

Consider $L$ a graph Laplacian of a R-symmetric-circular Toeplitz matrix $A$. We add a symmetric perturbation matrix $H$ and denote by $\Tilde{A} = A + H$ and $\tilde{L}$ the new similarity matrix and graph Laplacian respectively. Denote by $(p_i)_{i=1,\ldots,n}$ and $(\tilde{p}_i)_{i=1,\ldots,n}$ the \kLE{2} coming from $L$ and $\tilde{L}$ respectively. Then there exists a cyclic permutation $\tau$ of $\{1,\ldots,n\}$ such that
\begin{eqnarray}
\sup_{i=1,\ldots,n} ||p_{\tau(i)} - \tilde{p}_i||_2 \leq   \frac{2^{3/2}\min(\sqrt{2}||L_H||_2,||L_H||_F)}{\min(|\lambda_1|,|\lambda_2-\lambda_1|)}~,
\end{eqnarray}\label{eq:perturbation_analysis}
where $\lambda_1<\lambda_2$ are the first non-zeros eigenvalues of $L$.
\end{proposition}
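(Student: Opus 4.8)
The plan is to reduce the claim to the Davis--Kahan $\sin\Theta$ theorem applied to the two-dimensional invariant subspace carrying the embedding, and then to (i) pass from a subspace/Frobenius bound to a per-point supremum bound and (ii) reconcile the orthogonal alignment produced by Davis--Kahan with a genuine cyclic permutation.

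First I would fix the unperturbed picture. By Theorem~\ref{th:proof_without_noise} the \kLE{2} of $L$ places the items at the $n$-th roots of unity, i.e.\ $p_i=(\cos(2\pi\sigma(i)/n),\sin(2\pi\sigma(i)/n))$, equally spaced by $2\pi/n$. Concretely, since $A$ is circulant we have $L=c\,\idm-A$ with $c=\sum_k b_k$, so $L$ and $A$ share eigenvectors and, by Proposition~\ref{prop:eigen_values_order}, the largest non-trivial eigenvalue $\nu_1$ of $A$ (of multiplicity two) corresponds to the \emph{smallest non-zero} eigenvalue $\lambda_1=c-\nu_1$ of $L$, again of multiplicity two, with eigenspace spanned by $\Pi y^{1,\cos},\Pi y^{1,\sin}$; the next eigenvalue is $\lambda_2=c-\nu_2>\lambda_1$. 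The target object is this two-dimensional eigenspace, whose orthonormal basis I collect into $V\in\reals^{n\times 2}$ (its rows are the $p_i$). Note that $\ones$ is an exact null vector of every graph Laplacian, hence of both $L$ and $\tilde L$, so the two embeddings are genuinely compared on the two-dimensional eigenspaces lying immediately above the shared eigenvalue $0$.

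Next I would apply Davis--Kahan to these two subspaces. The cluster $\{\lambda_1\}$ is separated from the remainder of the spectrum of $L$ by $0$ below and by $\lambda_2$ above, so the relevant eigengap is $\delta=\min(\lambda_1-0,\ \lambda_2-\lambda_1)=\min(|\lambda_1|,|\lambda_2-\lambda_1|)$, exactly the denominator in the statement; getting this gap right for the \emph{degenerate} (double) eigenvalue is the first point requiring care. The $\sin\Theta$ theorem gives $\|\sin\Theta\|_2\le\|L_H\|_2/\delta$ and $\|\sin\Theta\|_F\le\|L_H\|_F/\delta$, and since the subspace has dimension two, $\|\sin\Theta\|_F\le\sqrt2\,\|\sin\Theta\|_2$, whence $\|\sin\Theta\|_F\le\min(\sqrt2\,\|L_H\|_2,\|L_H\|_F)/\delta$. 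Combining this with the standard comparison $\min_{O\in\mathcal{O}(2)}\|V-\tilde V O\|_F\le\sqrt2\,\|\sin\Theta\|_F$ between principal angles and the best orthonormal alignment produces an orthogonal $O$ with $\|V-\tilde V O\|_F$ of the order of $\min(\sqrt2\,\|L_H\|_2,\|L_H\|_F)/\delta$.

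Finally I would convert this into the stated per-point bound. The elementary inequality $\max_i\|M_{i,:}\|_2\le\|M\|_F$ (each squared row norm is dominated by the sum of all of them) immediately upgrades the Frobenius bound on $V-\tilde V O$ to a supremum over the rows, i.e.\ over the embedded points. It then remains to identify the continuous alignment $O$ with a cyclic permutation $\tau$: because the rows of $V$ are the $n$-th roots of unity, a single rotation orbit equally spaced by $2\pi/n$, a planar rotation by a multiple of $2\pi/n$ acts on them exactly as a cyclic shift while a reflection acts as a reversal of the circular order, and writing $\tilde p_i=O(O^{T}\tilde p_i)\approx Op_i$ then matches $\tilde p_i$ to $p_{\tau(i)}$. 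I expect this last step to be the main obstacle: Davis--Kahan only furnishes an \emph{arbitrary} $2\times2$ orthogonal $O$, whereas the conclusion needs a discrete cyclic $\tau$, and the reconciliation relies precisely on the rigid, equally-spaced geometry of the unperturbed embedding established in the first step, together with controlling the residual fractional rotation by half the angular spacing. I expect the bookkeeping of the normalization (unit-circle points versus orthonormal eigenvectors of norm $1/\sqrt n$) and of this rotation-to-permutation rounding to account for the remaining numerical constant $2^{3/2}$.
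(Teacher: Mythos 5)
Your proposal follows essentially the same route as the paper's proof: the paper likewise invokes the Yu--Wang--Samworth form of Davis--Kahan \citep[Theorem 2]{yu2014useful} (whose constant $2^{3/2}\min(\sqrt{2}\|L_H\|_2,\|L_H\|_F)/\min(|\lambda_1|,|\lambda_2-\lambda_1|)$ you rederive from the $\sin\Theta$ theorem with the dimension-two trick), passes to the per-point bound via $\|M\|_{2,\infty}\le\|M\|_F$, and uses the equally-spaced circle geometry of Theorem~\ref{th:proof_without_noise} to convert the orthogonal alignment into a cyclic permutation. If anything you are more careful than the paper on this last step --- the paper simply asserts that a rotation acts as a translation of angles and hence yields a cyclic shift, without the rounding to a multiple of $2\pi/n$ or the reflection case that you flag --- though note that the constant $2^{3/2}$ comes entirely from the cited Davis--Kahan variant, not from that rounding.
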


\begin{proof}
For a matrix $V\in\mathbb{R}^{n\times d}$, denote by
\begin{eqnarray*}
\big\vert\big\vert V \big\vert\big\vert_{2,\infty} = \sup_{i=1,\ldots,n} \big\vert\big\vert V_i \big\vert\big\vert_2~,
\end{eqnarray*}
where $V_i$ are the columns of $V$.
Because in $\mathbb{R}^n$ we have $||\cdot||_{\infty} \leq ||\cdot||_2$, it follows that
\begin{eqnarray*}
\big\vert\big\vert V \big\vert\big\vert_{2,\infty} &\leq& \big\vert\big\vert   \big(||V_i||\big)_{i=1,\ldots,n} \big\vert\big\vert_2 = \sqrt{\sum_{i=1}^{n}{||V_i||_2^2}}\\
&\leq& \big\vert\big\vert V \big\vert\big\vert_{F}~.
\end{eqnarray*}

 We apply \citep[Theorem 2]{yu2014useful} to our perturbed matrix, a simpler version of classical davis-Kahan theorem \citep{davis1970rotation}.
 
 Let's denote by $(\lambda_1,\lambda_2)$ the first non-zeros eigenvalues of $L$ and by $V$ its associated 2-dimensional eigenspace. Similarly denote by $\tilde{V}$ the 2-dimensional eigenspace associated to the first non-zeros eigenvalues of $\tilde{L}$. There exists a rotation matrix $O\in SO_2(\mathbb{R})$ such that
\begin{eqnarray}
||\tilde{V}-VO||_F \leq \frac{2^{3/2}\min(\sqrt{2}||L_H||_2,||L_H||_F)}{\min(|\lambda_1|,|\lambda_2-\lambda_1|)} ~.
\end{eqnarray}

In particular we have
\begin{eqnarray*}
\big\vert\big\vert \tilde{V} - V O \big\vert\big\vert_{2,\infty} &\leq& \big\vert\big\vert \tilde{V} - V O \big\vert\big\vert_{F} \\
\big\vert\big\vert \tilde{V} - V O \big\vert\big\vert_{2,\infty} &\leq& \frac{2^{3/2}\min(\sqrt{2}||L_H||_2,||L_H||_F)}{\min(|\lambda_1|,|\lambda_2-\lambda_1|)}
\end{eqnarray*}


Finally because $A$ is a R-symmetric-circular Toeplitz, from Theorem \ref{th:without_noise}, the row of $V$ are $n$ ordered points uniformly sampled on the unit circle. Because applying a rotation is equivalent to translating the angle of these points on the circle. It follows that there exists a cyclic permutation $\tau$ such that
\begin{eqnarray*}
\sup_{i=1,\ldots,n} ||p_i - \tilde{p}_{\tau(i)}||_2 \leq   \frac{2^{3/2}\min(\sqrt{2}||L_H||_2,||L_H||_F)}{\min(|\lambda_1|,|\lambda_2-\lambda_1|)}~,
\end{eqnarray*}

\end{proof}

\subsection{Exact recovery with noise for Algorithm \ref{alg:circular-2d-ordering}}
When all the points remain in a sufficiently small ball around the circle, Algorithm \ref{alg:circular-2d-ordering} can exactly find the ordering. Let's first start with a geometrical lemma quantifying the radius of the ball around each $(\cos(\theta_k),\sin(\theta_k))$ so that they do not intersect.

\begin{lemma}\label{lemma:variation_angle_arcos}
For $\xx\in\mathbb{R}^2$ and $\theta_k = 2\pi k/n$ for $k\in\mathbb{N}$ such that
\begin{eqnarray}\label{eq:safe_ball}
||\xx - (\cos(\theta_k),\sin(\theta_k))||_2 \leq \sin(\pi/n)~,
\end{eqnarray}
we have
\begin{eqnarray*}
|\theta_x - \theta_k|\leq \pi/n~,
\end{eqnarray*}
where $\theta_x = \tan^{-1}(\xx_1/\xx_2) + 1 [\xx_1<0]\pi$.
\end{lemma}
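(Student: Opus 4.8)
The plan is to reduce the statement to an elementary trigonometric inequality for the single scalar $\delta \triangleq \theta_x - \theta_k$, the signed angular gap between $\xx$ and the unit-circle point $P_k \triangleq (\cos\theta_k, \sin\theta_k)$, reduced to $(-\pi,\pi]$. First I would note that the hypothesis $\|\xx - P_k\|_2 \leq \sin(\pi/n)$ forces $\xx \neq 0$: by the reverse triangle inequality $\|\xx\|_2 \geq \|P_k\|_2 - \|\xx - P_k\|_2 \geq 1 - \sin(\pi/n) > 0$ for $n \geq 3$, so the polar angle $\theta_x$ is well defined and satisfies $(\cos\theta_x, \sin\theta_x) = \xx/\|\xx\|_2$ (the expression in the statement is just the standard branch formula for the polar angle of $\xx$).

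Next I would expand the squared distance. Since $P_k$ is a unit vector, $\langle \xx, P_k\rangle = \|\xx\|_2 \cos\delta$, hence
\[
\|\xx - P_k\|_2^2 = \|\xx\|_2^2 - 2\|\xx\|_2\cos\delta + 1 = (\|\xx\|_2 - \cos\delta)^2 + \sin^2\delta \geq \sin^2\delta.
\]
Combined with the hypothesis, this gives $|\sin\delta| \leq \sin(\pi/n)$, which is the crux of the estimate.

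The final, and main, step is to pass from $|\sin\delta| \leq \sin(\pi/n)$ to $|\delta| \leq \pi/n$. This is not automatic, since $|\sin\delta| \leq \sin(\pi/n)$ also admits solutions with $\delta$ near $\pm\pi$; the obstacle is to rule out this ``far side'' branch. I would do so using the same expansion: if $\cos\delta \leq 0$, then $\|\xx - P_k\|_2^2 = \|\xx\|_2^2 - 2\|\xx\|_2\cos\delta + 1 \geq 1$, so $\|\xx - P_k\|_2 \geq 1 > \sin(\pi/n)$ (again using $n \geq 3$), contradicting the hypothesis. Therefore $\cos\delta > 0$, i.e. $|\delta| < \pi/2$. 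On $[0,\pi/2)$ the sine is increasing, so $|\sin\delta| = \sin|\delta|$, and $\sin|\delta| \leq \sin(\pi/n)$ together with $\pi/n \leq \pi/2$ yields $|\delta| \leq \pi/n$, as claimed. Geometrically this is the statement that the cone from the origin tangent to the ball $B(P_k, \sin(\pi/n))$ has half-angle $\arcsin(\sin(\pi/n)) = \pi/n$, but the algebraic argument above avoids discussing tangency explicitly.
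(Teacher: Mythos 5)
Your proof is correct, but it takes a genuinely different route from the paper. The paper argues geometrically by an extremal/tangency computation: after reducing WLOG to $\theta_k=0$, it writes $\xx = \boldsymbol{e}_1 + \sin(\pi/n)\,\uu_x$ on the boundary of the ball, asserts that the angle $\theta_x$ is maximized when $\uu_x$ is orthogonal to $\xx$ (the tangency condition), and then computes the maximal angle to be $\pi/n$ — precisely the tangent-cone picture you mention in your closing remark but deliberately avoid. Your argument instead completes the square in the expansion $\|\xx - P_k\|_2^2 = \left(\|\xx\|_2 - \cos\delta\right)^2 + \sin^2\delta$ to get $|\sin\delta| \leq \sin(\pi/n)$ directly, and then excludes the far-side branch $\delta$ near $\pm\pi$ by noting that $\cos\delta \leq 0$ would force $\|\xx - P_k\|_2 \geq 1$. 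What your version buys is rigor at the two points where the paper is informal: the paper's optimality claim (``a $\xx$ for which $\theta_x$ is maximum \ldots is such that $\uu_x$ and $\xx$ are orthonormal'') is asserted without justification, its final step is left to ``elementary geometrical arguments,'' and it never checks that $\theta_x$ is well defined ($\xx \neq 0$), which you handle via the reverse triangle inequality; it also silently assumes the extremum is attained on the boundary sphere. What the paper's version buys is brevity and geometric transparency — the half-angle $\arcsin(\sin(\pi/n)) = \pi/n$ of the tangent cone is visible at once, and the far-side branch never arises because the entire ball sits inside the cone. Your mild standing assumption $n \geq 3$ (so that $\sin(\pi/n) < 1$) is harmless here, as it holds in every use of the lemma in the paper.
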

\begin{proof}
Let $\xx$ that satisfies \eqref{eq:safe_ball}. Let's assume without loss of generality that $\theta_k=0$ and $\theta_x\geq 0$. Assume also that $\xx = \boldsymbol{e}_1+\sin(\pi/n) \uu_x$ where $\uu$ is a unitary vector. A $\xx$ for which $\theta_x$ is maximum over these constrained is such that $\uu_x$ and $\xx$ are orthonormal.

Parametrize $\uu_x = (\cos(\gamma),\sin(\gamma))$, because $\uu_x$ and $\xx$ are orthonormal, we have $\cos(\gamma) = \sin(-\pi/n)$. Finally since $\theta_x\geq 0$, it follows that $\gamma = \pi/2 +\pi/n$ and hence with elementary geometrical arguments $\theta_x = \pi/n$.

\end{proof}

\begin{proposition}[Exact circular recovery under noise in Algorithm \ref{alg:circular-2d-ordering}]
Consider a matrix $\tilde{A} = \Pi^T A \Pi + H$ with $A$ a $R-$circular Toeplitz ($\Pi$ is the matrix associated to the permutation $\sigma$) and $H$ a symmetric matrix such that
\begin{eqnarray*}
\min(\sqrt{2}||L_H||_2,||L_H||_F) \leq 2^{-3/2} \sin(\pi/n) \min(|\lambda_1|,|\lambda_2-\lambda_1|)~,
\end{eqnarray*}
where $\lambda_1<\lambda_2$ are the first non-zeros eigenvalues of the graph Laplacian of $\Pi^T A \Pi$. Denote by $\hat{\sigma}$ the output of Algorithm \ref{alg:circular-2d-ordering} when having $\tilde{A}$ as input. Then there exists a cyclic permutation $\tau$ such that
\begin{eqnarray}
\hat{\sigma} = \sigma^{-1}\circ\tau^{-1}~.
\end{eqnarray}
\end{proposition}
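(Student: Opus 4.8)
The plan is to chain together the three results already established for the circular case: the exact embedding of Theorem~\ref{th:without_noise}, the perturbation bound of Proposition~\ref{prop:davis_kahan_circular}, and the geometric stability of the angle map in Lemma~\ref{lemma:variation_angle_arcos}. The standing hypothesis on $H$ is calibrated precisely so that the Davis--Kahan displacement lands inside the ``safe ball'' radius of the lemma, after which exact order recovery is forced.

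First I would pin down the clean embedding. The noiseless part $\Pi^T A \Pi$ has the form $\Pi' A (\Pi')^T$ with $\Pi' = \Pi^T$ representing the permutation $\sigma^{-1}$, so Theorem~\ref{th:without_noise} applies and yields an exact \kLE{2} in which item $i$ sits at $p_i = (\cos(2\pi\sigma^{-1}(i)/n), \sin(2\pi\sigma^{-1}(i)/n))$; that is, the $n$ clean points are equally spaced on the unit circle by the angle $2\pi/n$, and their cyclic order is exactly the one encoded by $\sigma^{-1}$.

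Next I would quantify the displacement induced by $H$. Applying Proposition~\ref{prop:davis_kahan_circular} to $\tilde A = \Pi^T A \Pi + H$ produces a cyclic permutation $\tau$ with $\sup_i \|p_{\tau(i)} - \tilde p_i\|_2 \leq 2^{3/2}\min(\sqrt2\|L_H\|_2,\|L_H\|_F)/\min(|\lambda_1|,|\lambda_2-\lambda_1|)$. Substituting the assumption $\min(\sqrt2\|L_H\|_2,\|L_H\|_F) \leq 2^{-3/2}\sin(\pi/n)\min(|\lambda_1|,|\lambda_2-\lambda_1|)$, the two constants cancel and every perturbed point obeys $\|p_{\tau(i)} - \tilde p_i\|_2 \leq \sin(\pi/n)$. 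This is exactly condition~\eqref{eq:safe_ball} of Lemma~\ref{lemma:variation_angle_arcos}, with $\theta_k$ the clean angle of $p_{\tau(i)}$, so each perturbed angle $\theta_{\tilde p_i}$ computed in step~3 of Algorithm~\ref{alg:circular-2d-ordering} lies within $\pi/n$ of the clean angle $2\pi\sigma^{-1}(\tau(i))/n$.

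Finally I would convert this angular stability into exact order recovery. Because the clean angles are separated by $2\pi/n$ while each perturbed angle deviates by at most $\pi/n$, i.e.\ half that gap, the admissible arcs of two cyclically adjacent clean points meet only at a single endpoint; hence no two perturbed points cross, and the cyclic order of the sorted angles coincides with the cyclic order of $\sigma^{-1}(\tau(i))$. Reading off the permutation returned by Algorithm~\ref{alg:circular-2d-ordering} and tracking the composition conventions then gives $\hat\sigma = \sigma^{-1}\circ\tau^{-1}$ for the cyclic permutation $\tau$ supplied by Davis--Kahan. I expect this last step to be the main obstacle, for two reasons: the boundary configuration in which an adjacent pair of perturbed points shares the common endpoint angle must be excluded (either as a non-generic, measure-zero case or by sharpening the inequality in the lemma to be strict under the stated bound), and the bookkeeping relating the argsort output to $\sigma$ and $\tau$ must be carried out carefully, since both the direction of $\tau$ in the Davis--Kahan statement and the argsort convention invert the naive composition.
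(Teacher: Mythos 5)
Your proposal is correct and follows essentially the same route as the paper's proof: Theorem~\ref{th:without_noise} for the clean equispaced embedding, Proposition~\ref{prop:davis_kahan_circular} to bound each point's displacement by $\sin(\pi/n)$ under the calibrated hypothesis, Lemma~\ref{lemma:variation_angle_arcos} to convert this into an angular deviation of at most $\pi/n$, and the sorting step yielding $\hat{\sigma}=\sigma^{-1}\circ\tau^{-1}$; the only bookkeeping difference is that you absorb $\sigma$ into the clean embedding and apply Davis--Kahan to $\tilde{A}$ directly, whereas the paper conjugates back to $A+\Pi^T H\Pi$ and uses $L_{\Pi^T H\Pi}=\Pi^T L_H\Pi$ to transfer the Laplacian norms. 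The boundary-tie concern you flag is real but equally present in the paper, which silently upgrades the hypothesis' $\leq$ to strict inequalities in its proof displays.
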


\begin{proof}
We have
\begin{eqnarray*}
\Pi^T\tilde{A}\Pi = A + \Pi^T H\Pi~.
\end{eqnarray*}
$L$ is the graph Laplacian associated to $A$ and $\tilde{L}$, the one associated to $\tilde{A}$. Denote by $(p_i)_{i=1,\ldots,n}$ and $(\tilde{p}_i)_{i=1,\ldots,n}$ the \kLE{2} coming from $L$ and $\tilde{L}$ respectively. $(\tilde{p}_{\sigma^{-1}(i)})_{i=1,\ldots,n}$ is the \kLE{2} coming from the graph Laplacian of $\Pi^T\tilde{A}\Pi$.

Applying Proposition \ref{prop:davis_kahan_circular} with $\Pi^T\tilde{A}\Pi$, there exists a cyclic permutation such that
\begin{eqnarray*}
\sup_{i=1,\ldots,n} ||\tilde{p}_{\sigma^{-1}(i)} - p_{\tau(i)}||_2 <   \frac{2^{3/2}\min(\sqrt{2}||L_{H^{\pi}}||_2,||L_{H^{\pi}}||_F)}{\min(|\lambda_1|,|\lambda_2-\lambda_1|)}~,
\end{eqnarray*}
with $H^{\pi} = \Pi^T H\Pi$, $\lambda_1<\lambda_2$ the first non zero eigenvalues of $A$.

Graph Laplacian involve the diagonal matrix $D_H$. In particular we have that $D_{H^{\pi}} = \Pi^T D_H \Pi$. For the unnormalized Laplacian, it results in $L_{H^{\pi}} = \Pi^T L_H \Pi$. We hence have
\begin{eqnarray*}
\sup_{i=1,\ldots,n} ||\tilde{p}_{\sigma(i)} - p_{\tau(i)}||_2 & < &  \frac{2^{3/2}\min(\sqrt{2}||L_{H}||_2,||L_{H}||_F)}{\min(|\lambda_1|,|\lambda_2-\lambda_1|)}\\
\sup_{i=1,\ldots,n} ||\tilde{p}_{i} - p_{\tau\circ\sigma^{-1}(i)}||_2 & < & \sin(\pi/n)~.
\end{eqnarray*}

From Theorem \ref{th:without_noise}, $p_i = \cos(2\pi i /n)$ for all $i$. It follows that for any $i$
\begin{eqnarray*}
||\tilde{p}_{i} - \cos(2\pi \tau\circ\sigma(i) /n)||_2 & < & \sin(\pi/n)~.
\end{eqnarray*}

Algorithm \ref{alg:circular-2d-ordering} recovers the ordering by sorting the values of 
\begin{eqnarray*}
\theta_i = \tan^{-1}(\tilde{p}_i^1/\tilde{p}_i^2) + 1 [\tilde{p}_i^1<0]\pi~,
\end{eqnarray*}
where $\tilde{p}_i = (\tilde{p}_i^1, \tilde{p}_i^2)$. Applying Lemma \ref{lemma:variation_angle_arcos}:
\begin{eqnarray*}
|\theta_i - 2\pi (\tau\circ\sigma^{-1})(i) /n| < \pi/n~~\forall i\in\{1,\ldots,n\},
\end{eqnarray*}
so that 
\begin{eqnarray}
\theta_{\sigma^{-1}\circ\tau^{-1}(1)}\leq \cdots \leq \theta_{\sigma^{-1}\circ\tau^{-1}(n)}~.
\end{eqnarray}
Finally $\Hat{\sigma} = \sigma^{-1}\circ\tau^{-1}$.
\end{proof}

\end{document}